\title{Synthesizing Dominant Strategies for Liveness (Full Version)}
\author{Bernd Finkbeiner}{CISPA Helmholtz Center for Information Security, Saarbrücken, Germany}{finkbeiner@cispa.de}{https://orcid.org/0000-0002-4280-8441}{}
\author{Noemi Passing}{CISPA Helmholtz Center for Information Security, Saarbrücken, Germany}{noemi.passing@cispa.de}{https://orcid.org/0000-0001-7781-043X}{}
\authorrunning{B. Finkbeiner and N. Passing}
\keywords{Dominant Strategies, Compositional Synthesis, Reactive Synthesis} 
\newcommand{\Next}{\LTLnext}
\newcommand{\Globally}{\LTLsquare}
\newcommand{\Eventually}{\LTLdiamond}
\newcommand{\Until}{\LTLuntil}
\newcommand{\true}{\mathit{true}}
\newcommand{\false}{\mathit{false}}
\newcommand{\ie}{i.e.\@\xspace}
\newcommand{\eg}{e.g.\@\xspace}
\newcommand{\Lang}[1]{\mathcal{L}(#1)}
\newcommand{\unprimeName}{\mathit{unpr}}
\newcommand{\unprime}[1]{\unprimeName(#1)}
\newcommand{\primeSequenceName}{\mathit{pr}}
\newcommand{\primeSequence}[1]{\primeSequenceName(#1)}
\newcommand{\arena}{\mathbb{A}}
\newcommand{\win}{W}
\newcommand{\compatible}[2]{\operatorname{cpbl}(#1,#2)}
\newcommand{\projp}[1]{f_\mathit{alt}(#1)}
\newcommand{\projq}[1]{f_\mathit{dom}(#1)}
\newcommand{\dd}{\mathit{dd}}
\newcommand{\iDDominance}{Delay-Dominance\@\xspace}
\newcommand{\iDdominance}{Delay-dominance\@\xspace}
\newcommand{\iDDominant}{Delay-Dominant\@\xspace}
\newcommand{\iddominance}{delay-dominance\@\xspace}
\newcommand{\iddominant}{delay-dominant\@\xspace}
\newcommand{\iddominates}{delay-dominates\@\xspace}
\newcommand{\iddominate}{delay-dominate\@\xspace}
\newcommand{\dom}{Duplicator\@\xspace}
\newcommand{\alt}{Spoiler\@\xspace}
\newcommand{\dominates}[2]{#2 \preceq #1}
\newcommand{\ddominatesNotation}[3]{#2 \trianglelefteq^{#3} #1}
\newcommand{\ddominatesSequence}[4]{#2 \trianglelefteq^{#4}_{#3} #1}
\newcommand{\ddominatesSequenceNegated}[4]{#2 \not\trianglelefteq^{#4}_{#3} #1}
\newcommand{\ddACA}[1]{\mathcal{B}^\mathit{A}_{\mathcal{A}_#1}}
\newcommand{\ddACAComponent}[1]{{#1}^\mathit{A}}
\newcommand{\ddUCAnonProj}[1]{\mathcal{B}^\mathit{U}_{\mathcal{A}_#1}}
\newcommand{\ddUCA}[1]{\mathcal{A}^\dd_{\mathcal{A}_#1}}
\newcommand{\aca}[1]{\mathcal{A}}
\newcommand{\pc}{\mathbin{||}}
\newcommand{\coloneq}{:=}
\newcommand{\pref}[2]{{#1_{|#2}}}
\newcommand{\projection}[2]{#2(#1)}
\newcommand{\computation}[2]{\mathit{comp}(#1,#2)}
\newcommand{\children}[1]{\operatorname{c}(#1)}
\newcommand{\successor}[3]{\vartheta(#1,#2,#3)}
\newcommand{\allVariables}{\Sigma}
\newcommand{\allInputs}{I}
\newcommand{\allOutputs}{O}
\newcommand{\inputs}[1]{{I_{#1}}}
\newcommand{\outputs}[1]{{O_{#1}}}
\newcommand{\primedOutputs}[1]{{O'_{#1}}}
\newcommand{\primedVariables}[1]{{\Sigma'_{#1}}}
\newcommand{\variables}[1]{{\Sigma_{#1}}}
\newcommand{\inpFunc}{{\mathit{inp}}}
\newcommand{\outFunc}{{\mathit{out}}}
\newcommand{\env}{\mathit{env}}
\newcommand{\sysProc}{P^-\!}
\DeclarePairedDelimiter\myVec{\langle}{\rangle}
\tikzset{
    pics/vhsplit/.style n args = {4}{
        code = {
        \node[inner sep=4pt] (B) at (0,0) {#2\vphantom{#3}};
        \node[inner sep=4pt] (C) [right = 0cm of B] {#3\vphantom{#2}};
        \node[anchor=south,inner sep=2.5pt] (A) at ($(B.north west)!0.5!(C.north east)$) {#1};
        \begin{scope}[on background layer]
        	\node[inner sep=2pt,draw,rounded corners=5pt,fit=(A)(B)(C),fill=#4] {}; 
       		\draw ([xshift=-2pt]B.north west) -- ([xshift=2pt]C.north east)
              ([yshift=-2pt]B.south east) -- (C.north west)
              (B.north east) -- (C.south west);    
        \end{scope}
        }
    }
}
\begin{document}

\maketitle

\begin{abstract}
Reactive synthesis automatically derives a strategy that satisfies a given specification. However, requiring a strategy to meet the specification in every situation is, in many cases, too hard of a requirement. Particularly in compositional synthesis of distributed systems, individual winning strategies for the processes often do not exist. Remorsefree dominance, a weaker notion than winning, accounts for such situations: dominant strategies are only required to be as good as any alternative strategy, \ie, they are allowed to violate the specification if no other strategy would have satisfied it in the same situation.
The composition of dominant strategies is only guaranteed to be dominant for safety properties, though; preventing the use of dominance in compositional synthesis for liveness specifications. Yet, safety properties are often not expressive enough. 
In this paper, we thus introduce a new winning condition for strategies, called \iddominance, that overcomes this weakness of remorsefree~dominance: we show that it is compositional for many safety and liveness specifications, enabling a compositional synthesis algorithm based on \iddominance for general specifications. Furthermore, we introduce an automaton construction for recognizing \iddominant strategies and prove its soundness and completeness.
The resulting automaton is of single-exponential size in the squared length of the specification and can immediately be used for safraless synthesis procedures. Thus, synthesis of \iddominant strategies is, as synthesis of winning strategies, in 2EXPTIME.
\end{abstract}

\section{Introduction}\label{sec:introduction}

Reactive synthesis is the task of automatically deriving a strategy that satisfies a formal specification, \eg, given in LTL~\cite{Pnueli77}, in \emph{every} situation. Such strategies are called winning.
In many cases, however, requiring the strategy to satisfy the specification in every situation is too hard of a requirement. A prominent example is the compositional synthesis of~distributed systems consisting of several processes.
Compositional approaches for distributed synthesis~\cite{KupfermanPV06,FiliotJR10,FinkbeinerGP21,FinkbeinerGP22,FinkbeinerP22} break down the synthesis task for the whole system into several smaller ones for the individual processes.
This is necessary due to the general undecidability~\cite{PnueliR90} of distributed synthesis and the non-elementary complexity~\cite{FinkbeinerS05} for decidable cases: non-compositional distributed synthesis approaches~\cite{FinkbeinerS13,FinkbeinerS07SMT} suffer from a severe state space explosion problem and are thus not feasible for larger systems.
However, winning strategies rarely exist when considering the processes individually in the smaller subtasks of compositional synthesis since usually the processes need to collaborate in order to achieve the overall system's correctness. For instance, a particular input sequence may prevent the satisfaction of the specification no matter how a single process reacts, yet, the other processes of the system ensure in the interplay of the whole system that this input sequence will never be produced.

\emph{Remorsefree dominance}~\cite{DammF11}, a weaker notion than winning, accounts for such situations. A dominant strategy is allowed to violate the specification as long as no other strategy would have satisfied it in the same situation. Hence, a dominant strategy is a best-effort strategy as we do not blame it for violating the specification if the violation is not its fault.
Searching for dominant strategies rather than winning ones allows us to find strategies that do not necessarily satisfy the specification in all situations but in all that are \emph{realistic} in the sense that they occur in the interplay of the processes if all of them play best-effort strategies.

The parallel composition of dominant strategies, however, is only guaranteed to be~dominant for safety properties~\cite{DammF14}. For liveness specifications, in contrast, dominance is not~a compositional notion and thus not suitable for compositional synthesis. Consider, for example, a system with two processes $p_1$ and $p_2$ sending messages to each other, denoted by atomic propositions $m_1$ and $m_2$, respectively. Both processes are required to send their message eventually, \ie, $\varphi = \Eventually m_1 \land \Eventually m_2$. For~$p_i$, it is dominant to wait for the other process to send the message $m_{3-i}$ before sending its own message $m_i$: if $p_{3-i}$ sends its message eventually, $p_i$ does so as well, satisfying $\varphi$. If $p_{3-i}$ never sends its message, $\varphi$ is violated, no matter how~$p_i$ reacts, and thus the violation of $\varphi$ is not $p_i$'s fault. Combining these strategies for $p_1$ and $p_2$, however, yields a system that never sends any message since both processes wait indefinitely for each other, while there clearly exist strategies for the whole system that satisfy $\varphi$.

\emph{Bounded dominance}~\cite{DammF14} is a variant of remorsefree dominance that ensures compositionality of general properties. Intuitively, it reduces every specification $\varphi$ to a safety property by introducing a measure of the strategy's progress with respect to $\varphi$, and by bounding the number of non-progress steps, \ie, steps in which no progress is made.
Yet, bounded dominance has two major disadvantages: (i) it requires a concrete bound on the number of non-progress steps, and~(ii)~not every bounded dominant strategy is dominant: if the~bound~$n$ is chosen too small, every strategy, also a non-dominant one, is trivially $n$-dominant.

In this paper, we introduce a new winning condition for strategies, called \emph{\iddominance}, that builds upon the ideas of bounded dominance but circumvents the aforementioned weaknesses. Similar to bounded dominance, it introduces a progress measure on strategies. However, it does not require a concrete bound on the number of non-progress steps but relates such steps in the potentially \iddominant strategy~$s$ to non-progress steps in an alternative strategy~$t$: intuitively, $s$ \iddominates $t$ if, whenever $s$ makes a non-progress step, $t$ makes a non-progress step \emph{eventually} as well.
A strategy $s$ is then \iddominant if it \iddominates every other strategy $t$.
In this way, we ensure that a \iddominant strategy satisfies the specification \enquote{faster} than all other strategies in all situations in which the specification can be satisfied.
\iDdominance considers specifications given as alternating co-Büchi automata. Non-progress steps with respect to the automaton are those that enforce a visit of a rejecting state in all run trees. We introduce a two-player game, the so-called \emph{\iddominance game}, 
which is vaguely leaned on the delayed simulation game for alternating Büchi automata~\cite{FritzW05}, 
to formally define \iddominance: the winner of the game determines whether or not a strategy $s$ \iddominates a strategy $t$ on a given input sequence.

\enlargethispage{1.5\baselineskip}
We~(i)~show that every \iddominant strategy is also remorsefree dominant, and~(ii)~introduce a criterion for automata such that, if the criterion is satisfied, compositionality of \iddominance is guaranteed. The criterion is satisfied for many automata; both ones describing safety properties and ones describing liveness properties. Thus, \iddominance overcomes the weaknesses of both remorsefree and bounded dominance.
Note that since \iddominance relies, as bounded dominance, on the automaton structure, there are realizable specifications for which no \iddominant strategy exists. Yet, we experienced that this rarely occurs in practice when constructing the automaton from an LTL formula with standard algorithms.
Moreover, if a \iddominant strategy exists, it is guaranteed to be winning if the specification is realizable.
Hence, the parallel composition of \iddominant strategies for all processes in a distributed system is winning for the whole system as long as the specification is realizable and as long as the compositionality criterion is satisfied.
Therefore, \iddominance is a suitable notion for compositional synthesis.

We thus introduce a synthesis approach for \iddominant strategies that immediately enables a compositional synthesis algorithm for distributed systems, namely synthesizing \iddominant strategies for the processes separately.
We present the construction of a universal co-Büchi automaton $\ddUCA{\varphi}$ from an LTL formula $\varphi$ that recognizes \iddominant strategies.
$\ddUCA{\varphi}$ can immediately be used for safraless synthesis~\cite{KupfermanV05} approaches such as bounded synthesis~\cite{FinkbeinerS13} to synthesize \iddominant strategies.
We show that the size of $\ddUCA{\varphi}$ is single-exponential in the squared length of~$\varphi$. Thus, synthesis of \iddominant strategies is, similar to synthesis of winning or remorsefree dominant strategies, in 2EXPTIME.

\subparagraph*{Related Work.}

\emph{Remorsefree dominance} has first been introduced for reactive synthesis in~\cite{DammF11}. Dominant strategies have been utilized for compositional synthesis of safety properties~\cite{DammF14}. Building up on this work, a compositional synthesis algorithm, that finds solutions in more cases by incrementally synthesizing individual dominant strategies, has been developed~\cite{FinkbeinerP20}.
Both algorithms suffer from the non-compositionality of dominant strategies for liveness properties. 
\emph{Bounded dominance}~\cite{DammF14}, a variant of dominance that introduces a bound on the number of steps in which a strategy does not make progress with respect to the specification, solves this problem. However, it requires a concrete bound on the number of non-progress steps. Moreover, a bounded dominant strategy is not necessarily dominant.

\emph{Good-enough synthesis}~\cite{AlmagorK20,LiTVZ21} follows a similar idea as dominance. It is thus not compositional for liveness properties either. In good-enough synthesis, conjuncts of the specification can be marked as \emph{strong}. If the specification is unrealizable, a good-enough strategy needs to satisfy the strong conjuncts while it may violate the other ones.
Thus, dominance can be seen as the special case of good-enough synthesis in which no conjuncts are marked as strong.
Good-enough synthesis can be extended to a multi-valued correctness notion~\cite{AlmagorK20}.

\emph{Synthesis under environment assumptions} is a well-studied problem that also aims at relaxing the requirements on a strategy. There, explicit assumptions on the environment are added to the specification.
These assumptions can be LTL formulas restricting the possible input sequences (see, \eg, \cite{ChatterjeeHJ08,BloemEJK14}) or environment strategies~(see, \eg, \cite{AminofGMR18,AminofGR21,FinkbeinerP21,FinkbeinerP22}). The assumptions can also be conceptual such as assuming that the environment is rational (see, \eg, \cite{FismanKL10,KupfermanPV14,BrenguierRS15, ConduracheFGR16}). Synthesis under environment assumptions is orthogonal to the synthesis of dominant strategies and good-enough synthesis since it requires an explicit assumption on the environment, while the latter two approaches rely on implicit assumptions.


\section{Preliminaries}\label{sec:preliminaries}

\subparagraph*{Notation.}
Given an infinite word $\sigma = \sigma_0 \sigma_1 \ldots \in (2^\allVariables)^\omega$, we denote the prefix of length $t+1$ of~$\sigma$ with $\pref{\sigma}{t} := \sigma_{0} \ldots \sigma_{t}$. For $\sigma$ and a set $X\subseteq\allVariables$, let $\sigma \cap X := (\sigma_0 \cap X)(\sigma_1 \cap X)\ldots\in (2^X)^\omega\!$.
For~$\sigma \in (2^\allVariables)^\omega$, $\sigma' \in (2^{\allVariables'})^\omega$ with $\allVariables \cap \allVariables' = \emptyset$, we define $\sigma \cup \sigma' := (\sigma_0 \cup \sigma'_0) (\sigma_1 \cup \sigma'_1) \ldots \in (2^{\Sigma \cup \Sigma'})^\omega\!$.
For a $k$-tuple $a$, we denote the $j$-th component of $a$ with $\projection{a}{j}$.
We represent a Boolean~formula $\bigvee_{i} \bigwedge_{j} c_{i,j}$ in disjunctive normal form~(DNF) also in its set notation $\bigcup_{i} \{ \bigcup_{j} \{c_{i,j}\} \}$.

\enlargethispage{1.5\baselineskip}
\subparagraph*{LTL.}
Linear-time temporal logic~(LTL)~\cite{Pnueli77} is a standard specification language for linear-time properties. 
Let $\Sigma$ be a finite set of atomic propositions and let $a \in \Sigma$. The syntax of LTL is given by
$ \varphi, \psi ::= a ~ | ~ \neg \varphi ~ | ~ \varphi \lor \psi ~ | ~ \varphi \land \psi ~ | ~ \Next \varphi ~ | ~ \varphi \Until \psi$.
We define $\true = a \lor \neg a$, $\false = \neg \true$, $\Eventually \varphi = \true \Until \varphi$, and $\Globally \varphi = \neg \Eventually \neg \varphi$ as usual. We use the standard semantics.
The language~$\Lang{\varphi}$ of an LTL formula $\varphi$ is the set of infinite words that satisfy $\varphi$.

\subparagraph*{Non-Alternating \texorpdfstring{\boldmath $\omega$}{omega}-Automata.}
Given a finite alphabet~$\Sigma$, a Büchi (resp.\ co-Büchi) automaton $\mathcal{A} = (Q,Q_0,\delta,F)$ over $\Sigma$ consists of a finite set of states $Q$, an initial state $q_0 \in Q$, a transition relation $\delta: Q \times 2^\Sigma \times Q$, and a set of accepting (resp.\ rejecting) states $F \subseteq Q$.
For an infinite word $\sigma = \sigma_0\sigma_1 \ldots \in (2^\Sigma)^\omega\!$, a run of $\mathcal{A}$ induced by $\sigma$ is an infinite sequence $q_0 q_1 \ldots \in Q^\omega\!$ of states with $(q_i,\sigma_i,q_{i+1}) \in \delta$ for all $i \geq 0$. 
A run is accepting if it contains infinitely many accepting states (resp.\ only finitely many rejecting states).
A nondeterministic (resp.\ universal) automaton~$\mathcal{A}$ accepts a word $\sigma$ if some run is accepting (resp.\ all runs are accepting).
The language $\Lang{\mathcal{A}}$ of $\mathcal{A}$ is the set of all accepted words. We consider nondeterministic Büchi automata (NBAs) and universal co-Büchi automata (UCAs).

\subparagraph*{Alternating \texorpdfstring{\boldmath $\omega$}{omega}-Automata.}
An alternating Büchi (resp.\ co-Büchi) automaton (ABA resp.\ ACA) $\mathcal{A} = (Q,q_0,\delta,F)$ over a finite alphabet $\Sigma$ consists of a finite set of states $Q$, an initial state $q_0 \subseteq Q$, a transition function $\delta: Q \times 2^\Sigma \rightarrow \mathbb{B}^+(Q)$, where $\mathbb{B}^+(Q)$ is the set of positive Boolean formulas over~$Q$, and a set of accepting (resp.\ rejecting) states $F \subseteq Q$.
We assume that the elements of $\mathbb{B}^+(Q)$ are given in DNF.
Runs of $\mathcal{A}$ are $Q$-labeled trees: a tree $\mathcal{T}$ is a prefix-closed subset of~$\mathbb{N}^*\!$. 
The children of a node $x\in\mathcal{T}$ are $\children{x} = \{ x \cdot d \in \mathcal{T} \mid d \in \mathbb{N} \}$. 
An $X$-labeled tree $(\mathcal{T},\ell)$ consists of a tree $\mathcal{T}$ and a labeling function $\ell: \mathcal{T} \rightarrow X$. 
A branch of $(\mathcal{T},\ell)$ is a maximal sequence $\ell(x_0) \ell(x_1) \ldots$ with~$x_0 = \varepsilon$ and $x_{i+1} \in \children{x_i}$ for $i \geq 0$.
A run tree of $\mathcal{A}$ induced by $\sigma \in (2^\Sigma)^\omega$ is a $Q$-labeled tree $(\mathcal{T},\ell)$ with $\ell(\varepsilon) = q_0$ and, for all $x \in \mathcal{T}$, $\{\ell(x') \mid x' \in \children{x}\} \in \delta(\ell(x),\sigma_{|x|})$.
A run tree is accepting if every infinite branch contains infinitely many accepting states (resp.\ only finitely many rejecting states).~$\mathcal{A}$ accepts~$\sigma$ if there is some accepting run tree.
The language $\Lang{\mathcal{A}}$ of $\mathcal{A}$ is the set of all accepted words.

\subparagraph*{Two-Player Games.} An arena is a tuple $\arena = (V, V_0, V_1, v_0, E)$, where $V$, $V_0$, $V_1$ are finite sets of positions with $V = V_0 \cup V_1$ and $V_0 \cap V_1 = \emptyset$, $v_0 \in V$ is the initial position, $E \subseteq V \times V$ is a set of edges such that $\forall v \in V.~ \exists v' \in V.~ (v,v') \in E$. Player~$i$ controls positions in $V_i$.
A game $\mathcal{G} = (\arena, \win)$ consists of an arena $\arena$ and a winning condition $\win \subseteq V^\omega\!$. 
A play is an infinite sequence $\rho \in V^\omega\!$ such that $(\rho_i, \rho_{i+1}) \in E$ for all $i \in \mathbb{N}$. The player owning a position chooses the edge on which the play is continued. 
A~play~$\rho$ is initial if $\rho_0 = v_0$ holds.
It is winning for Player 0 if $\rho \in \win$ and winning for Player 1 otherwise.
A strategy for Player~$i$ is a function $\tau: V^*V_i \rightarrow V$ such that $(v,v') \in E$ whenever $\tau(w,v) = v'$ for some $w \in V^*\!$, $v \in V_i$. 
A play~$\rho$ is consistent with a strategy $\tau$ if, for all $j \in \mathbb{N}$, $\rho_j \in V_i$ implies $\rho_{j+1} = \tau(\pref{\rho}{j})$. 
A strategy for Player $i$ is winning if all initial and consistent plays are winning for Player~$i$.

\subparagraph*{System Architectures.}
An architecture is a tuple $A=(P, \allVariables, \inpFunc, \outFunc)$, where $P$ is a set of processes consisting of the environment~$\env$ and a set $\sysProc = P \setminus\{env\}$ of $n$ system processes,~$\allVariables$ is a set of Boolean variables, $\inpFunc = \myVec{I_1, \dots, I_n}$ assigns a set $\inputs{j} \subseteq \allVariables$ of input variables to each $p_j \in \sysProc$, and $\outFunc = \myVec{O_\env, O_1, \dots O_n}$ assigns a set $\outputs{j} \subseteq \Sigma$ of output variables to each $p_j \in P$.
For all $p_j, p_k \in \sysProc$ with $j \neq k$, $\inputs{j} \cap \outputs{j} = \emptyset$ and $\outputs{j} \cap \outputs{k} = \emptyset$ hold.
The variables~$\variables{j}$ of $p_j \in \sysProc$ are given by $\variables{j} = \inputs{j} \cup \outputs{j}$.
The inputs~$\allInputs$, outputs~$\allOutputs$, and variables~$\allVariables$ of the whole system are defined by $X = \bigcup_{p_j \in \sysProc} X_j$ for $X \in \{I, O, \allVariables\}$.
$A$ is called distributed if $|\sysProc| \geq 2$.
In the remainder of this paper, we assume that a distributed architecture is given.

\subparagraph*{Process Strategies.}
A strategy for process $p_i$ is a function $s_i: (2^\inputs{i})^* \rightarrow 2^\outputs{i}$ mapping a history of inputs to outputs. 
We model~$s_i$ as a Moore machine~$\mathcal{M}_i = (T, t_0, \tau, o)$ consisting of a finite set of states $T$, an initial state $t_0 \in T$, a transition function $\tau: T \times 2^\inputs{i} \rightarrow T$, and a labeling function $o: T \rightarrow 2^\outputs{i}$.
For a sequence $\gamma = \gamma_0 \gamma_1 \dotsc \in (2^{\inputs{i}})^\omega\!$, $\mathcal{M}_i$ produces a path $(t_0 , \gamma_0 \cup o(t_0)) (t_1 , \gamma_1 \cup o(t_1)) \dotsc \in (T \times 2^{\inputs{i}\cup\outputs{i}})^\omega\!$, where $\tau(t_j, \gamma_j) = t_{j+1}$. 
The projection of a path to the variables is called a trace.
The trace produced by $\mathcal{M}_i$ on $\gamma \in (2^\inputs{i})^\omega$ is called the computation of~$s_i$ on $\gamma$, denoted $\computation{s_i}{\gamma}$. 
We say that $s_i$ is winning for an LTL formula~$\varphi$, denoted $s_i \models \varphi$, if $\computation{s_i}{\gamma} \models \varphi$ holds for all input sequences $\gamma \in (2^{\inputs{i}})^\omega\!$.
Overloading notation with two-player games, we call a process strategy simply a strategy whenever the context is clear.
The parallel composition $\mathcal{M}_i \pc \mathcal{M}_j$ of two Moore machines $\mathcal{M}_i = (T_i,t^i_0,\tau_i,o_i)$, $\mathcal{M}_j = (T_j,t^j_0,\tau_j,o_j)$ for $p_i, p_j\in\sysProc$ is the Moore machine $(T,t_0,\tau,o)$ with inputs $(I_i \cup I_j)\setminus(O_i\cup O_j)$ and outputs $O_i \cup O_j$ as well as $T = T_i \times T_j$, $t_0=(t^i_0,t^j_0)$, 
$\tau((t,t'),\iota) = (\tau_i(t,(\iota \cup o_j(t')) \cap \inputs{i}),\tau_j(t',(\iota \cup o_i(t))\cap\inputs{j}))$, and 
$o((t,t')) = o_i(t) \cup o_j(t')$.

\subparagraph*{Synthesis.}
Given a specification $\varphi$, synthesis derives strategies $s_1, \dots, s_n$ for the system processes such that $s_1 \pc \dots \pc s_n \models \varphi$, \ie, such that the parallel composition of the strategies satisfies~$\varphi$ for all input sequences generated by the environment. 
If such strategies exist,~$\varphi$ is called realizable. Bounded synthesis~\cite{FinkbeinerS13} additionally bounds the size of the strategies. 
The search for strategies is encoded into a constraint system that is satisfiable if, and only if, $\varphi$ is realizable for the size bound.
There are~SMT,~SAT, QBF, and DQBF encodings~\cite{FinkbeinerS13,FaymonvilleFRT17,Baumeister17}.
We consider a compositional synthesis approach that synthesizes strategies for the processes separately. Thus, outputs produced by the other system processes are treated similar to the environment outputs, namely as part of the input sequence of the considered process. Nevertheless, compositional synthesis derives strategies such that $s_1 \pc \dots \pc s_n \models \varphi$ holds. {}


\section{Dominant Strategies and Liveness Properties}\label{sec:dominance}

Given a  specification $\varphi$, the na\"ive compositional synthesis approach is to synthesize strategies $s_1, \dots, s_n$ for the system processes such that $s_i \models \varphi$ holds for all $p_i \in \sysProc$. Then, it follows immediately that $s_1 \pc \dots \pc s_n \models \varphi$ holds as well. However, since winning strategies are required to satisfy $\varphi$ for every input sequence, usually no such individual winning strategies exist due to complex interconnections in the system. Therefore, the na\"ive approach fails in many cases.
The notion of \emph{remorsefree dominance}~\cite{DammF11}, in contrast, has been successfully used in compositional synthesis~\cite{DammF14,FinkbeinerP20}. The main idea is to synthesize \emph{dominant} strategies for the system processes separately instead of winning ones.
Dominant strategies are, in contrast to winning strategies, allowed to violate the specification for some input sequence if no other strategy would have satisfied it in the same situation. Thus, remorsefree dominance is a weaker requirement than winning and therefore individual dominant strategies exist for more systems. Formally, remorsefree dominant strategies are defined as follows:

\begin{definition}[Dominant Strategy \cite{DammF14}]
Let $\varphi$ be an LTL formula. Let $s$ and $t$ be strategies for process~$p_i$. Then, $t$ \emph{is dominated by} $s$, denoted $\dominates{s}{t}$, if for all input sequences $\gamma \in (2^\inputs{i})^\omega\!$ either $\computation{s}{\gamma} \models \varphi$ or $\computation{t}{\gamma}\not\models\varphi$ holds.
Strategy $s$ is called \emph{dominant} for $\varphi$ if $\dominates{s}{t}$ holds for all strategies $t$ for process $p_i$.
\end{definition}

\enlargethispage{-1\baselineskip}
Intuitively, a strategy $s$ dominates a strategy $t$ if it is \emph{at least as good} as~$t$. It is dominant for $\varphi$ if it is at least as good as \emph{every other possible strategy} and thus if it is \emph{as good as possible}.
As an example, reconsider the message sending system. Let $s_i$ be a strategy for process $p_i$ that outputs~$m_i$ in the very first step. It satisfies $\varphi = \Eventually m_1 \land \Eventually m_2$ on all input sequences containing at least one~$m_{3-i}$. On all other input sequences, it violates~$\varphi$. Let~$t_i$ be some alternative strategy.
Since no strategy for~$p_i$ can influence~$m_{3-i}$,~$t_i$ satisfies $\varphi$ only on input sequences containing at least one $m_{3-i}$. Yet, $s_i$ satisfies~$\varphi$ for such sequences as well. Hence, $s_i$ dominates $t_i$ and since we chose $t_i$ arbitrarily, $s_i$ is dominant for $\varphi$.

Synthesizing dominant strategies rather than winning ones allows us to synthesize strategies for the processes of a distributed system compositionally, although no winning strategies for the individual processes exist.
Dominant strategies for the individual processes can then be recomposed to obtain a strategy for the whole system. 
For safety specifications, the composed strategy is guaranteed to be dominant for the specification as well:

\begin{theorem}[Compositionality of Dominance for Safety Properties~\cite{DammF14}]\label{thm:compositionality_safety}
Let $\varphi$ be an LTL formula. Let $s_1$ and $s_2$ be dominant strategies for processes $p_1$ and $p_2$, respectively, as well as for $\varphi$. If $\varphi$ is a safety property, then $s_1 \pc s_2$ is dominant for $p_1 \pc p_2$ and $\varphi$.
\end{theorem}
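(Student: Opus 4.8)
The plan is to argue by contradiction: assume $s_1 \pc s_2$ is \emph{not} dominant for $p_1 \pc p_2$ and $\varphi$, and derive a contradiction with the dominance of $s_1$ (or, symmetrically, $s_2$). If $s_1 \pc s_2$ is not dominant, there is an alternative strategy $t$ for the composed process $p_1 \pc p_2$ and an input sequence $\gamma \in (2^{\allInputs})^\omega$ for the whole system such that $\computation{(s_1 \pc s_2)}{\gamma} \not\models \varphi$ while $\computation{t}{\gamma} \models \varphi$. The first step is to use $\gamma$ together with the outputs that $s_2$ produces along $\computation{(s_1 \pc s_2)}{\gamma}$ to manufacture an input sequence $\gamma_1 \in (2^{\inputs{1}})^\omega$ for process $p_1$ in isolation: concretely, $\gamma_1$ records, at each step $j$, the environment inputs relevant to $p_1$ together with the $\outputs{2}$-part of $p_2$'s state as given by the composition. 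By the definition of $\mathcal{M}_1 \pc \mathcal{M}_2$, the computation $\computation{s_1}{\gamma_1}$ coincides (on the variables of $p_1$) with the projection of $\computation{(s_1 \pc s_2)}{\gamma}$.

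The second step is to build a competing strategy $t'$ for $p_1$ that mimics $t$. This is the delicate point, because $t$ is a strategy for the joint process and may have its $p_1$-outputs depend on things $p_1$ alone cannot see — but along the \emph{specific} input sequence $\gamma_1$ we only need $t'$ to reproduce $t$'s behavior on that one sequence (and may let $t'$ behave arbitrarily elsewhere, since dominance quantifies over all input sequences but we only need the inequality to fail on one). So we define $t'$ so that $\computation{t'}{\gamma_1}$ projected to $p_1$'s variables equals the $p_1$-projection of $\computation{t \pc s_2}{\gamma'}$ for the appropriate input sequence $\gamma'$ — here one must be careful about how changing $p_1$'s outputs feeds back into $p_2$'s inputs and hence changes the inputs $p_1$ sees. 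The standard way around this (and the reason the theorem needs $\varphi$ to be a \emph{safety} property) is: take $t$ to differ from $s_1 \pc s_2$ only in its $p_1$-component (replacing the $p_2$-part of any joint alternative by $s_2$ is harmless for the argument, by symmetry we handle $p_2$'s side analogously), run $t' \pc s_2$ on $\gamma$, and compare $\computation{(t' \pc s_2)}{\gamma}$ against $\computation{(s_1 \pc s_2)}{\gamma}$.

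The third step is where safety is essential. The two composed traces $\computation{(s_1 \pc s_2)}{\gamma}$ and $\computation{(t' \pc s_2)}{\gamma}$ agree on some (possibly empty) prefix and then may diverge once $t'$ and $s_1$ first disagree. If $\computation{(s_1 \pc s_2)}{\gamma} \not\models \varphi$ and $\varphi$ is safety, then there is a \emph{finite} bad prefix of that trace witnessing the violation. One then shows that $s_1$ fails to dominate the strategy $t'$: there is an input sequence $\gamma_1'$ for $p_1$ on which $\computation{s_1}{\gamma_1'}$ carries this bad prefix (hence violates $\varphi$) while $\computation{t'}{\gamma_1'}$, following $t$'s winning behavior, satisfies $\varphi$. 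This contradicts dominance of $s_1$. The main obstacle — and the crux of the whole argument — is exactly the feedback coupling between the processes: changing $p_1$'s outputs changes $p_2$'s inputs, which changes $p_2$'s outputs, which changes $p_1$'s inputs; safety lets us cut this Gordian knot by localizing the violation to a finite prefix, up to which the relevant behaviors can be pinned down, whereas for liveness the violation is a global property of the infinite trace and no such localization is possible (which is precisely why the message-sending example fails). I would also need to state carefully the symmetric bookkeeping that lets us assume the alternative strategy $t$ deviates from $s_1 \pc s_2$ in only one component at a time, and to check the base case where the traces never diverge (then $\computation{(t' \pc s_2)}{\gamma} = \computation{(s_1 \pc s_2)}{\gamma} \not\models\varphi$, and we instead push the argument onto $p_2$).
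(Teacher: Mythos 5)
Your overall shape is right -- contradiction, finite bad prefix from safety, blaming one process, and you correctly identify the feedback coupling as the crux -- but there are two genuine gaps that the argument as proposed would stumble on. (For reference: the paper does not prove this theorem itself, it cites~\cite{DammF14}; but its own proof of \Cref{thm:compositonality_ddom} in \Cref{app:ddominance} follows exactly the template you are reaching for, and comparing against it makes the gaps visible.)

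First, your reduction ``take $t$ to differ from $s_1 \pc s_2$ only in its $p_1$-component, by symmetry handle $p_2$ analogously'' is not a legitimate WLOG. The alternative $t$ witnessing non-dominance of the composition is an arbitrary strategy of the joint process; replacing its $p_2$-part by $s_2$ (or its $p_1$-part by $s_1$) can destroy the property $\computation{t}{\gamma}\models\varphi$, and a joint strategy need not even decompose as $t_1 \pc t_2$ since $p_1$ alone may not see all of the joint inputs. The correct way to decide \emph{which} process to blame is not a WLOG on $t$ but a case distinction at the last position $\delta$ of the \emph{shortest} bad prefix $\eta\cdot\delta$ of $\computation{s_1 \pc s_2}{\gamma}$: either some infinite extension of $\eta$ that agrees with $\delta$ on everything except $\outputs{1}$ (and with $\gamma$ on the inputs) satisfies $\varphi$ -- then $p_1$ is at fault -- or no such extension exists -- then $p_2$ is at fault. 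Minimality of the bad prefix is essential here (it guarantees $\eta$ itself still admits a good continuation, so the blame really sits at position $|\eta|$), and your proposal never invokes it; your substitute dichotomy (``do the traces ever diverge'') does not isolate the responsible process.

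Second, and relatedly, you need \emph{one} input sequence $\gamma_1'$ for $p_1$ on which $s_1$ violates $\varphi$ \emph{and} the mimicking strategy $t_1'$ satisfies it, but you never pin down which one. The right choice is $\gamma \cup \gamma^{t'}$ where $\gamma^{t'}$ is the $\outputs{2}$-part of the good extension chosen in the case distinction: $t_1'$ satisfies $\varphi$ there by construction, and $s_1$ still runs into the bad prefix there because the case distinction forces $\gamma^{t'}$ and the $\outputs{2}$-part of $\computation{s_1\pc s_2}{\gamma}$ to agree through position $|\eta|$, so $s_1$ (being a Moore machine that cannot see the future) emits the same outputs through that position, and every extension of a bad prefix of a safety property violates $\varphi$. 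If you instead feed $t_1'$ the inputs $\gamma\cup\gamma^{s_2}$ arising from the original composition, there is no reason its computation satisfies $\varphi$; if you feed $s_1$ the inputs arising from $t$'s run without the agreement argument, there is no reason its computation still carries the bad prefix. Supplying the minimal-bad-prefix case split and this input-agreement step would close the proof; as written, the chain from ``$s_1\pc s_2$ fails'' to ``$s_1$ (or $s_2$) is not dominant'' does not yet go through.
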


Compositionality is a crucial property for compositional synthesis: it allows for concluding that the parallel composition of the separately synthesized process strategies is indeed a useful strategy for the whole system. Thus, \Cref{thm:compositionality_safety} enables compositional synthesis~with dominant strategies for safety properties.
For liveness properties, however, the parallel~composition of two dominant strategies is not necessarily dominant: consider strategy $t_i$ for~$p_i$ in the message sending system that waits for $m_{3-i}$ before sending its own message. This strategy is dominant for $\varphi$: for input sequences in which $m_{3-i}$ occurs eventually,~$t_i$ sends $m_i$ in the next step, satisfying $\varphi$. For all other input sequences, no strategy for~$p_i$ can satisfy~$\varphi$. Yet, the parallel composition of $t_1$ and $t_2$ does not send any message; violating~$\varphi$, while there exist strategies that satisfy $\varphi$, \eg, a strategy sending both $m_1$ and $m_2$ in the first step.

\emph{Bounded dominance}~\cite{DammF14} is a variant of dominance that is compositional for both safety and liveness properties. Intuitively, it reduces the specification $\varphi$ to a safety property by introducing a bound on the number of steps in which the strategy \emph{does not make progress} with respect to $\varphi$. The progress measure is defined on an equivalent UCA $\mathcal{A}$ for $\varphi$. The measure $m_\mathcal{A}$ of a process strategy $s$ on an input sequence $\gamma \in (2^\inputs{i})^\omega$ is then the supremum of the number of rejecting states of the runs of $\mathcal{A}$ induced by $\computation{s}{\gamma}$. Thus, a strategy~$s$ $n$-dominates a strategy $t$ for $\mathcal{A}$ and $n\in\mathbb{N}$ if for every $\gamma \in (2^\inputs{i})^\omega$, either $m_\mathcal{A}(\computation{s}{\gamma}) \leq n$ or $m_\mathcal{A}(\computation{t}{\gamma}) > n$ holds.
If~$\mathcal{A}$ is a safety automaton, then remorsefree dominance and bounded dominance coincide.
For liveness specifications, however, they differ.

Yet, bounded dominance does not imply dominance: there are specifications~$\varphi$ with a minimal measure $m$, \ie, all strategies have a measure of at least $m$~\cite{DammF14}.
When choosing a bound $n<m$, every strategy is trivially $n$-dominant for~$\varphi$, even non-dominant ones.
Hence, the choice of the bound is crucial for bounded dominance. It is not obvious how to determine a \emph{good} bound, though:
it needs to be large enough to avoid non-dominant strategies.
As the bound has a huge impact on the synthesis time, however, it cannot be chosen too large as otherwise synthesis becomes infeasible.
Especially for specifications with several complex dependencies between processes, it is hard to determine a proper bound.
Therefore, bounded dominance is not a suitable notion for compositional synthesis for liveness properties.
In the remainder of this paper, we introduce a different variant of dominance that implies remorsefree dominance and that ensures compositionality also for many liveness properties. {}


\section{\texorpdfstring{\iDDominance}{Delay-Dominance}}\label{sec:ddominance}

In this section, we introduce a new winning condition for strategies, \emph{\iddominance}, which resembles remorsefree dominance but ensures compositionality also for many liveness properties.
It builds on the idea of bounded dominance to not only consider the satisfaction of the LTL formula~$\varphi$ but to measure progress based on an automaton representation of~$\varphi$. Similar to bounded dominance, we utilize visits of rejecting states in a co-Büchi automaton. 
Yet, we use an \emph{alternating} automaton instead of a universal one. Note that \iddominance can be equivalently formulated on UCAs, yet, using ACAs allows for more efficient synthesis of \iddominant strategies (see \Cref{sec:automaton_construction}).
Moreover, we do not require a fixed bound on the number of visits to rejecting states; rather, we relate visits of rejecting states induced by the \iddominant strategy to visits of rejecting states induced by the alternative strategy. {}

Intuitively, \iddominance requires that every visit to a rejecting state in the ACA~$\mathcal{A}$ caused by the \iddominant strategy \emph{is matched} by a visit to a rejecting state caused by the alternative strategy \emph{eventually}.
The rejecting states of the ACA $\mathcal{A}$ are closely related to the satisfaction of the LTL specification $\varphi$: if infinitely many rejecting states are visited, then $\varphi$ is not satisfied. Thus, \iddominance allows a strategy to violate the specification if all alternative strategies violate it as well.
Defining \iddominance on the rejecting states of $\mathcal{A}$ instead of the satisfaction of $\varphi$ allows for measuring the progress on satisfying the specification. Thus, we can distinguish strategies that wait indefinitely for another process from those that do not: intuitively, a strategy $s$ that waits will visit a rejecting state later than a strategy $t$ that does not. This visit to a rejecting state is then not matched eventually by a visit to a rejecting state in $t$, preventing \iddominance of $s$.

Formally, we present a game-based definition for \iddominance: we introduce a two-player game, the so-called \emph{\iddominance game}, which is inspired by the delayed simulation game for alternating Büchi automata~\cite{FritzW05}.
Given an ACA $\aca{\varphi} = (Q, q_0, \delta, F)$, two strategies~$s$ and $t$ for some process $p_i$, and an input sequence $\gamma \in (2^{\inputs{i}})^\omega$, the \iddominance game determines whether $s$ \iddominates $t$ for $\aca{\varphi}$ on input $\gamma$.
Intuitively, the game proceeds in rounds. At the beginning of each round, a pair $(p,q)$ of states $p, q \in Q$ and the number of the iteration $j \in \mathbb{N}$ is given, where $p$ represents a state that is visited by a run of~$\aca{\varphi}$ induced by $\computation{t}{\gamma}$, while $q$ represents a state that is visited by a run of~$\aca{\varphi}$ induced by $\computation{s}{\gamma}$.
We call $p$ the \emph{alternative state} and $q$ the \emph{dominant state}.
Let $\sigma^s := \computation{s}{\gamma}$ and $\sigma^t := \computation{t}{\gamma}$. The players \dom and \alt, where \dom takes on the role of Player 0, play as follows: 
\textcolor{lipicsGray}{\sffamily\bfseries\upshape\mathversion{bold}{1.}} \alt chooses a set $c \in \delta(p,\sigma^t_j)$.
\textcolor{lipicsGray}{\sffamily\bfseries\upshape\mathversion{bold}{2.}} \dom chooses a set $c' \in \delta(q,\sigma^s_j)$.
\textcolor{lipicsGray}{\sffamily\bfseries\upshape\mathversion{bold}{3.}} \alt chooses a state $q' \in c'$.
\textcolor{lipicsGray}{\sffamily\bfseries\upshape\mathversion{bold}{4.}} \dom chooses a state $p' \in c$.
The starting pair of the next round is then $((p',q'),j+1)$. Starting from $((q_0,q_0),0)$, the players construct an infinite play which determines the winner. 
\dom wins for a play if every rejecting dominant state is matched by a rejecting alternative state eventually.

Both the \iddominant strategy $s$ and the alternative strategy $t$ may control the nondeterministic transitions of $\aca{\varphi}$, while the universal ones are uncontrollable. Since, intuitively, strategy~$t$ is controlled by an opponent when proving that~$s$ \iddominates~$t$, we thus have a change in control for $t$:
for~$s$, \dom controls the existential transitions of $\aca{\varphi}$ and \alt controls the universal ones. For~$t$, \dom controls the universal transitions and \alt controls the existential ones.
Note that the order in which \alt and \dom make their moves is crucial to ensure that \dom wins the game when considering the very same process strategies. By letting \alt move first, \dom is able to mimic -- or \emph{duplicate} -- \alt's moves.
Formally, the \iddominance game is defined as follows:

\enlargethispage{1.5\baselineskip}
\begin{definition}[\iDDominance game]
Let $\aca{\varphi}\! = \!(Q, q_0, \delta, F)$ be an ACA. Based on $\mathcal{A}$, we define the sets $S_\exists = (Q \times Q) \times \mathbb{N}$, $D_\exists = (Q \times Q \times 2^Q) \times \mathbb{N}$, $S_\forall = (Q \times Q \times 2^Q \times 2^Q) \times \mathbb{N}$, and $D_\forall = (Q \times Q \times Q \times 2^Q) \times \mathbb{N}$.
Let ${\sigma, \sigma' \in (2^\variables{i})^\omega\!}$ be infinite sequences. Then, the \emph{\iddominance game $(\aca{\varphi}, \sigma, \sigma')$} is the game $\mathcal{G} = (\arena, \win)$ defined by $\arena = (V,V_0,V_1,v_0,E)$ with $V = S_\exists \cup D_\exists \cup S_\forall \cup D_\forall$, $V_0 = D_\exists \cup D_\forall$, and $V_1 = S_\exists \cup S_\forall$ as well as
\begin{align*}
E &= \{ (((p,q),j),((p,q,c),j) \mid c \in \delta(p,\sigma_j) \} \cup \{ (((p,q,c),j),((p,q,c,c'),j) \mid c' \in \delta(q,\sigma'_j) \} \\
&\cup \{ (((p,q,c,c'),j),((p,q,c,q'),j) \mid q' \in c' \} \cup \{ (((p,q,c,q'),j),((p',q'),j+1) \mid p' \in c \},
\end{align*}
and the winning condition $\win = \{ \rho \in V^\omega \mid \forall j \in \mathbb{N}.~ \projq{\rho_j} \in F \rightarrow \exists j' \geq j.~ \projp{\rho_{j'}} \in F \}$, where $\projp{v} := \projection{\projection{v}{1}}{1}$ and $\projq{v} := \projection{\projection{v}{1}}{2}$, \ie, $\projp{v}$ and $\projq{v}$ map a position $v$ to the alternative state and the dominant state of $v$, respectively.
\end{definition}

We now define the notion of \iddominance based on the \iddominance game. Intuitively, the winner of the game for the computations of two strategies $s$ and~$t$ determines whether or not $s$ \iddominates $t$ on a given input sequence. Similar to remorsefree dominance, we then lift this definition to \iddominant strategies. Formally:

\begin{definition}[\iDDominant Strategy]
Let $\aca{\varphi}$ be an ACA. Let~$s$ and~$t$ be strategies for process $p_i$.
Then,~\emph{$s$ \iddominates~$t$ on input sequence $\gamma \in (2^\inputs{i})^\omega\!$} for $\aca{\varphi}$, denoted~$\ddominatesSequence{s}{t}{\gamma}{\aca{\varphi}}$, if \dom wins the \iddominance game $(\aca{\varphi}, \computation{t}{\gamma}, \computation{s}{\gamma})$.
Strategy~\emph{$s$ \iddominates~$t$} for~$\aca{\varphi}$, denoted $\ddominatesNotation{s}{t}{\aca{\varphi}}$, if $\ddominatesSequence{s}{t}{\gamma}{\aca{\varphi}}$ holds for all input sequences~$\gamma \in (2^\inputs{i})^\omega\!$.
Strategy~\emph{$s$ is \iddominant} for $\aca{\varphi}$ if, for every alternative strategy $t$ for $p_i$, $\ddominatesNotation{s}{t}{\aca{\varphi}}$ holds.
\end{definition}

\begin{figure}[t]
\centering
\begin{subfigure}[b]{0.46\textwidth}
\centering
\scalebox{0.96}{
\begin{tikzpicture}[>=latex,shorten >=0pt,auto,->,node distance=1cm,thin,every edge/.style={draw,font=\small},every state/.style={minimum size=2.2em}, initial text =]
	
		\node[state,initial,accepting]	(q0)		at (0,0)		{\small$q_0$};
		\node[state,accepting]			(q1)		at (2.8,1)	{\small$q_1$};
		\node[state,accepting]			(q2)		at (2.8,-1)	{\small$q_2$};
		\node[state]					(q3)		at (4.9,0)	{\small$q_3$};
		
		\path	(q0)		edge[loop above,looseness=8]		node		{$\neg m_1 \land \neg m_2$}		(q0)
						edge	[sloped]			node		{$m_1 \land \neg m_2$}	(q1)
						edge	[sloped,below]	node		{$\neg m_1 \land m_2$}	(q2)
						edge					node		{$m_1 \land m_2$}		(q3)
				(q1)		edge[sloped,pos=0.55]			node		{$m_2$}			(q3)
						edge[loop right,looseness=6,in=0,out=33]		node		{$\neg m_2$}		(q1)
				(q2)		edge[sloped,below,pos=0.55]	node		{$m_1$}			(q3)
						edge[loop right,looseness=6,in=0,out=-33]		node		{$\neg m_1$}		(q2)
				(q3)		edge[loop above,looseness=8]		node		{$\top$}			(q3);
	\end{tikzpicture}}
\caption{ACA $\mathcal{A}_\varphi$ for $\varphi = \protect\Eventually m_1 \land \protect\Eventually m_2$.}\label{fig:ACA_running_example}
\end{subfigure}
\hfill
\begin{subfigure}[b]{0.52\textwidth}
\centering
\scalebox{0.96}{
\begin{tikzpicture}[>=latex,shorten >=0pt,auto,->,node distance=1cm,thin,every edge/.style={draw,font=\small},every state/.style={minimum size=2.2em}, initial text =]

		\def\centerarc[#1](#2)(#3:#4:#5){ \draw[#1] ($(#2)+({#5*cos(#3)},{#5*sin(#3)})$) arc (#3:#4:#5); }
	
		\node[state,initial]		(q0)		at (0,0)		{\small$q_0$};
		\node[state]			(q1)		at (2,0)		{\small$q_1$};
		\node[state,accepting]	(q2)		at (4,0)		{\small$q_2$};
		\node[state]			(q3)		at (6,0)		{\small$q_3$};
		\node[state]			(q4)		at (2,-1.25)	{\small$q_4$};
		\node[state,accepting]	(q5)		at (5,-1.25)	{\small$q_5$};
		
		\node[draw=none]		(a1)		at ($(q1.east)+(0.6,0.12)$) {};
		\node[draw=none]		(a2)		at ($(q0.east)+(0.7,1.04)$) {};

		\centerarc[gray!80,-,thick](a1.south west)(-125:0:0.15)
		\centerarc[gray!80,-,thick](a2.south)(-42:21:0.18)

		\path	(q0)		edge		node		{$o$}		(q1)
						edge		node[below]		{$\top~~$}		(q4)
				(q1)		edge	[loop above,looseness=6]		node		{$o$}		(q1)
				(q2)		edge	[loop above,looseness=6]		node		{$\neg o$}	(q2)
						edge		node		{$o$}		(q3)
				(q3)		edge[loop above,looseness=6]		node		{$\top$}	(q3)
				(q4)		edge		node[pos=0.3]		{$i$}		(q3)
						edge		node		{$\neg i$}	(q5)
				(q5)		edge[loop right,looseness=6]		node		{$\top$}	(q5);
				
		\path	(q1)		edge		node	{$\neg o$}		(q2)
				(a1)		edge[bend left=40]	node		{}	(q1)
				(q0)		edge[bend left=50]	node	[pos=0.25]	{$\neg o$}	(q2)
				(a2.south)		edge		node	{}	(q1);
				
	\end{tikzpicture}}
\caption{ACA $\mathcal{A}_\psi$ for $\psi = \protect\Globally\protect\Eventually o \lor \protect\Next i$.}\label{fig:ACA_interesting}
\end{subfigure}
\caption{Alternating co-Büchi automata $\mathcal{A}_\varphi$ and $\mathcal{A}_\psi$. Universal choices are depicted by connecting the transitions with a gray arc. Rejecting states are marked with double circles.}\label{fig:ACAs_examples}
\end{figure}

As an example for \iddominance, consider the message sending system again. Let $s_i$ be a strategy for process $p_i$ that outputs $m_i$ in the very first step and let $t_i$ be a strategy that waits for $m_{3-i}$ before sendings its own message.
An ACA~$\mathcal{A}_\varphi$ with $\Lang{\mathcal{A}_\varphi} = \Lang{\varphi}$ is depicted in \Cref{fig:ACA_running_example}. Note that~$\mathcal{A}_\varphi$ is deterministic and thus every sequence induces a single run tree with a single branch. 
Hence, for every input sequence $\gamma \in (2^\inputs{1})^\omega$, the moves of both \alt and \dom are uniquely defined by the computations of $t_1$ and $s_1$ on~$\gamma$, respectively.
Therefore, we only provide the state pairs $(p,q)$ of the \iddominance game, not the intermediate tuples.
First, consider an input sequence $\gamma \in (2^\inputs{1})^\omega$ that contains the very first~$m_{2}$ at point in time~$\ell$. Then, the run of $\mathcal{A}_\varphi$ on $\computation{s_1}{\gamma}$ starts in $q_0$, moves to~$q_1$ immediately if $\ell > 0$, stays there up to the occurrence of $m_{2}$ and then moves to~$q_3$, where it stays forever. If $\ell = 0$, then the run moves immediately from $q_0$ to $q_3$. The run of $\computation{t_1}{\gamma}$, in contrast, stays in~$q_0$ until $m_{2}$ occurs, then moves to~$q_2$ and then immediately to~$q_3$, where it stays forever. Thus, we obtain the unique sequence $(q_0,q_0) (q_0,q_1)^{\ell-1} (q_2,q_3) (q_3,q_3)^\omega$ of state pairs in the \iddominance game $(\mathcal{A}_\varphi,\computation{t_1}{\gamma},\computation{s_1}{\gamma})$. The last rejecting alternative state, \ie, a rejecting state induced by $\computation{t_1}{\gamma}$ occurs at point in time~$\ell+1$, namely~$q_2$, while the last rejecting dominant state \ie, a rejecting state induced by $\computation{s_1}{\gamma}$, occurs at point in time~$\ell$, namely~$q_1$. Thus, $\ddominatesSequence{s_1}{t_1}{\gamma}{\mathcal{A}_\varphi}$ holds. In fact, $\ddominatesSequence{s_1}{t'_1}{\gamma}{\mathcal{A}_\varphi}$ holds for \emph{all} alternative strategies~$t'_1$ for such an input sequence $\gamma$ since every strategy $t'_1$ for $p_1$ induces at least $\ell$ visits to rejecting states due to the structure of $\gamma$.
Second, consider an input sequence $\gamma' \in (2^\inputs{1})^\omega$ that does not contain any~$m_{2}$. Then, the run of $\mathcal{A}_\varphi$ on a computation of any strategy $t'_1$ on~$\gamma'$ never reaches~$q_3$ and thus only visits rejecting states. Hence, in particular, every visit to a rejecting state induced by $\computation{s_1}{\gamma'}$ is matched by a visit to a rejecting state induced by $\computation{t'_1}{\gamma'}$ for all strategies $t'_1$. Thus, $\ddominatesSequence{s_1}{t'_1}{\gamma'}{\mathcal{A}_\varphi}$ holds for all alternative strategies $t'_1$ as well.
We can thus conclude that $s_1$ is \iddominant for $\mathcal{A}_\varphi$, meeting our intuition that $s_1$ should be allowed to violate $\varphi$ on input sequences that do not contain any $m_{2}$.
Strategy~$t_1$, in contrast, is remorsefree dominant for $\varphi$ but not \iddominant for~$\mathcal{A}_\varphi$: consider again an input sequence $\gamma \in (2^\inputs{1})^\omega$ that contains the very first $m_{2}$ at point in time~$\ell$. For the \iddominance game $(\mathcal{A}_\varphi,\computation{s_1}{\gamma},\computation{t_1}{\gamma})$, we obtain the following sequence of state pairs: $(q_0,q_0) (q_1,q_0)^{\ell-1} (q_3,q_2) (q_3,q_3)^\omega$. It contains a rejecting dominant state, \ie, a rejecting state induced by $\computation{t_1}{\gamma}$, at point in time $\ell+1$, while the last rejecting alternative state occurs at point in time $\ell$. 
Hence, $t_1$ does not \iddominate~$s_1$, preventing that it is \iddominant to wait for the other process indefinitely.

Next, consider the LTL formula $\psi = \Globally\Eventually o \lor \Next i$, where $i$ is an input variable and $o$ is an output variable. An ACA $\mathcal{A}_\psi$ with $\Lang{\mathcal{A}_\psi} = \Lang{\psi}$ is depicted in \Cref{fig:ACA_interesting}. Note that it has both existential and~universal transitions. Consider a process strategy~$s$ that outputs $o$ in every step. Let~$t$ be some alternative strategy and let $\gamma$ be some input sequence.
Then, \dom encounters an existential choice in state $q_0$ for $s$ in the very first round of the \iddominance game $(\mathcal{A}_\psi,\computation{t}{\gamma},\computation{s}{\gamma})$: it can choose to move to $q_1$ or to $q_4$. If \dom chooses to move to $q_1$, then the only possible successor state in every run of~$\mathcal{A}_\psi$ induced by $\computation{s}{\gamma}$ is $q_1$. Thus, irrespective of \alt's moves, the sequence of dominant states in all consistent initial plays is given by $q_0 q_1^\omega$. Since neither~$q_0$ nor~$q_1$ is rejecting, \dom wins the game.
Therefore, there exists a winning strategy for \dom for the game $(\mathcal{A}_\psi,\computation{t}{\gamma},\computation{s}{\gamma})$ for all $t$ and $\gamma$, namely choosing to move to $q_1$ from $q_0$, and thus $s$ is \iddominant.
Second, consider a strategy~$t$ that does not output~$o$ in the first step but outputs~$o$ in every step afterwards. Let $\gamma$ be an input sequence that does not contain~$i$ at the second point in time.
Then, \dom encounters an existential choice in state~$q_0$ for~$t$ in the very first round of the \iddominance game $(\mathcal{A}_\psi,\computation{s}{\gamma},\computation{t}{\gamma})$. Yet, if \dom chooses the transition from $q_0$ to~$q_4$, then every consistent play will contain infinitely many rejecting dominant states since the structure of $\gamma$ enforces that every consistent play enters $q_5$ in its dominant state in the next round of the game.
Otherwise, \ie, if \dom chooses the universal transition to both~$q_1$ and~$q_2$, then \alt decides which of the states is entered. If \alt chooses $q_2$, then every consistent play visits a rejecting dominant state, namely $q_2$, in the second round of the game. If \alt further chooses to move from $q_0$ to $q_1$ for the alternative strategy $s$, then, as shown above, no rejecting dominant states are visited in a consistent play at all.
Thus, there exists a winning strategy for \alt and therefore $t$ is not \iddominant for $\mathcal{A}_\psi$.

Recall that one of the main weaknesses of bounded dominance is that every strategy, even a non-dominant one, is trivially $n$-dominant if the bound $n$ is chosen too small. 
Every \iddominant strategy, in contrast, is also remorsefree dominant.
The main idea is that a winning strategy $\tau$ of \dom in the \iddominance game defines a run tree of the automaton induced by the \iddominant strategy $s$ such that all branches either visit only finitely many rejecting states or such that all rejecting states are matched eventually with a rejecting state in some branch, which is also defined by $\tau$, of all run trees induced by an alternative strategy. Thus, $s$ either satisfies the specification, or an alternative strategy does not satisfy it either. For the formal proof, we refer the reader to~\Cref{app:ddominance}.

\begin{theorem}\label{lem:ddom_implies_dom}
Let $\varphi$ be an LTL formula.
Let $\mathcal{A}_{\varphi}$ be an ACA with $\Lang{\mathcal{A}_{\varphi}} = \Lang{\varphi}$. Let $s$ be a strategy for process $p_i$. If $s$ is \iddominant for~$\mathcal{A}_{\varphi}$, then $s$ is remorsefree dominant for~$\varphi$.
\end{theorem}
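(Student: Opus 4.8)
The plan is to prove the contrapositive: assume $s$ is \emph{not} remorsefree dominant for $\varphi$ and show it is not \iddominant for $\mathcal{A}_\varphi$. So suppose there is an alternative strategy $t$ and an input sequence $\gamma \in (2^{\inputs{i}})^\omega$ with $\computation{s}{\gamma} \not\models \varphi$ but $\computation{t}{\gamma} \models \varphi$. It suffices to exhibit a winning strategy for \alt in the \iddominance game $(\mathcal{A}_\varphi, \computation{t}{\gamma}, \computation{s}{\gamma})$, which witnesses $\ddominatesSequenceNegated{s}{t}{\gamma}{\mathcal{A}_\varphi}$ and hence that $s$ is not \iddominant. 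The key observation is the asymmetry in control: in this game the dominant state tracks a run tree induced by $\computation{s}{\gamma}$ along a branch that \alt gets to steer (\alt picks the state $q' \in c'$ out of \dom's chosen existential successor set), whereas the alternative state tracks $\computation{t}{\gamma}$ along a branch that \dom steers but whose existential moves \alt picks (step 1). Since $\computation{s}{\gamma} \not\models \varphi$ and $\Lang{\mathcal{A}_\varphi} = \Lang{\varphi}$, \emph{every} run tree of $\mathcal{A}_\varphi$ on $\computation{s}{\gamma}$ is rejecting, so it contains a branch with infinitely many rejecting states; and since $\computation{t}{\gamma} \models \varphi$, \emph{some} run tree of $\mathcal{A}_\varphi$ on $\computation{t}{\gamma}$ is accepting, so along that tree every branch has only finitely many rejecting states.

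The core of the argument is to convert these facts about run trees into a strategy for \alt. For the alternative side, \alt should force the play down the accepting run tree of $\computation{t}{\gamma}$: at step 1, whenever \dom's handling of the universal choices on $t$ is benign, \alt resolves the existential choices on $p$ according to the accepting run tree, and at step 4 \dom is forced to pick some $p' \in c$ lying in that tree's successor set — here one needs that the accepting run tree is closed under the transition function, so whatever $p'$ \dom picks still lies in an accepting subtree (I would phrase this as: the accepting run tree gives, for each node, a satisfying assignment of $\delta(p,\sigma^t_j)$, and \alt picks $c$ to be exactly that conjunct, so every $p'\in c$ is a node of the accepting tree). Consequently the alternative-state sequence $\projp{\rho_0}\projp{\rho_1}\cdots$ is a branch of an accepting run tree and contains only finitely many rejecting states. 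For the dominant side, \alt must, against \emph{every} strategy of \dom, steer the dominant-state branch into a part of a run tree with infinitely many rejecting states. Here the subtlety is that \dom resolves the existential transitions on $q$; but since $\computation{s}{\gamma}\not\models\varphi$, \emph{no matter which} existential choices are made the resulting run tree is rejecting, so in any such tree there is a branch with infinitely many rejecting states, and \alt can pick $q'$ at step 3 to follow it. One should make this precise by a König's-lemma / tree-pruning argument: the set of run trees consistent with \dom's existential choices so far always contains one with an infinitely-rejecting branch, and \alt maintains as an invariant a choice of such a branch, updating it as \dom commits to more choices. Combining: along the resulting play, $\projq{\rho_j} \in F$ for infinitely many $j$ while $\projp{\rho_j} \in F$ only finitely often, so taking any $j$ past the last rejecting alternative state with $\projq{\rho_j}\in F$ violates the winning condition $\win$; \alt wins.

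The main obstacle is making the dominant-side argument uniform over all \dom strategies. \dom controls the existential transitions on $q$, so \alt cannot simply fix a rejecting run tree of $\computation{s}{\gamma}$ in advance; she must react to \dom's choices while still guaranteeing that the branch she builds has infinitely many rejecting states. The right abstraction is that the \iddominance game restricted to the dominant component, with \dom choosing existential successor sets and \alt choosing a state from them, is exactly the acceptance game of the alternating co-Büchi automaton $\mathcal{A}_\varphi$ on the word $\computation{s}{\gamma}$, with the roles set up so that \alt wins this component iff $\computation{s}{\gamma} \notin \Lang{\mathcal{A}_\varphi}$ — which holds by assumption. Dually, the alternative component is the acceptance game on $\computation{t}{\gamma}$ with roles swapped (\dom now in the position of the "nondeterministic" player for the existential transitions is actually reversed — \alt picks the conjunct, \dom picks the disjunct's state), and since $\computation{t}{\gamma}\in\Lang{\mathcal{A}_\varphi}$ the party trying to produce an accepting branch wins; one has to check carefully which player that is in the \iddominance game's bipartite move order and confirm it is \alt (equivalently, that \alt can force finitely many rejecting alternative states). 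Once these two one-sided acceptance-game facts are in hand, the product game decomposes and a winning \alt strategy is obtained by playing the two components independently; the winning condition $\win$ is then violated because the dominant branch is rejecting (infinitely many $F$-visits) while the alternative branch is accepting (finitely many), so there is a time $j$ with $\projq{\rho_j}\in F$ and no later $\projp{\rho_{j'}}\in F$. I would isolate the two acceptance-game correspondences as lemmas and then assemble them; the determinacy of (co-)Büchi games guarantees the needed winning strategies exist.
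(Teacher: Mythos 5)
Your proposal is correct, and it reaches the conclusion by a route that differs from the paper's in two respects. First, the paper argues by contradiction from the \emph{assumed} winning strategy $\tau$ of \dom: it converts $\tau$ into a run tree of $\mathcal{A}_\varphi$ on $\computation{s}{\gamma}$ (\Cref{lem:strategy_induces_runtree_dom}), converts the accepting run tree on $\computation{t}{\gamma}$ into an \alt strategy $\mu$ (\Cref{lem:runtree_induces_strategy_alt}), and exhibits a single play consistent with both that violates $\win$; it never constructs a winning strategy for \alt. You instead prove the contrapositive by building a full winning \alt strategy, decomposing the game into two interleaved membership games of the ACA -- the dominant component on $\computation{s}{\gamma}$ (where \alt is Pathfinder and wins because the word is rejected, which requires determinacy of co-B\"uchi games or your tree-pruning argument) and the alternative component on $\computation{t}{\gamma}$ (where \alt follows the conjuncts of the fixed accepting run tree, exactly as in \Cref{lem:runtree_induces_strategy_alt}). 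The decomposition is sound: since each component strategy is winning against \emph{all} behaviours of the opponent in that component, the projection of any consistent product play to either component remains a consistent, winning play there, so \dom's ability to cross-condition his step-2 and step-4 moves on the other component's history gains him nothing. Your version buys a conceptually cleaner statement (``not dominant $\Rightarrow$ \alt wins'') and makes the link to alternating-automaton acceptance games explicit, at the price of invoking determinacy; the paper's version avoids determinacy entirely and only needs to refute one purported witness $\tau$, which is all that ``\dom does not win the game'' requires. If you write yours up, do isolate the two acceptance-game correspondences as lemmas as you suggest, and fix the phrasing on the alternative side: \alt's step-1 choice of the conjunct $c$ from the accepting run tree is unconditional, not contingent on \dom's step-4 behaviour being ``benign''.
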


Clearly, the converse does not hold. For instance, a strategy in the message sending system that waits for the other process to send its message first is remorsefree dominant for~$\varphi$ but not \iddominant for the ACA depicted in \Cref{fig:ACA_running_example} as pointed out above.

Given an LTL formula $\varphi$, for remorsefree dominance it holds that if $\varphi$ is realizable, then every strategy that is dominant for $\varphi$ is also winning for~$\varphi$~\cite{DammF14}. This is due to the fact that the winning strategy needs to be taken into account as an alternative strategy for every dominant one, and that remorsefree dominance is solely defined on the satisfaction of the specification. With \Cref{lem:ddom_implies_dom} the same property follows for \iddominance.

\begin{lemma}\label{lem:winning_if_realizable}
Let $\varphi$ be an LTL formula. 
Let $\mathcal{A}_{\varphi}$ be an ACA with $\Lang{\mathcal{A}_{\varphi}} = \Lang{\varphi}$.
If $\varphi$ is realizable, then every \iddominant strategy for $\mathcal{A}_\varphi$ is winning for~$\varphi$ as well.
\end{lemma}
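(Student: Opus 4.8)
The plan is to combine Theorem~\ref{lem:ddom_implies_dom} with the known fact for remorsefree dominance that realizability forces every dominant strategy to be winning. Concretely, suppose $s$ is \iddominant for $\mathcal{A}_\varphi$. By Theorem~\ref{lem:ddom_implies_dom}, $s$ is remorsefree dominant for $\varphi$. Now I would invoke the cited property from~\cite{DammF14}: if $\varphi$ is realizable, then there exists a winning strategy $w$ for $p_i$ with respect to $\varphi$, and since $w$ is a legitimate alternative strategy, remorsefree dominance of $s$ gives $\dominates{s}{w}$, i.e.\ for every input sequence $\gamma \in (2^{\inputs{i}})^\omega$ either $\computation{s}{\gamma} \models \varphi$ or $\computation{w}{\gamma} \not\models \varphi$. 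But $w$ is winning, so $\computation{w}{\gamma} \models \varphi$ for all $\gamma$, which forces $\computation{s}{\gamma} \models \varphi$ for all $\gamma$; hence $s$ is winning for $\varphi$.

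A subtlety worth addressing explicitly is what "realizable" means here for a single process $p_i$ in the compositional setting: the relevant notion is that there is a strategy $w: (2^{\inputs{i}})^* \to 2^{\outputs{i}}$ with $\computation{w}{\gamma} \models \varphi$ for all input sequences $\gamma$, treating the outputs of the other processes as part of $p_i$'s inputs. Under this reading the argument above is immediate, and it is exactly the reading used for remorsefree dominance in~\cite{DammF14}. I would state this as a one-line remark so the reader sees that the existence of the alternative winning strategy is precisely the hypothesis.

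I expect no real obstacle: the lemma is a direct corollary, and the only care needed is to make sure the chain of implications (\iddominant $\Rightarrow$ remorsefree dominant $\Rightarrow$ dominated by any winning strategy $\Rightarrow$ winning) is spelled out and that the cited result is applied to the correct notion of per-process realizability. If a fully self-contained argument is preferred over citing~\cite{DammF14}, one can simply reproduce the short reasoning of the previous paragraph inline, since it only uses the definition of remorsefree dominance together with the trivial observation that a winning strategy satisfies $\varphi$ on every input.
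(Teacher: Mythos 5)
Your proposal is correct and follows exactly the paper's intended argument: the paper derives \Cref{lem:winning_if_realizable} as an immediate consequence of \Cref{lem:ddom_implies_dom} together with the known fact from~\cite{DammF14} that under realizability every remorsefree dominant strategy is winning, which is precisely the chain you spell out. Your additional remark on the per-process reading of realizability is a reasonable clarification but does not change the substance.
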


A critical shortcoming of remorsefree dominance is its non-compositionality for liveness properties. This restricts the usage of dominance-based compositional synthesis algorithms to safety specifications, which are in many cases not expressive enough.
\iDdominance, in contrast, is specifically designed to be compositional for more properties:
we identified that a crucial requirement for the compositionality of a process property such as remorsefree dominance or \iddominance is the existence of \emph{bad prefixes} for strategies that do not satisfy the process requirement.
Since remorsefree dominance solely considers the satisfaction of the specification $\varphi$, a bad prefix for a strategy that is not remorsefree dominant boils down to a bad prefix of $\Lang{\varphi}$ and therefore compositionality cannot be guaranteed for liveness properties.
As \iddominance takes the ACA representing $\varphi$ and, in particular, its rejecting states into account, the absence of a bad prefix for $\Lang{\varphi}$ does not necessarily result in the absence of a bad prefix for \iddominance. First, we define such bad prefixes formally:

\begin{definition}[Bad Prefixes for \iDDominance]\label{def:bad_prefixes}
	Let $\mathcal{P}$ be the set of all system processes and all parallel compositions of subsets of system processes. Let~$I_p$ and $O_p$ be the sets of inputs and outputs of $p \in \mathcal{P}$.
		Let $\aca{\varphi}$ be an ACA.
	Then, $\aca{\varphi}$ \emph{ensures bad prefixes for \iddominance} if, for all $p \in \mathcal{P}$ and all strategies~$s$ for~$p$ for which there exists some $\gamma \in (2^{I_p})^\omega$ such that $\ddominatesSequenceNegated{t}{s}{\gamma}{\aca{\varphi}}$ holds for some alternative strategy $t$, there is a finite prefix $\eta \in (2^{I_p \cup O_p})^*$ of $\computation{s}{\gamma}$ such that for all infinite extensions $\sigma\in (2^{I_p \cup O_p})^\omega$ of $\eta$, there is an infinite sequence $\sigma'\in (2^{I_p \cup O_p})^\omega$ with $\sigma \cap I_p = \sigma' \cap I_p$ such that \dom loses the \iddominance game $(\aca{\varphi},\sigma',\sigma)$.
\end{definition}

Intuitively, an ACA that ensures bad prefixes for \iddominance thus guarantees that for every strategy that is not \iddominant, there exists a point in time at which its behavior ultimately prevents \iddominance, irrespective of any future behavior. For more details on bad prefixes for \iddominance and their existence in automata, we refer to~\textcolor{red}{TODO}.
If an ACA ensures bad prefixes for \iddominance, compositionality is then guaranteed:
if the parallel composition of two \iddominant strategies $s_1$ and $s_2$ is not \iddominant, then the behavior of both processes at the last position of the smallest bad prefix reveals which one of them is responsible for \dom losing the game. Note that also both processes can be responsible simultaneously. 
Since there is an alternative strategy $t$ for the composed system for which \dom wins the game, as otherwise $s_1 \pc s_2$ would be \iddominant, the strategy of the process $p_i$ which is responsible for \dom losing the game cannot be \iddominant since there is an alternative strategy, namely~$t$ restricted to the outputs of~$p_i$, that allows \dom to win the game. 
For the formal proof, we refer to~\Cref{app:ddominance}.

\begin{theorem}[Compositionality of \iDDominance]\label{thm:compositonality_ddom}
Let $\aca{\varphi}$ be an ACA that ensures bad prefixes for \iddominance. Let~$s_1$ and $s_2$ be \iddominant strategies for $\aca{\varphi}$ and processes~$p_1$ and~$p_2$, respectively. Then, $s_1 \pc s_2$ is \iddominant for $\aca{\varphi}$ and $p_1 \pc p_2$.
\end{theorem}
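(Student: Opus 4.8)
The plan is to argue by contradiction, following the same overall strategy as in the safety case of \Cref{thm:compositionality_safety} but with bad prefixes for \iddominance in place of bad prefixes of $\Lang{\varphi}$. Suppose $s_1$ and $s_2$ are \iddominant for $\aca{\varphi}$ while $s := s_1 \pc s_2$ is not. Then there is an alternative strategy $t$ for $p := p_1 \pc p_2$ and an input sequence $\gamma \in (2^{I_p})^\omega$ such that \dom loses the \iddominance game $(\aca{\varphi}, \computation{t}{\gamma}, \computation{s}{\gamma})$; write $\sigma := \computation{s}{\gamma}$. Since $p \in \mathcal{P}$ and $\aca{\varphi}$ ensures bad prefixes for \iddominance, some finite prefix of $\sigma$ witnesses the condition of \Cref{def:bad_prefixes}; let $\eta := \pref{\sigma}{k}$ be the shortest such prefix (the case $k = 0$ being degenerate but unproblematic).

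The key use of minimality is that $\eta^- := \pref{\sigma}{k-1}$ is not a bad prefix, so there is an infinite extension $\rho$ of $\eta^-$ over $2^{I_p \cup O_p}$ such that \dom \emph{wins} $(\aca{\varphi}, \rho', \rho)$ for every $\rho'$ with $\rho \cap I_p = \rho' \cap I_p$. Since $\eta$ is a bad prefix but $\rho$ is ``safe'', $\rho$ cannot extend $\eta$ --- \dom would then both win and lose one and the same game --- so $\rho$ and $\sigma$ agree on positions $0,\dots,k-1$ and differ at position $k$. Once we argue that $\rho$ may be chosen to keep the environment inputs at step $k$ equal to $\gamma_k$, the divergence lies inside $O_1 \cup O_2$, so there is $i \in \{1,2\}$ with $\sigma_k \cap O_i \neq \rho_k \cap O_i$; this $p_i$ is the ``responsible'' process (possibly both are, in which case we pick either). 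We then transfer the situation to $p_i$ alone: put $\delta := \sigma \cap I_i$, so that $\computation{s_i}{\delta} = \sigma \cap \variables{i}$, and let $u$ be the $p_i$-strategy obtained from $t$ by retaining only its $O_i$-outputs and feeding $p_{3-i}$'s outputs --- which $t$ itself computes --- into $p_i$'s input stream, as in the compositional setting. Using that $\eta^-$ is not a bad prefix whereas $\eta$ is, together with the concrete I/O values at step $k$, we show that \dom loses $(\aca{\varphi}, \computation{u}{\delta}, \computation{s_i}{\delta})$. This contradicts $\ddominatesNotation{s_i}{u}{\aca{\varphi}}$, hence the \iddominance of $s_i$.

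I expect the technical core to be the interaction between the two processes' inputs and outputs. Since $O_{3-i}$ may be an input variable of $p_i$ while being an output variable of the composed system, ``$t$ restricted to $O_i$'' is not literally a $p_i$-strategy, so the last step must be carried out along the single fixed input sequence $\delta$ rather than uniformly over strategies, and one must show that the winning \alt-strategy in the \emph{composed} game yields a winning \alt-strategy in the single-process game $(\aca{\varphi}, \computation{u}{\delta}, \computation{s_i}{\delta})$ along the run-tree positions reachable past the fixed prefix $\eta$. This is precisely where the strength of \Cref{def:bad_prefixes} is used: as it supplies, for \emph{every} extension of $\eta$, some input-matching $\sigma'$ that is losing for \dom, the discrepancy between the composed continuation of $\gamma$ and the single-process continuation of $\delta$ after step $k$ can be absorbed. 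The remaining points --- that the responsible process is one of $p_1,p_2$ rather than the environment (equivalently, that the safe continuation $\rho$ may be chosen compatible with $\gamma_k$), the $k=0$ base case, and the bookkeeping relating traces and computations --- are routine.
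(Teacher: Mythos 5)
Your overall skeleton --- argue by contradiction, take a shortest bad prefix, blame one of the two processes, and transfer the failure to that process's single-process game via its assumed \iddominance --- matches the paper's proof. The gap is in the blame-assignment step, which is exactly the technical core. You fix \emph{one} safe extension $\rho$ of $\eta^-$ (obtained from minimality) and read off the responsible process from where $\rho_k$ and $\sigma_k$ differ. This fails in two ways. First, the negation of ``$\eta^-$ is a bad prefix'' only yields \emph{some} extension all of whose input-matching opponents are winning for \dom; nothing forces that extension to agree with $\gamma$ on the inputs at step $k$, and if it does not, the divergence may lie entirely in $I_p$ and no process is identified --- you flag this as routine, but it does not follow from \Cref{def:bad_prefixes}. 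Second, if $\rho_k$ differs from $\sigma_k$ on both $O_1$ and $O_2$, ``pick either'' does not work: say you pick $p_1$. To contradict the \iddominance of $s_1$ you must exhibit a single $p_1$-input sequence on which $s_1$'s computation still extends the bad prefix (so that \dom loses for it) \emph{and} on which an alternative realizes the safe continuation $\rho$. Since $p_2$'s outputs are part of $p_1$'s inputs, feeding $s_1$ the actual $\gamma^{s_2}$ keeps the bad prefix but makes the alternative's $O_2$-component disagree with $\rho$ at position $k$, while feeding it $\rho\cap O_2$ destroys the bad-prefix extension at position $k$. The discrepancy sits \emph{at} step $k$, not after it, so it is not ``absorbed'' by the universal quantification over extensions in \Cref{def:bad_prefixes}.

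The paper avoids both problems with an asymmetric, existential case split rather than an inspection of one chosen $\rho$: either there \emph{exists} a safe extension of $\eta$ whose position $m-1$ agrees with $\delta$ on everything except $O_1$ and which agrees with $\gamma$ on the inputs (then $p_1$ is at fault, and the agreement on $O_2$ is precisely what makes $\computation{s_1}{\gamma\cup\gamma^{t'_2}}$ extend the bad prefix $\eta\cdot\delta$), or no such extension exists (then $p_2$ is at fault: since $s_2$ is a Moore machine, its output at position $m-1$ is already determined by the history $\eta$, so $\computation{s_2}{\gamma\cup\gamma^{t'_1}}$ yields an extension agreeing with $\delta$ outside $O_1$ and with $\gamma$ on the inputs, which by the case hypothesis is losing for \dom --- contradicting the \iddominance of $s_2$). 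This existential split, combined with the non-reactivity of Moore machines, is the idea your proposal is missing; without it the reduction to the single-process games does not go through. (Minor further points: the non-emptiness of the shortest bad prefix is handled in the paper via the self-dominance of $t$, and the final contradiction requires composing two winning \dom strategies, which also deserves an argument.)
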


From \Cref{lem:ddom_implies_dom,thm:compositonality_ddom} it then follows immediately that the parallel composition of two \iddominant strategies is also remorsefree dominant if the ACA ensures bad prefixes:

\begin{corollary}
Let $\varphi$ be an LTL formula. Let $\mathcal{A}_{\varphi}$ be an ACA with $\Lang{\mathcal{A}_{\varphi}}=\Lang{\varphi}$ that ensures bad prefixes for \iddominance. Let~$s_1$ and~$s_2$ be \iddominant strategies for~$\mathcal{A}_{\varphi}$ and processes~$p_1$ and~$p_2$, respectively. Then, $s_1 \pc s_2$ is remorsefree dominant for $\varphi$ and $p_1 \pc p_2$. {}
\end{corollary}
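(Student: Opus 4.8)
The plan is to chain together the two results already proven. The corollary asserts that $s_1 \pc s_2$ is remorsefree dominant for $\varphi$ and $p_1 \pc p_2$, given that both $s_1$ and $s_2$ are \iddominant for $\mathcal{A}_\varphi$, that $\mathcal{A}_\varphi$ ensures bad prefixes for \iddominance, and that $\Lang{\mathcal{A}_\varphi} = \Lang{\varphi}$. First I would apply \Cref{thm:compositonality_ddom}: since $\mathcal{A}_\varphi$ ensures bad prefixes for \iddominance and $s_1$, $s_2$ are both \iddominant for $\mathcal{A}_\varphi$ (for $p_1$ and $p_2$ respectively), the theorem directly yields that $s_1 \pc s_2$ is \iddominant for $\mathcal{A}_\varphi$ and the composed process $p_1 \pc p_2$.

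Next I would apply \Cref{lem:ddom_implies_dom} to the composed strategy. That lemma says that for an LTL formula $\varphi$ and an ACA $\mathcal{A}_\varphi$ with $\Lang{\mathcal{A}_\varphi} = \Lang{\varphi}$, any strategy that is \iddominant for $\mathcal{A}_\varphi$ is also remorsefree dominant for $\varphi$. The one point to check is that the lemma is stated for a strategy for "process $p_i$", and here the relevant "process" is the composite $p_1 \pc p_2$; since the architecture and strategy definitions treat a parallel composition of system processes exactly like a single process with inputs $(I_1 \cup I_2)\setminus(O_1 \cup O_2)$ and outputs $O_1 \cup O_2$, and since the set $\mathcal{P}$ in \Cref{def:bad_prefixes} explicitly includes parallel compositions of subsets of system processes, the notions of \iddominance, remorsefree dominance, and the bad-prefix property all apply verbatim to $p_1 \pc p_2$. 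Hence \Cref{lem:ddom_implies_dom}, instantiated with the composite process, gives that $s_1 \pc s_2$ is remorsefree dominant for $\varphi$ and $p_1 \pc p_2$, which is the claim.

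There is essentially no obstacle here: the corollary is a two-line consequence of the preceding theorem and lemma, and the only mildly delicate issue is the bookkeeping that \iddominance and remorsefree dominance are well defined for composite processes — but this is already baked into the definitions (\Cref{def:bad_prefixes} quantifies over $\mathcal{P}$, the set of all system processes \emph{and} all parallel compositions of subsets of them). I would therefore present the proof as: by \Cref{thm:compositonality_ddom}, $s_1 \pc s_2$ is \iddominant for $\mathcal{A}_\varphi$ and $p_1 \pc p_2$; by \Cref{lem:ddom_implies_dom} applied to the process $p_1 \pc p_2$ and the strategy $s_1 \pc s_2$, it follows that $s_1 \pc s_2$ is remorsefree dominant for $\varphi$ and $p_1 \pc p_2$. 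This completes the argument.
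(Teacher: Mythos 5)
Your proposal is correct and matches the paper's own argument exactly: the corollary is stated as an immediate consequence of \Cref{thm:compositonality_ddom} followed by \Cref{lem:ddom_implies_dom} applied to the composed strategy. Your additional remark that the composite $p_1 \pc p_2$ is treated as an ordinary process (so both results apply verbatim) is the only point worth checking, and you handle it correctly.
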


\enlargethispage{1.5\baselineskip}
With \Cref{lem:winning_if_realizable,thm:compositonality_ddom} we obtain that, given a specification $\varphi$ and an ACA $\mathcal{A}_\varphi$ with $\Lang{\mathcal{A}_\varphi}=\Lang{\varphi}$ that ensures bad prefixes for \iddominance, the parallel composition of \iddominant strategies for $\mathcal{A}_\varphi$ and all processes of a distributed system is winning if $\varphi$ is realizable. Hence, \iddominance can be soundly used for dominance-based compositional synthesis approaches when ensuring the bad prefix criterion.
In the next section, we thus introduce an automaton construction for synthesizing \iddominant strategies.


\section{Synthesizing \texorpdfstring{\iDDominant}{Delay-Dominant} Strategies}\label{sec:automaton_construction}

In this section, we introduce how \iddominant strategies can be synthesized using existing tools for synthesizing winning strategies. We focus on utilizing \emph{bounded synthesis} tools such as BoSy~\cite{FaymonvilleFT17}. Mostly, we use bounded synthesis as a black box procedure throughout this section. Therefore, we do not go into detail here and refer the interested reader to~\cite{FinkbeinerS13,FaymonvilleFRT17}. A crucial observation regarding bounded synthesis that we utilize, however, is that it translates the given specification $\varphi$ into an equivalent universal co-Büchi automaton~$\mathcal{A}_\varphi$ and then derives a strategy such that, for every input sequence, the runs of $\mathcal{A}_\varphi$ induced by the computation of the strategy on the input sequence visit only finitely many rejecting states.

\enlargethispage{\baselineskip}
To synthesize \iddominant strategies instead of winning ones, we can thus use existing bounded synthesis algorithms by replacing the universal co-Büchi automaton $\mathcal{A}_\varphi$ with one encoding \iddominance, \ie, with an automaton $\ddUCA{\varphi}$ such that its runs induced by the computations of a \iddominant strategy on all input sequences visit only finitely many rejecting states. This idea is similar to the approach for synthesizing remorsefree dominant strategies~\cite{DammF14,FinkbeinerP20}.
The automaton for recognizing \iddominant strategies, however, differs inherently from the one for recognizing remorsefree dominant strategies.

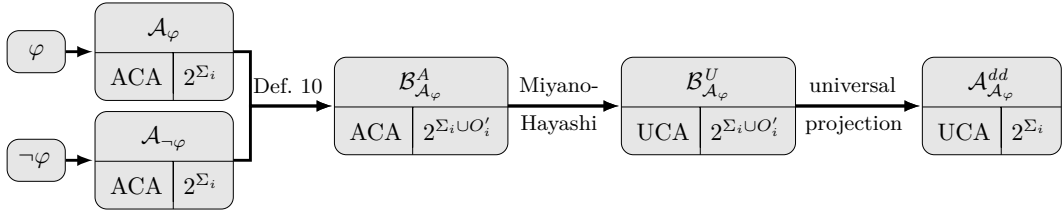
\begin{figure}[t]
\centering
\tikzstyle{box} = [draw, text centered, rounded corners=5pt, inner sep=0.2cm, align=center]
\tikzstyle{splitbox} = [box, rectangle split, rectangle split parts=2, inner sep = 0.2cm]
\scalebox{0.9}{
\begin{tikzpicture}[>=latex,shorten >=1pt,auto,node distance=5cm and 5cm,semithick]	
				
				\pic [local bounding box=ACAPhi] at (0,0) {vhsplit={$\mathcal{A}_\varphi$\vphantom{$\ddUCA{\varphi}$}}{ACA}{$2^\variables{i}$\vphantom{$2^\primedOutputs{i}$}}{gray!20}};
				
				\pic [local bounding box=ACAnegPhi] at (0,-1.6) {vhsplit={$\mathcal{A}_{\neg\varphi}$\vphantom{$\ddUCA{\varphi}$}}{ACA}{$2^\variables{i}$\vphantom{$2^\primedOutputs{i}$}}{gray!20}};
				
				\pic [local bounding box=ACADD] at (3.5,-0.8) {vhsplit={$\ddACA{\varphi}$\vphantom{$\ddUCA{\varphi}$}}{ACA}{$2^{\variables{i} \cup \primedOutputs{i}}$}{gray!20}};
				
				\pic [local bounding box=UCADD] at (7.7,-0.8) {vhsplit={$\ddUCAnonProj{\varphi}$\vphantom{$\ddUCA{\varphi}$}}{UCA}{$2^{\variables{i} \cup \primedOutputs{i}}$}{gray!20}};
				
				\pic [local bounding box=UCADDProj] at (12.1,-0.8) {vhsplit={$\ddUCA{\varphi}$}{UCA}{$2^{\variables{i}}$\vphantom{$2^\primedOutputs{i}$}}{gray!20}};
				
				\begin{scope}[on background layer]
					\node[box,fill=gray!20,minimum width=0.85cm] (Phi) at ($(ACAPhi.west) + (-0.86,0)$)	{$\varphi$};
					\node[box,fill=gray!20,minimum width=0.85cm] (negPhi) at ($(ACAnegPhi.west) + (-0.86,0)$)	{$\!\neg\varphi$};
					\node (D) at ($(ACAPhi.east) + (0.8,-0.5)$) {\small Def.~\ref{def:aca_dd}};
				\end{scope}
				
				\draw[very thick,->] (ACAPhi.east) -- ($(ACAPhi.east)+(0.25,0)$) -- ($(ACAPhi.east)+(0.25,-0.8)$) -- (ACADD.west);
				\draw[very thick,->] (ACAnegPhi.east) -- ($(ACAnegPhi.east)+(0.25,0)$) -- ($(ACAnegPhi.east)+(0.25,0.8)$) -- (ACADD.west);
				
				\path	(Phi) 			edge		[very thick, ->]		node {}	(ACAPhi.west)
						(negPhi) 		edge		[very thick, ->]		node {}	(ACAnegPhi.west)
						(ACADD.east) 	edge 	[very thick, ->]		node[above,pos=0.45]	{\small Miyano-} (UCADD.west)
										edge 	[very thick, ->]		node[below,pos=0.45]	{\small Hayashi} (UCADD.west)
						(UCADD.east)		edge 	[very thick, ->]		node[above,pos=0.45]	{\small universal\vphantom{y}} (UCADDProj.west)
										edge 	[very thick, ->]		node[below,pos=0.46]	{\small projection\vphantom{H}} (UCADDProj.west);
		\end{tikzpicture}}
\caption{Overview of the construction of a universal co-Büchi automaton $\ddUCA{\varphi}$ recognizing \iddominant strategies for the alternating co-Büchi automaton $\mathcal{A}_\varphi$ with $\Lang{\mathcal{A}_\varphi} = \Lang{\varphi}$. The lower parts of the boxes list the automaton type (alternating or universal) and the alphabet.}\label{fig:overview}
\end{figure}

The automaton construction consists of several steps. An overview is given in \Cref{fig:overview}. Since \iddominance is not defined on the LTL specification $\varphi$ itself but on an equivalent alternating co-Büchi automaton, we first translate $\varphi$ into an alternating co-Büchi automaton~$\mathcal{A}_\varphi$ with $\Lang{\mathcal{A}_\varphi} = \Lang{\varphi}$. For this, we utilize well-known algorithms for translating LTL formulas into equivalent alternating Büchi automata as well as the duality of the Büchi and co-Büchi acceptance condition and of nondeterministic and universal branching. More details on the translation of LTL formulas into alternating co-Büchi automata are provided in~\Cref{app:preliminaries}. Similarly, we construct an alternating co-Büchi automaton $\mathcal{A}_{\neg\varphi}$ with $\Lang{\mathcal{A}_{\neg\varphi}} = \Lang{\neg\varphi}$ from $\neg\varphi$.
The centerpiece of the construction is an alternating co-Büchi automaton~$\ddACA{\varphi}$ constructed from~$\mathcal{A}_\varphi$ and~$\mathcal{A}_{\neg\varphi}$ that recognizes whether $\ddominatesSequence{s}{t}{\gamma}{\mathcal{A}_\varphi}$ holds for $\mathcal{A}_\varphi$, input sequence $\gamma \in (2^\inputs{i})^\omega$ and strategies~$s$ and~$t$ for process~$p_i$.
The alternating automaton~$\ddACA{\varphi}$ is then translated into an equivalent universal co-Büchi automaton~$\ddUCAnonProj{\varphi}$, for example with the Miyano-Hayashi algorithm~\cite{MiyanoH84}.
Lastly, we translate~$\ddUCAnonProj{\varphi}$ into a universal co-Büchi automaton that accounts for requiring a strategy~$s$ to \iddominate \emph{all} other strategies $t$ and not only a particular one utilizing universal projection. In the remainder of this section, we describe all steps of the construction in detail and prove their correctness.

\subsection{Construction of the ACA \texorpdfstring{\boldmath $\ddACA{\varphi}$}{B{A}}}

\enlargethispage{\baselineskip}
From the two ACAs $\mathcal{A}_\varphi$ and $\mathcal{A}_{\neg\varphi}$, we construct an alternating co-Büchi automaton $\ddACA{\varphi}$ that recognizes whether $\ddominatesSequence{s}{t}{\gamma}{\mathcal{A}_\varphi}$ holds for $\mathcal{A}_\varphi$, input sequence $\gamma \in (2^\inputs{i})^\omega$ and process strategies~$s$ and $t$ for process~$p_i$.
The construction relies on the observation that $\ddominatesSequence{s}{t}{\gamma}{\mathcal{A}_\varphi}$ holds if, and only if, either~(i)~$\computation{t}{\gamma} \not\models \varphi$ holds or~(ii)~we have $\ddominatesSequence{s}{t}{\gamma}{\mathcal{A}_\varphi}$ and every initial play of the \iddominance game that is consistent with the winning strategy of \dom visits only finitely many rejecting dominant states. The proof of this observation is provided in~\Cref{app:automaton_construction}.
Therefore, the automaton $\ddACA{\varphi}$ consists of two parts, one accounting for~(i)~and one accounting for~(ii), and guesses nondeterministically in the initial state which part is entered. The ACA $\mathcal{A}_{\neg\varphi}$ with $\Lang{\mathcal{A}_{\neg\varphi}}=\Lang{\neg\varphi}$ accounts for~(i). For~(ii), we intuitively build the product of two copies of the ACA $\mathcal{A}_\varphi$ with $\Lang{\mathcal{A}_{\varphi}}=\Lang{\varphi}$, one for each of the considered process strategies $s$ and $t$. Note that similar to the change of control for $t$ in the \iddominance game, we consider the \emph{dual} transition function of $\mathcal{A}_\varphi$, \ie, the one where conjunctions and disjunctions are swapped, for the copy of $\mathcal{A}_\varphi$ for $t$.
We keep track of whether we encountered a situation in which a rejecting state was visited for $s$ while it was not for $t$. This allows for defining the set of rejecting states.

Note that we need to allow for differentiating valuations of output variables computed by~$s$ and $t$ on the same input sequence. Therefore, we extend the alphabet of $\ddACA{\varphi}$: in addition to the set $\variables{i}$ of variables of process $p_i$, which contains input variables $\inputs{i}$ and output variables~$\outputs{i}$, we consider the set $\primedOutputs{i} := \{ o' \mid o \in \outputs{i} \}$ of \emph{primed output variables} of $p_i$, where every output variable is marked with a prime to obtain a fresh symbol. The set $\primedVariables{i}$ of \emph{primed variables} of~$p_i$ is then given by $\primedVariables{i} := \inputs{i} \cup \primedOutputs{i}$. Intuitively, the output variables $\outputs{i}$ depict the behavior of the \iddominant strategy $s$, while the primed output variables $\primedOutputs{i}$ depict the behavior of the alternative strategy $t$. The alphabet of~$\ddACA{\varphi}$ is then given by $2^{\variables{i} \cup \primedOutputs{i}}$. This is equivalent to $2^{\variables{i} \cup \primedVariables{i}}$ since the input variables are never primed to ensure that we consider the same input sequence for both strategies.
In the following, we use the functions $\primeSequenceName: \variables{i} \rightarrow \primedVariables{i}$ and $\unprimeName: \primedVariables{i} \rightarrow \variables{i}$ to switch between primed variables and normal ones: given a valuation $a \in \variables{i}$ of variables, $\primeSequence{a}$ replaces every output variable $o\in\outputs{i}$ occurring in $a$ with its primed version $o'$. For a valuation $a \in \primedVariables{i}$, $\unprime{a}$ replaces every primed output variable $o'\in\primedOutputs{i}$ occurring in $a$ with its normal unprimed version $o$. We extend $\primeSequenceName$ and $\unprimeName$ to finite and infinite sequences as usual.
The ACA $\ddACA{\varphi}$ is then constructed as follows:

\begin{definition}\label{def:aca_dd}
Let $\varphi$ be an LTL formula over alphabet $2^{\variables{i}}$.
Let $\mathcal{A}_{\varphi} = (Q, q_0, \delta, F)$ be an ACA with $\Lang{\varphi} = \Lang{\mathcal{A}_\varphi}$.
Let $\mathcal{A}_{\neg\varphi} = (Q^c, q^c_0, \delta^c, F^c)$ be an ACA with $\Lang{\neg\varphi} = \Lang{\mathcal{A}_{\neg\varphi}}$.
We construct the ACA $\ddACA{\varphi} = (\ddACAComponent{Q}, \ddACAComponent{Q}_0, \ddACAComponent{\delta}, \ddACAComponent{F})$ with alphabet $2^{\variables{i} \cup \primedOutputs{i}}\!$ as follows.
\begin{itemize}
\item $\ddACAComponent{Q} := (Q \times Q \times \{\top,\bot\}) \cup Q^c$
\item $\ddACAComponent{Q}_0 \coloneq (q_0,q_0,\top)$
\item $\ddACAComponent{F} \coloneq (Q \times Q \times \{\bot\}) \cup F^c$
\item $\ddACAComponent{\delta}: ((Q \times Q \times \{\top,\bot\}) \cup Q^c) \times 2^{\variables{i} \cup \primedOutputs{i}} \rightarrow (Q \times Q \times \{\top,\bot\}) \cup Q^c$ with
\begin{align*}
\ddACAComponent{\delta}(q_c,\tilde{\iota}) &\coloneq \delta^c(q_c,\iota') ~~~\text{for $q_c \in Q^c$}\\
\ddACAComponent{\delta}((q_0,q_0,\top),\tilde{\iota}) &\coloneq \delta^c(q_0,\iota') \lor \bigwedge_{c \in \delta(q_0,\iota')} \bigvee_{c' \in \delta(q_0,\iota)} \bigwedge_{q' \in c'} \bigvee_{p'\in c} \successor{p'}{q'}{\top}\\
\ddACAComponent{\delta}((p,q,m),\tilde{\iota}) &\coloneq \bigwedge_{c \in \delta(p,\iota')} \bigvee_{c' \in \delta(q,\iota)} \bigwedge_{q' \in c'} \bigvee_{p'\in c} \successor{p'}{q'}{m}
\end{align*}
where $\iota \coloneq \tilde{\iota} \cap \variables{i}$, $\iota' \coloneq \unprime{\tilde{\iota} \cap \primedVariables{i}}$, and $\vartheta: (Q \times Q \times \{\top,\bot\}) \rightarrow Q \times Q \times \{\top,\bot\}$ with \[ \successor{p}{q}{m} \coloneq \begin{cases}
(p,q,\bot) & \text{if $p \not\in F$, $q\in F$, and $m = \top$}\\
(p,q,\bot) & \text{if $p \not\in F$ and $m = \bot$}\\
(p,q,\top) & \text{otherwise}\\
\end{cases} \]
\end{itemize}
\end{definition}

Note that $\ddACA{\varphi}$ indeed consists of two parts: the one defined by states of the form $(p,q,m)$, and the one defined by the states of $\mathcal{A}_{\neg\varphi}$. By definition of $\ddACAComponent{\delta}$, these parts are only connected in the initial state of $\ddACA{\varphi}$, where a nondeterministic transition to the respective successors in both parts ensures that choosing nondeterministically whether (i) or (ii) will be satisfied is possible.
For states of the form $(p,q,m)$, the mark $m \in \{\top,\bot\}$ determines whether there are \emph{pending visits to rejecting states} in the copy of $\mathcal{A}_\varphi$ for the dominant strategy, \ie, the second component $q$ of $(p,q,m)$. A pending visit to a rejecting state is one that is not yet matched by a visit to a rejecting state in the copy of $\mathcal{A}_\varphi$ for the alternative strategy. 
Thus,~$\vartheta$ defines that if a visit to a rejecting dominant state, that is not immediately matched with a rejecting alternative state, is encountered, the mark is set to $\bot$. As long as no rejecting alternative state is visited, the mark stays set to $\bot$. If a matching rejecting alternative state occurs, however, the mark is reset to $\top$.
States of $\ddACA{\varphi}$ marked with $\bot$ are then defined to be rejecting states, ensuring that a visit to a rejecting dominant state is not pending forever. {}

The ACA $\ddACA{\varphi}$ constructed from ACAs $\mathcal{A}_{\varphi}$ and $\mathcal{A}_{\neg\varphi}$ according to \Cref{def:aca_dd} is sound and complete in the sense that it recognizes whether or not a strategy~$s$ \iddominates another strategy~$t$ on an input sequence $\gamma \in (2^\inputs{i})^\omega$. That is, $\ddACA{\varphi}$ accepts the infinite word $\computation{s}{\gamma} \cup \primeSequence{\computation{t}{\gamma}\cap\outputs{i}}$ if, and only if, $\ddominatesSequence{s}{t}{\gamma}{\mathcal{A}_\varphi}$ holds for~$\mathcal{A}_{\varphi}$.
The main idea is that a run tree of~$\ddACA{\varphi}$ can be translated into a strategy for \dom in the \iddominance game and vice versa since, by construction, both define the existential choices in $\mathcal{A}_\varphi$ for~$s$ and the universal choices in $\mathcal{A}_\varphi$ for~$t$. 
Thus, for a run tree of $\ddACA{\varphi}$ whose branches all visit only finitely many rejecting states, there exists a strategy for \dom in the \iddominance game that ensures that for all consistent plays either $\computation{t}{\gamma}\not\models\varphi$ holds or, by construction of~$\vartheta$ and~$\ddACAComponent{\delta}$, every rejecting dominant state is matched by a rejecting alternative state eventually.
Similarly, a winning strategy for \dom can be translated into a run tree $r$ of $\ddACA{\varphi}$. If $\computation{t}{\gamma}\models\varphi$ holds, then $r$ visits only finitely many rejecting states since only finitely many rejecting dominant states are visited. If $\computation{t}{\gamma}\not\models\varphi$ holds, then there exists a run tree, namely one entering the part of $\ddACA{\varphi}$ that coincides with $\mathcal{A}_{\neg\varphi}$, whose branches all visit only finitely many rejecting states.
The proof is given in~\Cref{app:automaton_construction}.

\begin{lemma}\label{lem:soundness_completeness_sequence}
Let $\varphi$ be an LTL formula.
Let $\mathcal{A}_{\varphi}$ and $\mathcal{A}_{\neg\varphi}$ be ACAs with $\Lang{\varphi} = \Lang{\mathcal{A}_\varphi}$ and $\Lang{\neg\varphi} = \Lang{\mathcal{A}_{\neg\varphi}}$. Let $\ddACA{\varphi}$ be the ACA constructed from~$\mathcal{A}_{\varphi}$ and $\mathcal{A}_{\neg\varphi}$ according to \Cref{def:aca_dd}. Let $s$ and $t$ be strategies for process $p_i$. Let $\gamma \in (2^\inputs{i})^\omega\!$. Let $\sigma \in (2^{\variables{i} \cup \primedOutputs{i}})^\omega\!$ with $\sigma := \computation{s}{\gamma} \cup \primeSequence{\computation{t}{\gamma}\cap\outputs{i}}$. Then, $\ddACA{\varphi}$ accepts $\sigma$ if, and only if, $\ddominatesSequence{s}{t}{\gamma}{\mathcal{A}_\varphi}$ holds.
\end{lemma}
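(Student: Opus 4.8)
The statement is a biconditional relating acceptance of $\sigma := \computation{s}{\gamma} \cup \primeSequence{\computation{t}{\gamma}\cap\outputs{i}}$ by $\ddACA{\varphi}$ to \dom winning the \iddominance game $(\mathcal{A}_\varphi, \computation{t}{\gamma}, \computation{s}{\gamma})$. The plan is to exploit the characterization mentioned just before \Cref{lem:soundness_completeness_sequence}: $\ddominatesSequence{s}{t}{\gamma}{\mathcal{A}_\varphi}$ holds iff either (i) $\computation{t}{\gamma}\not\models\varphi$, or (ii) \dom has a winning strategy in the \iddominance game all of whose consistent plays visit only finitely many rejecting dominant states. Correspondingly, by construction $\ddACA{\varphi}$ has two parts: the $Q^c$-part, which is exactly $\mathcal{A}_{\neg\varphi}$ and whose acceptance captures case (i); and the $(Q\times Q\times\{\top,\bot\})$-part built as a (dualized) product of two copies of $\mathcal{A}_\varphi$, whose accepting run trees should capture case (ii). The initial state $(q_0,q_0,\top)$ offers a nondeterministic choice (the $\lor$ in $\ddACAComponent{\delta}((q_0,q_0,\top),\tilde\iota)$) between the two parts, so acceptance of $\sigma$ reduces to: $\mathcal{A}_{\neg\varphi}$ accepts $\computation{t}{\gamma}$, or the product part has an accepting run tree on $\sigma$.

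\textbf{Key steps.} First I would dispose of case (i): since $\ddACAComponent{\delta}(q_c,\tilde\iota) = \delta^c(q_c,\iota')$ with $\iota' = \unprime{\tilde\iota\cap\primedVariables{i}}$, and $\sigma$ restricted-and-unprimed to $\primedVariables{i}$ is exactly $\computation{t}{\gamma}$, the $Q^c$-subautomaton of $\ddACA{\varphi}$ accepts $\sigma$ iff $\mathcal{A}_{\neg\varphi}$ accepts $\computation{t}{\gamma}$ iff $\computation{t}{\gamma}\models\neg\varphi$ iff $\computation{t}{\gamma}\not\models\varphi$; and I should note that by the preceding observation this already entails $\ddominatesSequence{s}{t}{\gamma}{\mathcal{A}_\varphi}$. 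Second, for the product part I would establish a bijection between run trees of the $(Q\times Q\times\{\top,\bot\})$-part of $\ddACA{\varphi}$ on $\sigma$ and strategies for \dom in the \iddominance game $(\mathcal{A}_\varphi,\computation{t}{\gamma},\computation{s}{\gamma})$ together with the plays they induce. The transition $\ddACAComponent{\delta}((p,q,m),\tilde\iota) = \bigwedge_{c\in\delta(p,\iota')}\bigvee_{c'\in\delta(q,\iota)}\bigwedge_{q'\in c'}\bigvee_{p'\in c}\successor{p'}{q'}{m}$ in DNF corresponds move-for-move to one round of the game: the outer $\bigwedge$ over $c$ is \alt choosing $c\in\delta(p,\sigma^t_j)$ (universal/uncontrollable in the automaton), the $\bigvee$ over $c'$ is \dom choosing $c'\in\delta(q,\sigma^s_j)$, the $\bigwedge$ over $q'\in c'$ is \alt choosing $q'$, the $\bigvee$ over $p'\in c$ is \dom choosing $p'$. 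A run tree's branching over the universal disjuncts is exactly quantifying over \alt's choices, and the disjunct picked inside is exactly \dom's strategic response; so a run tree on $\sigma$ is precisely a \dom-strategy, and infinite branches of the tree are precisely the plays consistent with that strategy. Third, I would match the acceptance conditions along a branch/play: the mark component, via $\vartheta$, is $\bot$ exactly when there is a pending (unmatched) rejecting dominant state, and $\top$-marked and $Q^c$ states are non-rejecting while $\bot$-marked states are rejecting. Hence a branch visits only finitely many rejecting ($\bot$) states iff eventually no rejecting dominant state is left pending iff every rejecting dominant state on the corresponding play is matched eventually by a rejecting alternative state — which is exactly \dom's winning condition $\win$ on that play. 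Therefore the product part has an accepting run tree iff \dom has a winning strategy all of whose consistent plays visit only finitely many rejecting dominant states, i.e.\ case (ii). Combining (i) and (ii) with the preceding observation yields: $\ddACA{\varphi}$ accepts $\sigma$ iff $\ddominatesSequence{s}{t}{\gamma}{\mathcal{A}_\varphi}$.

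\textbf{Main obstacle.} The delicate point is the mark-tracking argument in the third step: one must carefully verify that the case split in $\vartheta$ correctly maintains the invariant ``$m=\bot$ iff some rejecting dominant state seen so far is still unmatched.'' In particular, the first clause of $\vartheta$ ($p\notin F$, $q\in F$, $m=\top$) handles a freshly-unmatched rejecting dominant state; the second ($p\notin F$, $m=\bot$) keeps the debt open while no rejecting alternative state appears; the ``otherwise'' case — which fires precisely when $p\in F$ — resets to $\top$, modeling the match. Subtleties include: (a) simultaneous visits, where $q\in F$ and $p\in F$ in the same round, which the ``otherwise'' clause treats as immediately matched (consistent with $\win$, where $j'=j$ is allowed); and (b) confirming that the game's $\win$ condition ``$\forall j.\ \projq{\rho_j}\in F \rightarrow \exists j'\geq j.\ \projp{\rho_{j'}}\in F$'' is equivalent, along a single play, to ``only finitely many $\bot$-marked states,'' which requires arguing that an unbounded sequence of rejecting dominant states each eventually matched still yields only finitely many $\bot$ positions only if the matches keep catching up — here one uses that once the mark returns to $\top$ the debt is cleared, so infinitely many $\bot$'s means some rejecting dominant state is never matched, and conversely. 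I expect the bijection between run trees and \dom-strategies (second step) to be essentially bookkeeping once the DNF-to-game-round correspondence is spelled out, but the mark/acceptance equivalence warrants the most care, and I would present it as an explicit inductive invariant on prefixes of branches/plays.
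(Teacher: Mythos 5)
Your overall architecture is the same as the paper's: reduce to the disjunctive characterization (case (i) handled by the $Q^c$-part, case (ii) by the product part), establish the run-tree/\dom-strategy correspondence via the $\bigwedge\bigvee\bigwedge\bigvee$ shape of $\ddACAComponent{\delta}$, and match acceptance of branches against the winning condition through the mark $m$. However, there is a genuine error in exactly the step you flag as delicate. You assert that, along a single play/branch, ``only finitely many $\bot$-marked states'' is \emph{equivalent} to the winning condition $\win$, and in your obstacle paragraph you justify the hard direction by claiming that ``infinitely many $\bot$'s means some rejecting dominant state is never matched.'' That claim is false: consider a play in which rejecting dominant states occur at times $1, 10, 100, \dots$ and each is matched by a rejecting alternative state, but only after an ever-growing delay (say at times $5, 50, 500, \dots$). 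Every rejecting dominant state is eventually matched, so $\rho \in \win$, yet the mark is $\bot$ on segments of unbounded length and hence at infinitely many positions. So $\win$ does \emph{not} imply finitely many $\bot$'s, and your completeness direction breaks if you rely on that implication: from a winning \dom-strategy alone you cannot conclude that the induced run tree is accepting.

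The correct play-level statements, which is what the paper uses, are asymmetric: (a) finitely many $\bot$'s implies $\win$ (each $\bot$-streak is finite and ends exactly when a rejecting alternative state occurs, so every pending rejecting dominant state is matched); and (b) $\win$ \emph{together with} finitely many rejecting dominant states implies finitely many $\bot$'s (after the last rejecting dominant state the mark can enter $\bot$ at most once more and is cleared by the match guaranteed by $\win$). Implication (a) suffices for soundness. For completeness this is precisely why the disjunctive characterization is needed in its strengthened form~--~case (ii) is not merely ``\dom wins'' but ``\dom has a winning strategy all of whose consistent plays visit only finitely many rejecting \emph{dominant} states''~--~and why case (i) must absorb the situations where that finiteness fails. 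You do quote case (ii) correctly, so the repair is local: replace your claimed biconditional with (a) and (b) and route the completeness argument through the finiteness clause of (ii) rather than through $\win$ alone. (As a further minor point, note that the converse of (b) also fails~--~a play in which rejecting dominant and alternative states always coincide has no $\bot$'s but infinitely many rejecting dominant states~--~so the branch-level condition is not equivalent to case (ii) play by play either; the equivalence only holds at the level of existence of an accepting run tree versus existence of a strategy as in case (ii), mediated by the characterization lemma.)
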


Thus, $\ddACA{\varphi}$ determines whether or not a strategy~$s$ \iddominates a strategy~$t$.
However,~$\ddACA{\varphi}$ cannot directly be used for synthesizing \iddominant strategies since~(i)~$\ddACA{\varphi}$ is an alternating automaton, while we require a universal automaton for bounded synthesis, and~(ii)~$\ddACA{\varphi}$ considers \emph{one particular} alternative strategy $t$. For recognizing \iddominance, we need to consider \emph{all} alternative strategies, though.
Thus, we describe in the remainder of this section how $\ddACA{\varphi}$ can be translated into a UCA for bounded synthesis.


\subsection{Construction of the UCA \texorpdfstring{\boldmath $\ddUCA{\varphi}$}{A{dd}}}

Next, we translate the ACA~$\ddACA{\varphi}$ constructed in the previous subsection to a UCA~$\ddUCA{\varphi}$ that can be used for synthesizing \iddominant strategies. As outlined before, we need to~(i)~translate~$\ddACA{\varphi}$ into a UCA, and~(ii)~ensure that the automaton considers \emph{all} alternative strategies instead of a particular one. Thus, we proceed in two steps.
First, we translate~$\ddACA{\varphi}$ into an equivalent UCA~$\ddUCAnonProj{\varphi}$. We utilize the Miyano-Hayashi algorithm~\cite{MiyanoH84} for translating ABAs into NBAs. Since we are considering co-Büchi automata instead of Büchi automata, we further make use of the duality of nondeterministic and universal branching and the Büchi and co-Büchi acceptance conditions. 
The translation introduces an exponential blow-up in the number of states.
For the full construction, we refer to~\Cref{app:automaton_construction}.

\begin{lemma}\label{thm:miyano-hayashi_universal}
Let $\mathcal{A}$ be an alternating co-Büchi automaton with $m$ states. There exists a universal co-Büchi automaton $\mathcal{B}$ with $\mathcal{O}(2^m)$ states such that $\mathcal{L}(\mathcal{A}) = \mathcal{L}(\mathcal{B})$ holds.
\end{lemma}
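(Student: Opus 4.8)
The plan is to invoke the classical Miyano--Hayashi construction, which converts an alternating B\"uchi automaton (ABA) into a nondeterministic B\"uchi automaton (NBA) of exponential size, and then dualize. First I would recall the Miyano--Hayashi construction in the B\"uchi setting: given an ABA with state set $Q$, a run tree is accepting iff every branch visits accepting states infinitely often; the NBA tracks, via a subset construction, the set of states reachable in the current level of a run DAG together with an ``obligation set'' of states that still owe a visit to an accepting state. Concretely, the NBA state space is $2^Q \times 2^Q$, transitions guess, for each state in the first component, a satisfying assignment of its transition formula and take the union, maintaining the obligation set; when the obligation set becomes empty it is reset and this reset is the B\"uchi acceptance event. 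This yields an NBA with $\mathcal{O}(2^{|Q|})$ states recognizing the same language, and is a standard result (e.g.\ \cite{MiyanoH84}).

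Next I would apply duality twice. Let $\mathcal{A}$ be the given alternating co-B\"uchi automaton with $m$ states. Its dual $\overline{\mathcal{A}}$ --- obtained by swapping $\land$ and $\lor$ in every transition formula (and $\top \leftrightarrow \bot$) and keeping the same state set $Q$ with $F$ as accepting set --- is an alternating B\"uchi automaton with $\Lang{\overline{\mathcal{A}}} = \overline{\Lang{\mathcal{A}}}$, the complement language; this is the standard fact that dualizing an alternating automaton complements its language while swapping the acceptance condition between B\"uchi and co-B\"uchi. Applying Miyano--Hayashi to $\overline{\mathcal{A}}$ produces an NBA $\mathcal{C}$ with $\mathcal{O}(2^m)$ states and $\Lang{\mathcal{C}} = \Lang{\overline{\mathcal{A}}} = \overline{\Lang{\mathcal{A}}}$. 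Now dualize $\mathcal{C}$: since an NBA is a special case of an alternating B\"uchi automaton (all transition formulas are disjunctions of single states), its dual $\mathcal{B} := \overline{\mathcal{C}}$ is a universal co-B\"uchi automaton with the same state set, hence $\mathcal{O}(2^m)$ states, and $\Lang{\mathcal{B}} = \overline{\Lang{\mathcal{C}}} = \overline{\overline{\Lang{\mathcal{A}}}} = \Lang{\mathcal{A}}$, as desired.

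The bulk of the argument is thus a citation to the Miyano--Hayashi algorithm plus two applications of the alternation-duality lemma; the only genuinely new bookkeeping is checking that the duality steps preserve the state count exactly (they do, since dualization never changes the state set) and that the two acceptance-condition flips compose correctly (co-B\"uchi $\to$ B\"uchi $\to$ B\"uchi $\to$ co-B\"uchi, where the middle step is the Miyano--Hayashi translation which stays within B\"uchi). I would also note that this is exactly the route sketched in the paper's overview (\Cref{fig:overview}), where the arrow from $\ddACA{\varphi}$ to $\ddUCAnonProj{\varphi}$ is labelled ``Miyano--Hayashi''. The main --- and essentially only --- obstacle is that the cited Miyano--Hayashi result is usually stated for the B\"uchi-to-B\"uchi direction, so one must be careful to wrap it correctly between the two dualizations rather than trying to re-derive a co-B\"uchi-native version; once the duality lemma is in hand this is routine, and the exponential bound $\mathcal{O}(2^m)$ follows immediately from the $2^Q \times 2^Q$ state space of the Miyano--Hayashi construction.
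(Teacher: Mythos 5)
Your proposal is correct and follows exactly the paper's own argument: dualize $\mathcal{A}$ to an ABA accepting the complement, apply Miyano--Hayashi to obtain an exponential-size NBA, and dualize once more to get the universal co-Büchi automaton, with both dualizations preserving the state set. The only cosmetic difference is that you spell out the Miyano--Hayashi obligation-set construction, which the paper simply cites as a black box.
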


Next, we construct the desired universal co-Büchi automaton $\ddUCA{\varphi}$ that recognizes \iddominant strategies for $\mathcal{A}_\varphi$. For this sake, we need to adapt $\ddUCAnonProj{\varphi}$ to consider \emph{all} alternative strategies instead of a particular one.
Similar to the automaton construction for synthesizing remorsefree dominant strategies~\cite{DammF14,FinkbeinerP20}, we utilize \emph{universal projection}:

\begin{definition}[Universal Projection]
Let $\mathcal{A} = (Q, Q_0, \delta, F)$ be a UCA over alphabet~$\Sigma$ and let $X \subset \Sigma$. The \emph{universal projection of $\mathcal{A}$ to $X$} is the UCA $\pi_X(\mathcal{A}) = (Q, Q_0, \pi_X(\delta), F)$ over alphabet~$X$, where $\pi_X(\delta) = \{ (q, a, q') \in Q \times 2^X \times Q \mid \exists b \in 2^{\Sigma \setminus X}\!.~ (q, a \cup b, q') \in \delta \}$.
\end{definition}

The projected automaton $\pi_X(\mathcal{A})$ for a UCA $\mathcal{A}$ over $\Sigma$ and a set $X \subset \Sigma$ contains the transitions of $\mathcal{A}$ for \emph{all} possible valuations of the variables in $\Sigma \setminus X$. 
Hence, for a sequence $\sigma \in (2^X)^\omega$, all runs of $\mathcal{A}$ on sequences extending~$\sigma$ with some valuation of the variables in~$\Sigma\setminus X$ are also runs of $\pi_X(\mathcal{A})$.
Since both $\mathcal{A}$ and $\pi_X(\mathcal{A})$ are universal automata,~$\pi_X(\mathcal{A})$ thus accepts a sequence $\sigma \in (2^X)^\omega$ if, and only if, $\mathcal{A}$ accepts all sequences extending~$\sigma$ with some valuation of the variables in $\Sigma \setminus X$. The proof is given in~\Cref{app:automaton_construction}.

\begin{lemma}\label{lem:universal_projection}
Let $\mathcal{A}$ be a UCA over alphabet $\Sigma$ and let $X \subset \Sigma$. Let~$\sigma \in (2^X)^\omega$. Then,~$\pi_X(\mathcal{A})$ accepts $\sigma$ if, and only if $\mathcal{A}$ accepts all $\sigma' \in (2^\Sigma)^\omega$ with $\sigma' \cap X = \sigma$.
\end{lemma}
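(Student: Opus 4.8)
The plan is to prove both directions of the biconditional directly from the definitions of universal acceptance and of the projected transition relation $\pi_X(\delta)$. The key observation is that the state set, initial states, and rejecting set of $\pi_X(\mathcal{A})$ coincide with those of $\mathcal{A}$; only the alphabet and transition relation change. So the entire argument reduces to a correspondence between runs of $\mathcal{A}$ on sequences $\sigma' \in (2^\Sigma)^\omega$ with $\sigma' \cap X = \sigma$ and runs of $\pi_X(\mathcal{A})$ on $\sigma$, and this correspondence must preserve the co-B\"uchi acceptance condition (visiting $F$ only finitely often), which is immediate since the run is literally the same sequence of states.

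First I would establish the run correspondence. Fix $\sigma \in (2^X)^\omega$. Claim: a state sequence $q_0 q_1 q_2 \ldots \in Q^\omega$ is a run of $\pi_X(\mathcal{A})$ on $\sigma$ if and only if there exists $\sigma' \in (2^\Sigma)^\omega$ with $\sigma' \cap X = \sigma$ such that $q_0 q_1 q_2 \ldots$ is a run of $\mathcal{A}$ on $\sigma'$. The ``if'' direction is immediate: if $(q_i, \sigma'_i, q_{i+1}) \in \delta$ and $\sigma'_i \cap X = \sigma_i$, then writing $b_i := \sigma'_i \setminus X \in 2^{\Sigma \setminus X}$ we have $\sigma'_i = \sigma_i \cup b_i$, so by definition $(q_i, \sigma_i, q_{i+1}) \in \pi_X(\delta)$. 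For the ``only if'' direction, from a run of $\pi_X(\mathcal{A})$ on $\sigma$ we know for each $i$ that $(q_i, \sigma_i, q_{i+1}) \in \pi_X(\delta)$, so there is a witness $b_i \in 2^{\Sigma \setminus X}$ with $(q_i, \sigma_i \cup b_i, q_{i+1}) \in \delta$; defining $\sigma' := (\sigma_0 \cup b_0)(\sigma_1 \cup b_1) \ldots$ gives $\sigma' \cap X = \sigma$ and a run of $\mathcal{A}$ on $\sigma'$. Since both automata are universal co-B\"uchi with the same $q_0$ and $F$, a run is accepting in one iff the corresponding run is accepting in the other (it is the same state sequence).

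Then the lemma follows by pure logic. For the forward direction, suppose $\pi_X(\mathcal{A})$ accepts $\sigma$, i.e.\ all runs of $\pi_X(\mathcal{A})$ on $\sigma$ are accepting. Let $\sigma' \in (2^\Sigma)^\omega$ with $\sigma' \cap X = \sigma$, and let $q_0 q_1 \ldots$ be any run of $\mathcal{A}$ on $\sigma'$; by the ``if'' part of the claim it is a run of $\pi_X(\mathcal{A})$ on $\sigma$, hence accepting, hence $\mathcal{A}$ accepts $\sigma'$. For the converse, suppose $\mathcal{A}$ accepts every $\sigma'$ with $\sigma' \cap X = \sigma$, and let $q_0 q_1 \ldots$ be any run of $\pi_X(\mathcal{A})$ on $\sigma$; by the ``only if'' part there is $\sigma'$ with $\sigma' \cap X = \sigma$ on which this is a run of $\mathcal{A}$, and since $\mathcal{A}$ accepts $\sigma'$ all its runs are accepting, so $q_0 q_1 \ldots$ is accepting; as the run was arbitrary, $\pi_X(\mathcal{A})$ accepts $\sigma$.

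There is no real obstacle here; the only point requiring a little care is making sure the witness valuations $b_i$ can be chosen independently at each position (they can, since $\pi_X(\delta)$ quantifies existentially over $b$ per transition, and an infinite run only needs one witness per step), and noting that every infinite word over $X$ admits at least one extension to $2^\Sigma$ and every state sequence respecting $\pi_X(\delta)$ admits a matching $\sigma'$, so neither quantifier in the biconditional is vacuous in a problematic way. I would keep the write-up short, stating the run-correspondence claim as the single technical step and then deriving both implications from it.
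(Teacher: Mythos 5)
Your proposal is correct and follows essentially the same route as the paper: both rest on the run correspondence between $\pi_X(\mathcal{A})$ on $\sigma$ and $\mathcal{A}$ on extensions $\sigma'$ with $\sigma' \cap X = \sigma$, and then transfer the universal co-Büchi acceptance condition across that correspondence (the paper phrases the two implications as proofs by contradiction, you argue them directly, which is an immaterial difference). Your explicit construction of the per-step witnesses $b_i$ spells out what the paper leaves as ``by construction of $\pi_X$.''
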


We utilize this property to obtain a universal co-Büchi automaton~$\ddUCA{\varphi}$ from $\ddUCAnonProj{\varphi}$ that considers \emph{all} possible alternative strategies instead of only a particular one: we project to the unprimed variables of $p_i$, \ie, to $\variables{i}$, thereby quantifying universally over the alternative strategies. We thus obtain a UCA that recognizes \iddominant strategies as follows:

\begin{definition}[\iDDominance Automaton]\label{def:UCA_construction_delayed_dominance}
Let $\varphi$ be an LTL formula. Let $\mathcal{A}_\varphi$, $\mathcal{A}_{\neg\varphi}$ be ACAs with $\Lang{\mathcal{A}_\varphi} = \Lang{\varphi}$, $\Lang{\mathcal{A}_{\neg\varphi}} = \Lang{\neg\varphi}$. Let $\ddACA{\varphi}$ be the ACA constructed from~$\mathcal{A}_\varphi$ and~$\mathcal{A}_{\neg\varphi}$ according to \Cref{def:aca_dd}. Let $\ddUCAnonProj{\varphi}$ be a UCA with $\mathcal{L}(\ddACA{\varphi}) = \mathcal{L}(\ddUCAnonProj{\varphi})$. 
The \emph{\iddominance UCA} $\ddUCA{\varphi}$ for $\mathcal{A}_\varphi$ and process $p_i$ is then given by $\ddUCA{\varphi} \coloneq \pi_{\variables{i}}(\ddUCAnonProj{\varphi})$.
\end{definition}

Utilizing the previous results, we can now show soundness and completeness of the \iddominance universal co-Büchi automaton $\ddUCA{\varphi}$:
from \Cref{lem:soundness_completeness_sequence}, we know that $\ddACA{\varphi}$ recognizes whether or not a strategy $s$ for a process $p_i$ \iddominates another strategy~$t$ for $p_i$ for~$\mathcal{A}_\varphi$ on an input sequence $\gamma \in (2^\inputs{i})^\omega$. By \Cref{thm:miyano-hayashi_universal}, we have $\Lang{\ddUCAnonProj{\varphi}} = \Lang{\ddACA{\varphi}}$. With the definition of the \iddominance UCA, namely $\ddUCA{\varphi} \coloneq \pi_{\variables{i}}(\ddUCAnonProj{\varphi})$, as well as with \Cref{lem:universal_projection}, it then follows that $\ddUCA{\varphi}$ accepts $\computation{s}{\gamma}$ for all input sequences $\gamma \in (2^\inputs{i})^\omega$ if, and only if, $s$ is \iddominant for $\mathcal{A}_\varphi$. For the formal proof, we refer to~\Cref{app:automaton_construction}.

\begin{theorem}[Soundness and Completeness]\label{thm:soundness_completeness_ddUCA}
Let $\varphi$ be an LTL formula and let $\mathcal{A}_\varphi$ be an ACA with $\mathcal{L}(\varphi) = \mathcal{L}(\mathcal{A}_\varphi)$. Let $\ddUCA{\varphi}$ be the \iddominance UCA for $\mathcal{A}_\varphi$ as constructed in \Cref{def:UCA_construction_delayed_dominance}. Let $s$ be a process strategy for process~$p_i$. Then $\ddUCA{\varphi}$ accepts $\computation{s}{\gamma}$ for all input sequences $\gamma \in (2^\inputs{i})^\omega$, if, and only if $s$ is \iddominant for~$\mathcal{A}_\varphi$ and $p_i$.
\end{theorem}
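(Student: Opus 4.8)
The plan is to chain together the three results already established: \Cref{lem:soundness_completeness_sequence} (soundness and completeness of $\ddACA{\varphi}$ for single alternative strategies), \Cref{thm:miyano-hayashi_universal} (equivalence of $\ddUCAnonProj{\varphi}$ and $\ddACA{\varphi}$), and \Cref{lem:universal_projection} (characterization of the universally projected automaton). First I would unfold the definitions: $s$ is \iddominant for $\mathcal{A}_\varphi$ iff $\ddominatesNotation{s}{t}{\mathcal{A}_\varphi}$ holds for every alternative strategy $t$ for $p_i$, which in turn means $\ddominatesSequence{s}{t}{\gamma}{\mathcal{A}_\varphi}$ holds for every $t$ and every $\gamma \in (2^\inputs{i})^\omega$. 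So the goal is to show that $\ddUCA{\varphi}$ accepts $\computation{s}{\gamma}$ for all $\gamma$ iff $\ddominatesSequence{s}{t}{\gamma}{\mathcal{A}_\varphi}$ holds for all $t$ and all $\gamma$.

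Next I would fix an arbitrary input sequence $\gamma \in (2^\inputs{i})^\omega$ and argue the equivalence pointwise in $\gamma$. By \Cref{lem:soundness_completeness_sequence} combined with \Cref{thm:miyano-hayashi_universal}, for any strategy $t$ the automaton $\ddUCAnonProj{\varphi}$ accepts $\computation{s}{\gamma} \cup \primeSequence{\computation{t}{\gamma}\cap\outputs{i}}$ iff $\ddominatesSequence{s}{t}{\gamma}{\mathcal{A}_\varphi}$ holds. The crucial observation is that the words of the form $\computation{s}{\gamma} \cup \primeSequence{\computation{t}{\gamma}\cap\outputs{i}}$, as $t$ ranges over all strategies for $p_i$, are precisely the words $\sigma' \in (2^{\variables{i}\cup\primedOutputs{i}})^\omega$ with $\sigma' \cap \variables{i} = \computation{s}{\gamma}$ — one needs that the primed-output part $\primeSequence{\computation{t}{\gamma}\cap\outputs{i}}$ can realize every sequence in $(2^{\primedOutputs{i}})^\omega$ (taking $t$ to be the appropriate Moore machine) and, conversely, that any such primed sequence unprimes to the output behavior of some strategy $t$ on $\gamma$, since a strategy is an arbitrary function $(2^{\inputs{i}})^* \to 2^{\outputs{i}}$. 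Having established this bijection between alternative strategies and valid extensions, \Cref{lem:universal_projection} applied with $\Sigma = \variables{i}\cup\primedOutputs{i}$ and $X = \variables{i}$ gives that $\ddUCA{\varphi} = \pi_{\variables{i}}(\ddUCAnonProj{\varphi})$ accepts $\computation{s}{\gamma}$ iff $\ddUCAnonProj{\varphi}$ accepts all such extensions, i.e.\ iff $\ddominatesSequence{s}{t}{\gamma}{\mathcal{A}_\varphi}$ holds for all $t$. Finally, universally quantifying over $\gamma$ yields exactly the statement that $\ddUCA{\varphi}$ accepts $\computation{s}{\gamma}$ for all $\gamma$ iff $s$ is \iddominant for $\mathcal{A}_\varphi$ and $p_i$.

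The main obstacle I expect is the careful bookkeeping around the alphabet translation and the strategy-to-extension correspondence: one has to be precise that $\computation{s}{\gamma}$ ranges over $2^{\variables{i}}$ (not just outputs), that priming only touches output variables so that $\sigma' \cap \inputs{i}$ is automatically $\gamma$, and that every valuation sequence over $\primedOutputs{i}$ genuinely arises from some strategy $t$ — the latter uses that strategies may depend on the full input history and hence on $\gamma$, so there is no obstruction. A minor subtlety is that \Cref{lem:soundness_completeness_sequence} is stated for the alternating automaton $\ddACA{\varphi}$ whereas \Cref{lem:universal_projection} needs a universal automaton; this is bridged by \Cref{thm:miyano-hayashi_universal} and the fact that language equivalence is preserved. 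None of these steps requires new ideas beyond the lemmas already in hand, so the proof is essentially a composition argument.
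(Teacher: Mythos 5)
Your proposal is correct and follows essentially the same route as the paper's proof: it chains \Cref{lem:soundness_completeness_sequence}, \Cref{thm:miyano-hayashi_universal}, and \Cref{lem:universal_projection}, with the same key observation that the primed extensions of $\computation{s}{\gamma}$ are exactly the words $\computation{s}{\gamma} \cup \primeSequence{\computation{t}{\gamma}\cap\outputs{i}}$ as $t$ ranges over all alternative strategies. The only difference is presentational — you argue pointwise in $\gamma$ whereas the paper phrases the correspondence via sets of sequences over all $\gamma$ at once — and both rely on the same (unelaborated) fact that every output sequence is realized by some strategy.
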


Furthermore, $\ddUCA{\varphi}$ is of convenient size: for an LTL formula $\varphi$, there is an ACA $\mathcal{A}_\varphi$ with $\Lang{\mathcal{A}_\varphi} = \Lang{\varphi}$ such that $\ddUCA{\varphi}$ constructed from~$\mathcal{A}_\varphi$ is of exponential size in the squared length of the formula $\varphi$.
This follows from \Cref{thm:miyano-hayashi_universal} and from the facts that~(i)~$\mathcal{A}_\varphi$ and~$\mathcal{A}_{\neg\varphi}$ both are of linear size in the length of the LTL formula $\varphi$, and~(ii)~universal projection preserves the automaton size.
The proof is given in~\Cref{app:automaton_construction}.

\begin{lemma}\label{thm:automaton_size}
Let $\varphi$ be an LTL formula and let $s$ be a strategy for process $p_i$. There is an ACA $\mathcal{A}_\varphi$ of size $\mathcal{O}(|\varphi|)$ with $\Lang{\mathcal{A}_\varphi} = \Lang{\varphi}$ and a UCA $\ddUCA{\varphi}$ of size $\mathcal{O}(2^{|\varphi|^2})$ such that~$\ddUCA{\varphi}$ accepts $\computation{s}{\gamma}$ for all $\gamma \in (2^\inputs{i})^\omega$, if, and only if, $s$ is \iddominant for~$\mathcal{A}_\varphi$ and $p_i$.
\end{lemma}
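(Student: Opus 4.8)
The plan is to assemble the size bound by tracking the blow-up through each stage of the construction depicted in \Cref{fig:overview}, starting from suitably small alternating co-Büchi automata for $\varphi$ and $\neg\varphi$. First I would invoke the standard translation of LTL formulas into alternating Büchi automata together with the Büchi/co-Büchi and nondeterministic/universal dualities (as referenced in \Cref{app:preliminaries}) to obtain ACAs $\mathcal{A}_\varphi$ and $\mathcal{A}_{\neg\varphi}$ with $\Lang{\mathcal{A}_\varphi} = \Lang{\varphi}$ and $\Lang{\mathcal{A}_{\neg\varphi}} = \Lang{\neg\varphi}$, each of size $\mathcal{O}(|\varphi|)$; here one uses that $|\neg\varphi| = |\varphi| + 1 = \mathcal{O}(|\varphi|)$. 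Write $k = |\varphi|$, so $\mathcal{A}_\varphi$ has $\mathcal{O}(k)$ states, say at most $c\cdot k$ states for a constant $c$, and likewise for $\mathcal{A}_{\neg\varphi}$.

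Next I would bound the size of $\ddACA{\varphi}$ from \Cref{def:aca_dd}: its state set is $(Q \times Q \times \{\top,\bot\}) \cup Q^c$, so it has $2|Q|^2 + |Q^c| = \mathcal{O}(k^2)$ states. This is the only place a genuine \emph{squaring} enters, and it comes from taking the product of two copies of $\mathcal{A}_\varphi$ (one for the dominant, one for the alternative strategy) in the \iddominance-game part of the construction. Then I would apply \Cref{thm:miyano-hayashi_universal} to $\ddACA{\varphi}$, which has $\mathcal{O}(k^2)$ states, to obtain the UCA $\ddUCAnonProj{\varphi}$ with $\mathcal{O}(2^{\mathcal{O}(k^2)}) = \mathcal{O}(2^{k^2})$ states (absorbing the constant in the exponent into the $\mathcal{O}$-notation of the exponent's argument, which is the standard convention here and matches the statement $\mathcal{O}(2^{|\varphi|^2})$). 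Finally, the \iddominance UCA is $\ddUCA{\varphi} = \pi_{\variables{i}}(\ddUCAnonProj{\varphi})$, and by the definition of universal projection the state set is unchanged, so $\ddUCA{\varphi}$ also has $\mathcal{O}(2^{k^2})$ states. Correctness of the biconditional — that $\ddUCA{\varphi}$ accepts $\computation{s}{\gamma}$ for all $\gamma \in (2^\inputs{i})^\omega$ iff $s$ is \iddominant for $\mathcal{A}_\varphi$ and $p_i$ — is exactly \Cref{thm:soundness_completeness_ddUCA}, so nothing new is needed there.

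The argument is essentially bookkeeping, so there is no deep obstacle; the only point requiring a little care is making sure the constants are handled honestly when passing from $\mathcal{O}(k^2)$ states in $\ddACA{\varphi}$ to $\mathcal{O}(2^{k^2})$ states after Miyano--Hayashi. Concretely, if $\ddACA{\varphi}$ has at most $a k^2 + b$ states then $\ddUCAnonProj{\varphi}$ has $\mathcal{O}(2^{a k^2 + b}) = \mathcal{O}(2^{k^2})$ states, using $2^{a k^2 + b} \le (2^{b}) \cdot (2^{k^2})^{a} \le 2^{b} \cdot 2^{a k^2}$ and then noting $a k^2 \le k^2$ fails for $a>1$ — so one should instead state the bound as $2^{\mathcal{O}(|\varphi|^2)}$, or absorb the constant by observing $2^{a k^2} = \mathcal{O}(2^{k^2})$ only up to replacing $k$ by a constant multiple, i.e. the honest statement is single-exponential in the squared length, matching the abstract's phrasing. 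I would therefore phrase the final bound as $\mathcal{O}(2^{|\varphi|^2})$ in the sense of $2^{\mathcal{O}(|\varphi|^2)}$ and flag this in one sentence, then conclude by combining the chain of size bounds with \Cref{thm:soundness_completeness_ddUCA}.
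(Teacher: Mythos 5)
Your proposal follows exactly the same route as the paper's proof: linear-size ACAs for $\varphi$ and $\neg\varphi$, a quadratic-size product automaton $\ddACA{\varphi}$, the exponential blow-up from \Cref{thm:miyano-hayashi_universal}, the observation that universal projection preserves the state set, and \Cref{thm:soundness_completeness_ddUCA} for the biconditional. Your side remark about the hidden constants is fair --- the bound is honestly $2^{\mathcal{O}(|\varphi|^2)}$ rather than $\mathcal{O}(2^{|\varphi|^2})$ in the strict sense, a point the paper's own proof also glosses over --- but this does not affect the substance of the argument.
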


Since the automaton construction described in this section is sound and complete, the UCA~$\ddUCA{\varphi}$ can be used for synthesizing \iddominant strategies.
In fact, it immediately enables utilizing existing bounded synthesis tools for the synthesis of \iddominant strategies by replacing the UCA recognizing winning strategies with $\ddUCA{\varphi}$.

Note that, similar as for the UCA recognizing remorsefree dominance~\cite{DammF14}, $\ddUCA{\varphi}$ can be translated into a nondeterministic parity tree automaton with an exponential number of colors and a doubly-exponential number of states in the squared length of the formula. Synthesizing \iddominant strategies thus reduces to checking tree automata emptiness and, if the automaton is non-empty, to extracting a Moore machine representing a process strategy from an accepted tree.
This can be done in exponential time in the number of colors and in polynomial time in the number of states~\cite{Jurdzinski00}. With \Cref{thm:automaton_size}, a doubly-exponential complexity for synthesizing \iddominant strategies thus follows:

\begin{theorem}
Let $\varphi$ be an LTL formula and let $\mathcal{A}_\varphi$ be an ACA with $\Lang{\mathcal{A}_\varphi} = \Lang{\varphi}$. If there exists a \iddominant strategy for $\mathcal{A}_\varphi$, then it can be computed in \emph{2EXPTIME}.
\end{theorem}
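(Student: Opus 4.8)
The plan is to assemble the 2EXPTIME bound directly from the pieces already developed in the paper, treating the heavy lifting of automaton-size estimates as settled by \Cref{thm:automaton_size}. First I would recall the pipeline: given $\varphi$, pick the ACA $\mathcal{A}_\varphi$ of size $\mathcal{O}(|\varphi|)$ guaranteed by \Cref{thm:automaton_size}, and construct the \iddominance UCA $\ddUCA{\varphi}$ as in \Cref{def:UCA_construction_delayed_dominance}; by \Cref{thm:soundness_completeness_ddUCA} a process strategy $s$ is \iddominant for $\mathcal{A}_\varphi$ if and only if $\ddUCA{\varphi}$ accepts $\computation{s}{\gamma}$ for every input sequence $\gamma$, and by \Cref{thm:automaton_size} this UCA has $\mathcal{O}(2^{|\varphi|^2})$ states. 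So the synthesis problem for \iddominant strategies is exactly the problem of finding a Moore machine whose computations are all accepted by a fixed UCA of size $2^{\mathcal{O}(|\varphi|^2)}$ — i.e., the standard UCA-realizability problem, just with a larger input automaton.

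Next I would invoke the usual safraless route that the paper itself points to. Dualize/complement $\ddUCA{\varphi}$ into a nondeterministic Büchi (or, following the remark in the text, directly build the nondeterministic parity tree automaton over the relevant branching degree $2^{\outputs{i}}$) accepting exactly those strategy trees all of whose branches are accepted by $\ddUCA{\varphi}$. Converting a universal co-Büchi word automaton with $k$ states into an equivalent nondeterministic parity tree automaton recognizing the accepted trees yields a tree automaton with $2^{\mathcal{O}(k\log k)}$ states and $\mathcal{O}(k)$ colors (this is the standard Safra/Piterman-style construction, or the Kupferman–Vardi safraless bound); with $k = 2^{\mathcal{O}(|\varphi|^2)}$ this gives a tree automaton with doubly-exponentially many states in $|\varphi|^2$ and exponentially many colors in $|\varphi|^2$, matching the statement in the paragraph preceding the theorem. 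Synthesis then reduces to emptiness of this parity tree automaton, and on non-emptiness to extracting a finite-memory witness (a Moore machine), which is exactly a process strategy.

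Finally I would bound the running time. Emptiness of a parity tree automaton with $N$ states and $c$ colors can be decided in time polynomial in $N$ and exponential in $c$ — concretely $N^{\mathcal{O}(c)}$ via the algorithm of \cite{Jurdzinski00} cited in the text — and a strategy can be extracted within the same bound. Plugging in $N = 2^{2^{\mathcal{O}(|\varphi|^2)}}$ and $c = 2^{\mathcal{O}(|\varphi|^2)}$ gives running time $N^{\mathcal{O}(c)} = \bigl(2^{2^{\mathcal{O}(|\varphi|^2)}}\bigr)^{2^{\mathcal{O}(|\varphi|^2)}} = 2^{2^{\mathcal{O}(|\varphi|^2)}}$, which is doubly exponential in $|\varphi|$, i.e. in 2EXPTIME. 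When a \iddominant strategy exists the tree automaton is non-empty, so the extracted Moore machine is a \iddominant strategy, establishing the claim.

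The step I expect to be the main obstacle — or at least the only place requiring care rather than bookkeeping — is making the tree-automaton construction and its size/color bounds fully precise: one must fix the branching degree of the strategy trees ($2^{|\outputs{i}|}$, independent of $|\varphi|$) so it does not affect the asymptotics, be careful that it is the accepted trees (universal branching over inputs) rather than accepted words that the nondeterministic tree automaton must capture, and confirm that the dualization of a co-Büchi into a Büchi/parity condition composes cleanly with the Safra-style determinization so that the stated $2^{\mathcal{O}(k \log k)}$-states / $\mathcal{O}(k)$-colors bound really holds. Everything downstream — emptiness via \cite{Jurdzinski00} and the final double-exponential arithmetic — is then routine, and the correctness of the reduction is immediate from \Cref{thm:soundness_completeness_ddUCA}.
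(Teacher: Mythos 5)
Your proposal is correct and follows essentially the same route as the paper: reduce to emptiness of a nondeterministic parity tree automaton obtained from $\ddUCA{\varphi}$ (doubly-exponentially many states, exponentially many colors in $|\varphi|^2$) and solve it with the algorithm of Jurdzi\'nski, with correctness supplied by \Cref{thm:soundness_completeness_ddUCA} and the size bound by \Cref{thm:automaton_size}. The additional care you flag about branching degree and the word-to-tree passage is reasonable but goes beyond what the paper itself spells out.
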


It is well-known that synthesizing winning strategies is 2EXPTIME-complete~\cite{PnueliR89}. 
Since there exists a UCA of exponential size in the length of the formula which recognizes remorsefree dominant strategies, dominant strategies can also be synthesized in 2EXPTIME~\cite{DammF14}.
Syn\-the\-si\-zing \iddominant strategies rather than winning or remorsefree dominant ones thus does not introduce any overhead, while it allows for a simple compositional synthesis approach for distributed systems for many safety and liveness specifications.


\section{Compositional Synthesis with \texorpdfstring{\iDDominant}{Delay-Dominant} Strategies}

In this section, we describe a compositional synthesis approach that utilizes \iddominant strategies.
We extend the algorithm described in~\cite{DammF14} from safety specifications to general properties by synthesizing \iddominant strategies instead of remorsefree dominant ones.
Hence, given a distributed architecture and an LTL specification $\varphi$, the compositional synthesis~algorithm proceeds in four steps.
First, $\varphi$ is translated into an equivalent ACA~$\mathcal{A}_\varphi$ using standard algorithms.
Second, for each system process $p_i$, we construct the UCA~$\ddUCA{\varphi}$ that recognizes \iddominant strategies for $\varphi$ and $p_i$ as described in \Cref{sec:automaton_construction}. Note that although the initial automaton $\mathcal{A}_\varphi$ is the same for every process~$p_i$, the UCAs recognizing \iddominant strategies differ: since the processes have different sets of output variables, already the alphabets of the intermediate ACA $\ddACA{\varphi}$ differ for different processes.
Third, a \iddominant strategy $s_i$ is synthesized for each process $p_i$ from the respective UCA~$\ddUCA{\varphi}$ with bounded synthesis.
Lastly, the strategies $s_1, \dots, s_n$ are composed according to the definition of the parallel composition of Moore machines (see \Cref{sec:preliminaries}) into a single strategy~$s$ for the whole distributed system. By \Cref{thm:compositonality_ddom}, the composed strategy $s$ is \iddominant for~$\mathcal{A}_\varphi$ and the whole system if~$\mathcal{A}_\varphi$ ensures bad prefixes for \iddominance. If $\varphi$ is realizable, then, by \Cref{lem:winning_if_realizable}, strategy $s$ is guaranteed to be winning for~$\varphi$.

Note that even for realizable LTL formulas $\varphi$, there does not necessarily exist a \iddominant strategy since \iddominance is not solely defined on the satisfaction of $\varphi$ but on the \emph{structure} of an equivalent ACA~$\mathcal{A}_\varphi$. In certain cases,~$\mathcal{A}_\varphi$ can thus \enquote{punish}~the~\iddominant strategy by introducing rejecting states at clever positions that do not influence acceptance but \iddominance, preventing the existence of a \iddominant strategy.
However, we experienced that an ACA $\mathcal{A}_\varphi$ constructed with standard algorithms from an LTL formula $\varphi$ does not punish \iddominant strategies since $\mathcal{A}_\varphi$ thoroughly follows the structure of $\varphi$ and thus oftentimes does not contain unnecessary rejecting states.
Furthermore, such an ACA oftentimes ensure bad prefixes for \iddominance: in~\textcolor{red}{TODO}, we discuss under which circumstances the bad prefix property is not satisfied and identify critical structures in co-Büchi automata. When constructing ACAs with standard algorithms from LTL formulas, such structures rarely~--~if ever~--~exist.
Simple optimizations like removing rejecting states that do not lie in a cycle from the set of rejecting states have a positive impact on both the existence of \iddominant strategies and on ensuring bad prefixes: such states cannot be visited infinitely often and thus removing them from the set of rejecting states does not alter the language. Nevertheless, rejecting states can enforce non-\iddominance and thus removing unnecessary rejecting states can result in more strategies being \iddominant. Note that with this optimization it, for instance, immediately follows that for safety properties the parallel composition of \iddominant strategies is \iddominant.
Thus, we experienced that for an ACA $\mathcal{A}_\varphi$ constructed from an LTL formula $\varphi$ with standard algorithms it holds in many cases that~(i)~if $\varphi$ allows for a remorsefree dominant strategy, then $\mathcal{A}_\varphi$ allows for an \iddominant strategy, and~(ii)~the parallel composition of \iddominance strategies for~$\mathcal{A}_\varphi$ is \iddominant as well.
Therefore, the compositional synthesis algorithm presented in this section is indeed applicable for many LTL formulas. {}


\enlargethispage{1.5\baselineskip}
\section{Conclusion}

We have presented a new winning condition for process strategies, \iddominance, that allows a strategy to violate a given specification in certain situations. In contrast to the classical notion of winning, \iddominance can thus be used for individually synthesizing strategies for the processes in a distributed system in many cases, therefore enabling a simple compositional synthesis approach.
\iDdominance builds upon remorsefree dominance, where a strategy is allowed to violate the specification as long as no other strategy would have satisfied it in the same situation. 
However, remorsefree dominance is only compositional for safety properties. For liveness properties, the parallel composition of dominant strategies is not necessarily dominant. 
This restricts the use of dominance-based compositional synthesis algorithms to safety specifications, which are often not expressive enough.
\iDdominance, in contrast, is specifically designed to be compositional for more properties.
We have introduced a game-based definition of \iddominance as well as a criterion such that, if the criterion is satisfied, compositionality of \iddominance is guaranteed; both for safety and liveness properties.
Furthermore, every \iddominant strategy is remorsefree dominant, and, for realizable system specifications, the parallel composition of \iddominant strategies for all system processes is guaranteed to be winning for the whole system if the criterion is satisfied.
Hence, \iddominance is a suitable notion for compositional synthesis algorithms. 
We have introduced an automaton construction for recognizing \iddominant strategies. The resulting universal co-Büchi automaton can immediately be used to synthesize \iddominant strategies utilizing existing bounded synthesis approaches. The automaton is of single-exponential size in the squared length of the specification. Thus, synthesizing \iddominant strategies is, as for winning and remorsefree ones, in 2EXPTIME. {}


\bibliography{bib}

\appendix
\section{From LTL to co-Büchi Automata}\label{app:preliminaries}

In this section, we describe how co-Büchi automata, in particular ACAs und UCAs, can be constructed from an LTL formula. We build upon existing constructions for alternating Büchi automata and nondeterministic Büchi automata, respectively.

\subparagraph*{Alternating Automata.}
For an LTL formula $\varphi$, there exists an \emph{alternating Büchi automaton}~$\mathcal{B}_\varphi$ with $\mathcal{O}(|\varphi|)$ states such that $\Lang{\varphi} = \Lang{\mathcal{B}_\varphi}$ holds~\cite{MullerSS88}.
The Büchi and co-Büchi acceptance conditions are dual. Moreover, the duality of nondeterministic and universal branching in $\omega$-automata is well known.
Therefore, we can lift the LTL to ABA translation to an LTL to \emph{alternating co-Büchi automata} translation with the same automaton size by making use of these dualities: we first construct an ABA~$\mathcal{B}_\varphi$ for the \emph{negated formula} $\neg\varphi$. From $\mathcal{B}_\varphi$, we then construct an alternating co-Büchi automaton $\mathcal{A}_\varphi$ by replacing every conjunction in $\mathcal{B}_\varphi$'s transition function with a disjunction and vice versa. Moreover, we interpret the accepting states of $\mathcal{B}_\varphi$ as rejecting states of $\mathcal{A}_\varphi$. Since the number of states does not change with respect to $\mathcal{B}_\varphi$, the resulting automaton $\mathcal{A}_\varphi$ has, $\mathcal{O}(|\varphi|)$ states. Moreover, due to the duality and due to constructing $\mathcal{B}_\varphi$ from $\neg \varphi$, we obtain $\Lang{\varphi} = \Lang{\mathcal{A}_\varphi}$.

\subparagraph*{Universal Automata.}
For an LTL formula $\varphi$, there exists a nondeterministic Büchi automaton~$\mathcal{B}_\varphi$ with $2^{\mathcal{O}(|\varphi|)}$ states such that $\Lang{\varphi} = \Lang{\mathcal{B}_\varphi}$ holds~\cite{KupfermanV05}. 
Similar to the construction for alternating automata, we make use of the duality of the Büchi and co-Büchi acceptance condition as well as of nondeterministic and universal branching: we first construct an NBA~$\mathcal{B}_\varphi$ for the \emph{negated formula} $\neg\varphi$. From~$\mathcal{B}_\varphi$, we then obtain a universal co-Büchi automaton $\mathcal{A}_\varphi$ by interpreting nondeterministic transitions as universal ones and accepting states as rejecting ones.
Then, $\mathcal{A}_\varphi$ has $2^{\mathcal{O}(|\varphi|)}$ states and $\Lang{\varphi} = \Lang{\mathcal{A}_\varphi}$ holds.


\section{\texorpdfstring{\iDDominant}{Delay-Dominant} Strategies}\label{app:ddominance}

In this section, we present the proofs of the lemmas and theorems of \Cref{sec:ddominance}.
For this sake, we first introduce and prove some general properties of \iddominant strategies that we will use in the proofs of \Cref{sec:ddominance}.
First, note that the definition of \iddominance allows for self-dominance. That is, a strategy $s$ \iddominates itself. This is due to the definition of the \iddominance game, in particular, to the order in which \alt and \dom make their moves:

\begin{lemma}\label{lem:ddom_self_dominance}
	Let $\aca{\varphi} = (Q, q_0, \delta, F)$ be an ACA. Let $s$ be a strategy for process $p_i$ and let $\gamma \in (2^\inputs{i})^\omega$ be some input sequence. Then, $\ddominatesSequence{s}{s}{\gamma}{\aca{\varphi}}$ holds.
\end{lemma}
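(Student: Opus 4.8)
The statement $\ddominatesSequence{s}{s}{\gamma}{\aca{\varphi}}$ unfolds to: \dom wins the \iddominance game $(\aca{\varphi}, \computation{s}{\gamma}, \computation{s}{\gamma})$, \ie\ with $\sigma = \sigma' = \computation{s}{\gamma}$. The plan is to exhibit an explicit strategy $\tau$ for \dom that \emph{copies} \alt's choices, exploiting that \alt moves before \dom in each of the two decision points of a round (first \alt picks $c \in \delta(p,\sigma_j)$, then \dom picks $c' \in \delta(q,\sigma'_j)$; then \alt picks $q' \in c'$, then \dom picks $p' \in c$). Since in round $j$ we have $\sigma_j = \sigma'_j$, whenever the alternative state and the dominant state agree, say $p = q$, the set $c \in \delta(p,\sigma_j)$ chosen by \alt is also a legal choice for \dom out of $\delta(q,\sigma'_j) = \delta(p,\sigma_j)$, so \dom sets $c' := c$. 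Then \alt picks some $q' \in c' = c$, and \dom responds with $p' := q' \in c$, which is legal. The next round starts at $((p',q'),j+1) = ((q',q'),j+1)$, again with equal components.

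First I would set up this strategy $\tau$ formally as a function on the game positions (it only needs to be defined on $V_0 = D_\exists \cup D_\forall$): on a $D_\exists$-position $((p,q,c),j)$ with $p = q$ play the edge to $((p,q,c,c),j)$, and on a $D_\forall$-position $((p,q,c,q'),j)$ play the edge to $((q',q'),j+1)$. (On positions with $p \neq q$, which are never reached, define $\tau$ arbitrarily but legally, using the edge-totality of $E$.) Then I would prove by induction on $j$ the invariant that every initial play consistent with $\tau$ has, in its $j$-th round, dominant state equal to alternative state; the base case is $\ell(\varepsilon)$-analogue, namely the initial position $((q_0,q_0),0)$, and the induction step is exactly the round analysis above, using $\sigma_j = \sigma'_j$ and the legality checks just described.

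Finally I would discharge the winning condition: $\win = \{ \rho \mid \forall j.\ \projq{\rho_j} \in F \rightarrow \exists j' \geq j.\ \projp{\rho_{j'}} \in F \}$. By the invariant, on any consistent initial play $\rho$ we have $\projp{\rho_j} = \projq{\rho_j}$ for all $j$ (more precisely, for all positions in round $j$; note $\projp$ and $\projq$ read the first two coordinates of the state tuple, which are $p$ and $q$ and stay equal throughout a round). Hence if $\projq{\rho_j} \in F$ then already $\projp{\rho_j} \in F$, so we may take $j' = j$; the implication holds trivially, and $\rho \in \win$. Since every initial play consistent with $\tau$ is winning for \dom, $\tau$ is a winning strategy, so \dom wins the game and $\ddominatesSequence{s}{s}{\gamma}{\aca{\varphi}}$ holds.

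The only mildly delicate point — and the one I would be most careful about — is the bookkeeping of the four intermediate position types within a single round: I must check that $\tau$'s prescribed edges genuinely lie in $E$ (the copied set $c$ is in $\delta(q,\sigma'_j)$ precisely because $q = p$ and $\sigma'_j = \sigma_j$; the copied state $q'$ is in $c$ because $c' = c$), and that "alternative state $=$ dominant state" is correctly tracked through the $S_\exists \to D_\exists \to S_\forall \to D_\forall$ positions, where the $(p,q)$ pair is carried along unchanged until the last move. This is entirely routine, so I do not expect a genuine obstacle; the lemma is really just the formal manifestation of the "\dom can duplicate \alt" remark made before the definition of the \iddominance game.
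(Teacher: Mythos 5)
Your proposal is correct and follows exactly the paper's argument: construct a strategy for \dom that mimics \alt's choices (possible because \alt moves first), maintain the invariant that the alternative and dominant states coincide in every round, and conclude that every rejecting dominant state is matched by a rejecting alternative state at the same position, so every consistent initial play lies in $\win$. Your version merely spells out the legality checks and the round-by-round bookkeeping that the paper leaves implicit.
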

\begin{proof}
	We construct a winning strategy $\tau$ for \dom in the \iddominance game $(\aca{\varphi},\computation{s}{\gamma},\computation{s}{\gamma})$ by mimicking the respective moves of \alt. Since \alt moves first by construction of the game, this is always possible.
	For every initial play $\rho$ that is consistent with $\tau$, we then have $\projq{\rho_j} \in F \rightarrow \projp{\rho_j} \in F$ for all points in time $j \in \mathbb{N}$ and thus, in particular, $\rho \in \win$ holds. Thus, $\tau$ is indeed winning.
\end{proof}

Second, observe that, given an ACA $\mathcal{A}$, two strategies $s$ and $t$ for a process $p_i$, and an input sequence $\gamma \in (2^\inputs{i})^\omega$, every strategy for \dom in the \iddominance game $(\mathcal{A},\computation{t}{\gamma},\computation{s}{\gamma})$ game corresponds to a run tree of $\mathcal{A}$ on $\computation{s}{\gamma}$.
To formalize this, we first define a \emph{projected play} of the \iddominance game.

\begin{definition}[Projected Play]
	Let $\mathcal{A} = (Q,q_0,\delta,F)$ be an ACA. Let $s$ an $t$ be strategies for process $p_i$. Let $\gamma \in (2^\inputs{i})^\omega$. Let $\rho$ be some play in the \iddominance game $(\mathcal{A},\computation{t}{\gamma},\computation{s}{\gamma})$. The \emph{projected play} $\hat{\rho} \in (Q \times Q)^\omega$ is defined by $\hat{\rho}_j \coloneq \projection{\rho_{4j}}{1}$ for all $j \in \mathbb{N}$.
	The \emph{projected dominant play} $\hat{\rho}^d \in Q^\omega$ of $\rho$ and the \emph{projected alternative play} $\hat{\rho}^a \in Q^\omega$ of $\rho$ are defined by $\hat{\rho}^d_j \coloneq \projq{\rho_{4j}}$ and $\hat{\rho}^a_j \coloneq \projp{\rho_{4j}}$ for all $j \in \mathbb{N}$, respectively.
\end{definition}

Intuitively, we obtain $\hat{\rho}$ from $\rho$ by removing all positions that are not of the form $((p,q),j)$ and by projecting to the state tuple, thus removing the index $j$.
The projected dominant play $\hat{\rho}^d$ is then obtained by further projecting to the the dominant state of the state tuples pf $\hat{\rho}$, \ie, to $q$ for a state tuple $(p,q)$, while we further project to the alternative state in the projected alternative play $\hat{\rho^a}$, \ie, to $p$ for a state tuple $(p,q)$.
Now, we can formalize the correspondence between strategies in the \iddominance game and run trees:

\begin{lemma}\label{lem:strategy_induces_runtree_dom}
	Let $\mathcal{A} = (Q,q_0,\delta,F)$ be an ACA. Let $s$ an $t$ be strategies for some process~$p_i$. Let $\gamma \in (2^\inputs{i})^\omega$. Let $\tau$ be a strategy for \dom in the \iddominance game $(\mathcal{A},\computation{t}{\gamma},\computation{s}{\gamma})$. Let $\mathcal{P}_\tau$ be the set of initial plays that are consistent with $\tau$. Then, there exists a run tree $r$ of $\mathcal{A}$ induced by $\computation{s}{\gamma}$ such that we have $\mathcal{R}(r) = \{ \hat{\rho}^d \mid \rho \in \mathcal{P}_\tau \}$, where $\mathcal{R}(r)$ is the set of infinite branches of $r$.
\end{lemma}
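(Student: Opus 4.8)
The plan is to build $r$ by unfolding the \dom strategy $\tau$ against every possible behaviour of \alt. Writing $\sigma^s \coloneq \computation{s}{\gamma}$ and $\sigma^t \coloneq \computation{t}{\gamma}$ as in the game definition, I would identify the nodes of $r$ with the $\tau$-consistent initial play prefixes of the \iddominance game $(\mathcal{A},\sigma^t,\sigma^s)$, taking one tree level per game round and labelling a node by the dominant component of the associated play's current position. Concretely, the root $\varepsilon$ corresponds to the initial position $((q_0,q_0),0)$ and gets label $q_0$; passing from a depth-$j$ node whose play prefix is about to enter round $j$ at some position $((p,q),j)$ to its children means letting \alt make its two moves of round $j$ (a set $c \in \delta(p,\sigma^t_j)$ and then a state $q' \in c'$, where $c'$ is \dom's answer) while \dom replies according to $\tau$, so that each child is the prefix extended by that round and carries label $q'$. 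To realise $r$ as a tree over $\mathbb{N}^*$ I would fix, for every round, an enumeration of \alt's finitely many move choices and encode a play prefix by the sequence of indices of these choices; then $\mathcal{T}$ is prefix-closed, and since all of \dom's moves are determined by $\tau$ once \alt's choices are fixed, the labelling $\ell$ is well defined.

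The first main step is to verify that $(\mathcal{T},\ell)$ is genuinely a run tree of $\mathcal{A}$ induced by $\sigma^s$, \ie that $\ell(\varepsilon)=q_0$ (immediate) and that for every node $x$ at depth $j$ the set $\{\ell(x') \mid x' \in \children{x}\}$ is a disjunct of $\delta(\ell(x),\sigma^s_j)$. The point is that, within round $j$, \dom's move out of the $D_\exists$-position commits to a single set $c' \in \delta(q,\sigma^s_j)$, and \alt's subsequent move out of the $S_\forall$-position ranges over exactly the states of $c'$; hence, provided the tree node is placed so that this $c'$ is already pinned down, the children of $x$ are labelled by precisely the elements of $c'$. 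Getting this exactly right is the one genuinely delicate point: the node identity has to carry \alt's $c$-choice of the current round (equivalently, a tree node sits just after \dom's existential move), the very first round has to be treated as a small special case since there is no preceding position for \alt's initial choice to attach to, and one needs the mild assumption that no reachable transition formula is $\false$, so that every node actually has a child and $r$ has only infinite branches.

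The second step is to show $\mathcal{R}(r) = \{ \hat{\rho}^d \mid \rho \in \mathcal{P}_\tau \}$, which I expect to be routine once the node-to-prefix correspondence is in place. For the inclusion from left to right, an infinite branch $x_0 x_1 \dots$ of $r$ spells out, level by level, an infinite sequence of \alt's choices; together with $\tau$ this yields a unique $\tau$-consistent initial play $\rho$, and by construction $\ell(x_i) = \projq{\rho_{4i}}$ for all $i$, so the branch equals $\hat{\rho}^d$. Conversely, any $\rho \in \mathcal{P}_\tau$ makes in each round choices for \alt that form a legal move together with replies for \dom that agree with $\tau$, so the sequence of the moves \alt makes in $\rho$ is a branch of $r$ whose label sequence is again $\hat{\rho}^d$. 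Since distinct $\tau$-consistent plays induce distinct sequences of \alt's choices (\dom's moves being fixed by $\tau$), this correspondence is in fact a bijection between $\mathcal{P}_\tau$ and the infinite branches of $r$, which gives the equality.

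Overall I expect no real conceptual difficulty: the two inclusions are direct unfoldings of the definitions of run tree, projected play, and consistency. The genuine work lies in the bookkeeping of the four-half-move round structure of the \iddominance game -- choosing which part of a round a tree level corresponds to so that the run-tree condition $\{\ell(x')\mid x'\in\children{x}\}\in\delta(\ell(x),\sigma^s_{|x|})$ holds exactly, together with the attendant handling of the initial round and of degenerate transitions. Everything downstream then only relies on $\mathcal{R}(r)$ being exactly the set of projected dominant plays, which this construction delivers.
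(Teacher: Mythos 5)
Your proposal is correct and follows essentially the same route as the paper's proof: both unfold the \dom strategy $\tau$ against all of \alt's choices into a $Q$-labeled tree whose labels are the dominant states, verify the run-tree condition from the fact that $\tau$ fixes the existential choices of $\mathcal{A}$ on $\computation{s}{\gamma}$ while \alt's moves range over all universal options, and read off $\mathcal{R}(r) = \{\hat{\rho}^d \mid \rho \in \mathcal{P}_\tau\}$ from the node-to-play-prefix correspondence. The only cosmetic difference is that you keep full play prefixes as node identities (so the disjunct $c'$ is pinned down at each node) whereas the paper groups nodes by the dominant-label history of compatible plays; both yield the claimed branch set.
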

\begin{proof}
	We construct a $Q$-labeled tree $(\mathcal{T},\ell)$ from $\tau$ as follows by defining the labeling of the root as well as of the successors of all nodes.
	The labeling of the root $\varepsilon$ of $\mathcal{T}$ is defined by $\ell(\varepsilon)=q_0$. For a node $x \in \mathcal{T}$ with depth $j = |x|$, we define the labeling of the successor nodes of $x$ such that $\{ \ell(x') \mid x' \in \children{x} \} = \{ \hat{\rho}^d_{j+1} \mid \rho \in \mathcal{P}_\tau \land \compatible{\rho}{x} \}$ holds, where $\compatible{\rho}{x}$ denotes that~$\rho$ and~$x$ are compatible in the sense that for all $0 \leq j' \leq j$, we have $\hat{\rho}^d_{j'} = \ell(a_{j'})$, where~$a$ is the unique finite sequence of nodes that, starting from $\varepsilon$, reaches $x$. 
	
	Next, we show that $(\mathcal{T},\ell)$ is a run tree of $\mathcal{A}$ induced by $\computation{s}{\gamma}$.
	For the sake of readability, let $\sigma^s := \computation{s}{\gamma}$. Let $x \in \mathcal{T}$ be some node.
	Then, by construction of the \iddominance game, we know that \dom controls the existential transitions of $\mathcal{A}$ for $\computation{s}{\gamma}$, while the universal ones are controlled by \alt. Hence, since $\tau$ is a strategy of \dom, $\tau$ defines the existential choices in $\mathcal{A}$ for $\computation{s}{\gamma}$. 
	Therefore, for every round of the \iddominance game and thus for every time step $j \in \mathbb{N}$, there exists a decision for the existential choices in $\mathcal{A}$ for $\computation{s}{\gamma}$, namely the one defined by $\tau$, such that all initial plays that are consistent with $\tau$ adhere to it. Moreover, as no strategy for \alt is given, for every round of the game the set of plays in $\mathcal{P}_\tau$ defines \emph{all} possible universal choices in $\mathcal{A}$ for $\computation{s}{\gamma}$ that fit in with the existential choice defined by $\tau$ as well as the history.
	Hence, we obtain that $\{ \hat{\rho}^d_{j+1} \mid \rho \in \mathcal{P}_\tau \land \compatible{\rho}{x} \}$ satisfies $\bigvee_{c' \in \delta(q,\sigma^s_j)} \bigwedge_{q' \in c'} q'$, where $q := \ell(x)$. 
	Thus, by definition of run trees, $(\mathcal{T},\ell)$ is indeed a run tree of $\mathcal{A}$ induced by $\computation{s}{\gamma}$.
	Intuitively, the dominant states of an initial play that is consistent with $\tau$ thus evolve according to a run of $\mathcal{A}$ induced by $\computation{s}{\gamma}$.
	Furthermore, by construction of $(\mathcal{T},\ell)$, we immediately obtain that $\mathcal{R}(r) = \{ \hat{\rho}^d \mid \rho \in \mathcal{P}_\tau \}$ holds, where $\mathcal{R}(r)$ is the set of infinite branches of $r$.
\end{proof}

Similarly, a strategy for \alt in the \iddominance game $(\mathcal{A},\computation{t}{\gamma},\computation{s}{\gamma})$ corresponds to a run tree in $\mathcal{A}$ induced by $\computation{t}{\gamma}$. The formal formulation as well as the proof of this observation are analogous to \Cref{lem:strategy_induces_runtree_dom}. Therefore, we omit it.

Vice versa, we can translate a run tree of an ACA $\mathcal{A}$ induced by $\computation{t}{\gamma}$ for some strategy $t$ for process $p_i$ and an input sequence $\gamma \in (2^\inputs{i})$ into a strategy for \alt in the \iddominance game $(\mathcal{A},\computation{t}{\gamma},\computation{s}{\gamma})$ for some strategy $s$ for $p_i$:

\begin{lemma}\label{lem:runtree_induces_strategy_alt}
	Let $\mathcal{A} = (Q,q_0,\delta,F)$ be an ACA. Let $s$ an $t$ be strategies for some process~$p_i$. Let $\gamma \in (2^\inputs{i})^\omega$. Let $r$ be a run tree of $\mathcal{A}$ induced by $\computation{t}{\gamma}$. Let $\mathcal{R}(r)$ be the set of infinite branches of $r$. Then, there exists a strategy~$\mu$ for \alt in the \iddominance game $(\mathcal{A},\computation{t}{\gamma},\computation{s}{\gamma})$ such that $\mathcal{R}(r) = \{ \hat{\rho}^a \mid \rho \in \mathcal{P}_\mu \}$, where $\mathcal{P}_\mu$ be the set of initial plays that are consistent with $\mu$.
\end{lemma}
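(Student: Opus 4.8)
The plan is to dualize the construction in the proof of \Cref{lem:strategy_induces_runtree_dom}. Given the run tree $r$ of $\mathcal{A}$ induced by $\computation{t}{\gamma}$, I would define a strategy $\mu$ for \alt that, across the plays it is consistent with, realizes exactly the branching behavior of $r$ in the alternative component. The key observation is that, by construction of the \iddominance game, \alt controls precisely the \emph{existential} transitions of $\mathcal{A}$ for $\computation{t}{\gamma}$ (in step~1 of each round, \alt picks $c \in \delta(p,\sigma^t_j)$) and also picks the \emph{universal} resolutions of the successor in the dominant copy (step~3, where \alt picks $q' \in c'$), whereas \dom resolves the universal transitions of $\mathcal{A}$ for $\computation{t}{\gamma}$ (step~4, picking $p' \in c$). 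So a single node of $r$ with its children set corresponds to \alt's choice of a disjunct $c\in\delta(p,\sigma^t_j)$, and the \emph{different children} of that node correspond to the different $p'\in c$ that \dom may subsequently select in step~4. Hence $\mu$ should, at a position whose history is compatible with a node $x$ of $r$ at depth $j$, play the disjunct $c$ witnessing that the children labels $\{\ell(x')\mid x'\in\children{x}\}$ satisfy $\delta(\ell(x),\sigma^t_j)$ (recall formulas are given in DNF, so such a single disjunct exists); at step~3, $\mu$ may play arbitrarily (any $q'\in c'$), since the dominant component plays no role in matching branches of $r$.

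Concretely, I would build $\mu$ by recursion on the play prefix. I maintain the invariant that after $4j$ moves, every initial play $\rho$ consistent with $\mu$ has its projected alternative prefix $\hat\rho^a_{|0}\ldots\hat\rho^a_{|j}$ equal to the labels along some node $a_j$ of $r$ at depth $j$ (with $a_0=\varepsilon$, $\ell(\varepsilon)=q_0$, matching $\hat\rho^a_0 = \projp{\rho_0} = q_0$), and conversely that every such node of $r$ is reached by some consistent play. The forward direction of the invariant follows because $\mu$'s step-1 choice $c\subseteq\children{x}$-labels keeps every continuation inside the children of the current node; the backward direction follows because \dom's step-4 move ranges over \emph{all} of $c$, and $c$ is exactly the set of children labels of $x$, so every child of $x$ is hit by some consistent play. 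Taking the union over all $j$ and using that the branches of a $Q$-labeled tree are the limits of its finite node-sequences, I conclude $\{\hat\rho^a \mid \rho\in\mathcal{P}_\mu\} = \mathcal{R}(r)$. I also need to check $\mu$ is a well-defined \alt-strategy, i.e.\ it only depends on the play prefix: this holds because the compatible node $x$ of $r$ is determined by the projected alternative prefix, which in turn is a function of $\rho_{|4j}$.

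The main subtlety — rather than a deep obstacle — is the bookkeeping around the four-phase structure of each round: \alt moves in phases~1 and~3, \dom in phases~2 and~4, so a single "round" of the run tree is spread across four game positions, and I must be careful that $\mu$'s phase-3 choice (a state $q'$ in the \emph{dominant} successor set $c'$ chosen by \dom in phase~2) is genuinely unconstrained and does not interfere with the alternative-component invariant. A second point to handle cleanly is the case where a node of $r$ has several children carrying the \emph{same} label, or where $\delta(\ell(x),\sigma^t_j)$ has several disjuncts compatible with $\children{x}$: picking any one witnessing disjunct suffices, and multiplicities in $r$ are irrelevant since $\mathcal{R}(r)$ is a \emph{set} of branches. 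Everything else mirrors \Cref{lem:strategy_induces_runtree_dom} verbatim with the roles of the dominant and alternative components exchanged, so I would state those parts briefly and refer back.
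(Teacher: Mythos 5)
Your proposal matches the paper's proof essentially verbatim: Spoiler's step-1 move is the disjunct formed by the children labels of the node of $r$ compatible with the projected alternative prefix, the step-3 move is arbitrary since it only affects the dominant component, and the branch-set equality follows because Duplicator's step-4 choice ranges over all of that disjunct. The points you flag (well-definedness via the prefix, repeated labels) are handled at the same level of detail as in the paper, so this is the same argument.
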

\begin{proof}
	We construct a strategy $\mu$ for \alt in the \iddominance game from $r$ as follows.
	Let $\eta \cdot \delta$ be a finite sequence of positions of the game with $\eta \in V^*$ and $\delta \in V$. We only define~$\mu$ explicitly on sequences $\eta \cdot \delta$ that can occur in the \iddominance game $(\aca{\varphi},\computation{t}{\gamma},\computation{s}{\gamma})$ and where $\delta$ is controlled by \alt; on all other sequences we define $\mu(\eta\cdot\delta) = v$ for some arbitrary $v \in V$ that is a valid extension of $\eta \cdot \delta$. Thus, in the following we assume that~$\eta\cdot\delta$ is a prefix that can occur in the game and that $\delta$ is of the form $((p,q),j)$ or $((p,q,c,c'),j)$. We map~$\eta \cdot \delta$ to a prefix of a branch of $r$ if there is a compatible one: 	
	a compatible branch $b$ of $r$ agrees with the finite projected play $\hat{\eta}$ up to point in time~$|\eta|$. Note here that, slightly misusing notation, we apply the definition of a projected play also to the finite prefix $\eta$ of a play.
	Moreover, no matter whether $\delta$ is of the form $((p,q,c),j)$ or $((p,q,c,c'),j)$, we have  $b_{|\eta|+1} = (p,q)$.
	
	If there is no compatible branch in $r$, we again define $\mu(\eta\cdot\delta) = v$ for some arbitrary $v \in V$ that is a valid extension of $\eta \cdot \delta$.
	Otherwise, the successors of $p$ in $b$ define the choice of~$\mu$: by definition, the set $\mathcal{S}$ of successors of the node labeled with $p$ in $b$ satisfies $\delta((p),\sigma^t_{|\eta|+1})$, where $\sigma^t := \computation{t}{\gamma}$. 
	Thus, there exists some $c \in \delta(p,\sigma^t_{|\eta|+1})$ such that for all $p' \in c$ we have $p' \in \mathcal{S}$.
	If $\delta = ((p,q),j)$, we thus define $\mu(\eta\cdot\delta) = ((p,q,c),j)$.
	If $\delta = ((p,q,c',c),j)$, we define $\mu(\eta\cdot\delta) = v$ for some arbitrary $v \in V$ that is a valid extension of $\eta \cdot \delta$. Note here that choosing an arbitrary successor for $\eta\cdot\delta$ for $\mu$ is possible since the choice defines a successor state for the dominant state $q$. Hence, the choice does not influence the projected alternative play.
	Since the existential choices in $\mathcal{A}$ define the run tree, it immediately follows from the construction of $\mu$ that $\mathcal{R}(r) = \{ \hat{\rho}^a \mid \rho \in \mathcal{P}_\mu \}$ holds. 
\end{proof}

Similarly, a run tree of $\mathcal{A}$ induced by $\computation{s}{\gamma}$ corresponds to a strategy for \dom in the \iddominance game $(\mathcal{A},\computation{t}{\gamma},\computation{s}{\gamma})$. The formal formulation as well as the proof of this observation are analogous to \Cref{lem:runtree_induces_strategy_alt}. Therefore, we omit it.

\subsection{\texorpdfstring{\iDDominance}{Delay-Dominance} implies Dominance (Proof of \texorpdfstring{\Cref{lem:ddom_implies_dom}}{Theorem 5})}

With the observations introduced above, we are now able to prove \Cref{lem:ddom_implies_dom}, \ie, that every \iddominant strategy is remorsefree dominant as well.

\begin{proof}
	Let $\aca{\varphi} = (Q,q_0,\delta,F)$.
	Suppose that $s$ is \iddominant for $\aca{\varphi}$, while $s$ is not dominant for $\varphi$. Then, there exists an alternative strategy $t$ for process $p_i$ and an input sequence $\gamma \in (2^\inputs{i})^\omega$ such that $\computation{s}{\gamma} \not\models \varphi$ holds, while we have $\computation{t}{\gamma} \models \varphi$. 
	Since $s$ is \iddominant for~$\aca{\varphi}$ by assumption, there is a winning strategy~$\tau$ for \dom in the \iddominance game $(\aca{\varphi},\computation{t}{\gamma},\computation{s}{\gamma})$.
	
	First, since $\computation{s}{\gamma} \not\models \varphi$ holds by assumption, all run trees of~$\aca{\varphi}$ induced by $\computation{s}{\gamma}$ contain a branch that visits infinitely many rejecting states. 
	By \Cref{lem:strategy_induces_runtree_dom}, there is a run tree $r^s$ of~$\aca{\varphi}$ induced by $\computation{s}{\gamma}$ that reflects the choices for the existential transitions of~$\aca{\varphi}$ for $\computation{s}{\gamma}$ defined by $\tau$. Moreover, we have $\mathcal{R}(r^s) = \{ \hat{\rho}^d \mid \rho \in \mathcal{P}_\tau \}$, where $\mathcal{R}(r^s)$ is the set of infinite branches of $r^s$ and where $\mathcal{P}_\tau$ is the set of initial plays that are consistent with~$\tau$.
	Thus, since $r^s$ contains a branch that visits infinitely many rejecting states by assumption, there is an initial play $\rho$ that is consistent with $\tau$ such that $\hat{\rho}^d$ visits infinitely many rejecting states. Therefore, by definition of $\hat{\rho}^d$, $\rho$ contains infinitely many rejecting dominant states.
	
	Next, since $\computation{t}{\gamma} \models \varphi$ holds by assumption, there is a run tree of $\aca{\varphi}$ induced by $\computation{t}{\gamma}$ whose branches all visit only finitely many rejecting states. Let $r^t$ be this run tree. 
	By \Cref{lem:runtree_induces_strategy_alt}, there is a strategy $\mu$ for \alt in the \iddominance game that reflects the choices of $r^t$ for the existential transitions of~$\aca{\varphi}$ for $\computation{t}{\gamma}$. Moreover, we have $\mathcal{R}(r^t) = \{ \hat{\rho}^a \mid \rho \in \mathcal{P}_\mu \}$, where $\mathcal{R}(r^t)$ is the set of infinite branches of $r^t$ and where $\mathcal{P}_\mu$ is the set of initial plays that are consistent with~$\mu$.	
	Thus, since all branches of $r^t$ visit only finitely many rejecting states, it immediately follows that for all initial plays that are consistent with~$\mu$ we have that $\hat{\rho}^a$ visits only finitely many rejecting states.
	
	Therefore, in particular for the initial play that is consistent with both $\tau$ and $\mu$, it holds that $\hat{\rho}^a$ visits only finitely many rejecting states.
	However, as shown above, $\hat{\rho}^d$ visits infinitely many rejecting states. Thus, there is a point in time $j \in \mathbb{N}$ such that $\projq{\rho_j} \in F$, while $\projp{\rho_{j'}} \not \in F$ for all $j' \geq j$.
	But then $\tau$ is not a winning strategy for \dom; yielding a contradiction and thus proving the claim.
\end{proof}


\subsection{Bad Prefixes for \texorpdfstring{\iDDominance}{Delay-Dominance}}

A critical shortcoming of remorsefree dominance is its non-compositionality for liveness properties. This restricts the usage of dominance-based compositional synthesis algorithms to safety specifications, which are in many cases not expressive enough.
\iDdominance, in contrast, is specifically designed to be compositional for more properties. 
This heavily relies on two facts:~(i)~\iddominance is not defined using the satisfaction of the given specification but on a more involved property on the visits of rejecting states, and~(ii)~\iddominance is defined using a two-player game and thus we require the existence of a \emph{strategy} for \dom, \ie, determining  which decisions to make for the existential choices of the \iddominant strategy and the universal ones for the alternative strategy has to be possible without knowledge about the future input as well as the future decisions for the other choices. {}
More precisely, compositionality requires that whenever the parallel composition of two strategies $s_1$ and $s_2$ for processes $p_1$ and $p_2$, respectively, does not satisfy the strategy property~--~that is, for instance, remorsefree dominance or \iddominance~--~we are able to \emph{blame} at least one of the processes for being responsible for violating that strategy property. Otherwise, none of the processes ever behaves incorrectly with respect to the strategy property and thus none of the processes violates the strategy property. Hence, then the parallel composition of two strategies that satisfy the strategy property would not necessarily satisfy it as well.

As an example reconsider the message-sending system from the running example and the strategy property remorsefree dominant. Furthermore, consider the strategies~$s_1$ and~$s_2$ that wait for receiving the respective other message before sending their own one. The parallel composition $s_1 \pc s_2$ never sends any message and thus violates the specification $\varphi = \Eventually m_1 \land \Eventually m_2$ on every input sequence. Yet, none of the processes can be blamed for being responsible for violating the properties of remorsefree dominance: 
even if process $p_i$ would send its message $m_i$ eventually, the specification is still not satisfied since message~$m_{3-i}$ has not been send yet. Thus, as long as it did not receive $m_{3-i}$, it is not required to eventually send~$m_i$. The same, however, also holds for system process $p_{3-i}$.
Note here that it is crucial that although~$p_i$ is required to send $m_i$ eventually if it receives~$m_{3-i}$, process $p_{3-i}$ is not required to send~$m_{3-i}$ in the first place; resulting in the deadlock situation where both processes wait on each other indefinitely. Nevertheless, both $s_1$ and $s_2$ are both remorsefree dominant since they are not required to output their message when confronted with the behavior defined by the other process in a computation of $s_1 \pc s_2$.

Intuitively, we can blame at least one of the processes $p_1$ and $p_2$ for violating the strategy property if there exists a bad prefix of a computation of the parallel composition $s_1 \pc s_2$ of the process strategies~$s_1$ and~$s_2$ for the strategy property, \ie, a prefix of a computation of $s_1 \pc s_2$ such that all of its infinite extensions violate the strategy property.
For remorsefree dominance, for instance, a bad prefix of $\computation{s_1 \pc s_2}{\gamma}$ is a finite prefix $\eta$ of $\computation{s_1 \pc s_2}{\gamma}$ such that all infinite extensions $\sigma$ of $\eta$ that respect $\gamma$, \ie, that agree with $\gamma$ on the inputs of $p_1 \pc p_2$, violate the specification while there exists an alternative strategy $t$ for $p_1 \pc p_2$ such that $\computation{t}{\gamma}$ satisfies it. Note here that since remorsefree dominance only considers the satisfaction of a specification, the existence of a bad prefix for remorsefree dominance boils down to the existence of a bad prefix for the considered specification. Clearly, there do not exist bad prefixes for remorsefree dominance for liveness properties.

\iDdominance, in contrast, takes an alternating co-Büchi automaton describing the specification into account and relates the visit to a rejecting states induced by the possibly dominant strategy to those induced by an alternative strategy. Thus, non-existence of bad prefixes for liveness properties does not necessarily result in the absence of bad prefix for \iddominance.
For instance, reconsider the message sending system and the ACA $\mathcal{A}_\varphi$ depicted in \Cref{fig:ACA_running_example}, which describes the specification $\varphi$. Although $\Lang{\varphi}$ is a liveness property and thus does not have a bad prefix, $\mathcal{A}_\varphi$ ensures bad prefixes for \iddominance:
let $s_i$ be a strategy for $p_i$ that is not \iddominant for $\mathcal{A}_\varphi$. Then, there exists an input sequence $\gamma \in (2^\inputs{i})^\omega$ and an alternative strategy $t_i$ for $p_i$ such that \dom does not have a winning strategy in the \iddominance game $(\mathcal{A}_\varphi,\computation{t_i}{\gamma},\computation{s_i}{\gamma})$. 
Note that, as outlined in \Cref{sec:ddominance}, for such an input sequence $\gamma \in (2^\inputs{i})^\omega$ it holds that $m_{3-i}$ occurs in $\computation{s_i}{\gamma}$ before~$m_i$. The prefix of $\computation{s_i}{\gamma}$ up to the point in time $j \in \mathbb{N}$ at which~$m_{3-i}$ occurs while $m_i$ did not occur so far is then a bad prefix of $\computation{s_i}{\gamma}$ in the sense of \Cref{def:bad_prefixes}: since $\mathcal{A}_\varphi$ is deterministic, every play in the \iddominance game stays in state $q_0$ in the dominant states up to the point in time $j$ at which $m_{3-i}$ occurs and then moves to either $q_1$ or $q_2$, depending on whether $i=1$ or $i=2$ holds. It then stays there until~$m_i$ occurs and moves to $q_3$ afterwards. An alternative strategy that outputs $m_i$ in $\gamma$ at point in time~$j$, \ie, at the very same point in time at which $m_{3-i}$ occurs, moves from $q_1$ directly to $q_3$, thus omitting the visit of a rejecting state $q_1$ or $q_2$. Hence, no matter how~$p_1$ behaves after entering $q_1$ or $q_2$, respectively, there is an alternative strategy that causes \dom to lose the \iddominance game for input sequence $\gamma$.

\begin{figure}[t]
	\centering
	\scalebox{0.96}{
	\begin{tikzpicture}[>=latex,shorten >=0pt,auto,->,node distance=1cm,semithick,every edge/.style={draw,font=\normalsize}, initial text = ]

		\def\centerarc[#1](#2)(#3:#4:#5)
			 { \draw[#1] ($(#2)+({#5*cos(#3)},{#5*sin(#3)})$) arc (#3:#4:#5); }
	
		\node[state,initial]		(q0)		at (0,0)		{$q_0$}; 
		\node[state,accepting]	(q1)		at (2.5,0.5)	{$q_1$};
		\node[state]			(q5)		at (6.75,-1)	{$q_5$};
		\node[state]			(q2)		at (5,0.5)	{$q_2$}; 
		\node[state,accepting]	(q3)		at (8.5,0.5)	{$q_3$}; 
		\node[state]			(q4)		at (11,0.5)	{$q_4$}; 
		
		\node[draw=none]		(a1)		at ($(q2.east)+(1,0.13)$) {};
		\node[draw=none]		(a2)		at ($(q5.east)+(0.19,0.79)$) {};
		\centerarc[gray!80,-,thick](a1.south)(0:-117:0.17)
		\centerarc[gray!80,-,thick](a2.south east)(40:-76:0.19)
		
		\path	(q0)		edge	[sloped]			node		{$a$}			(q1)
						edge	[sloped,below]	node		{$\neg a$}		(q5)
				(q1)		edge					node		{$\top$}			(q2)
				(q2)		edge					node		{$\top$}			(q3)
						edge[bend left]		node		{$a$}			(q4)
				(q3)		edge					node		{$\top$}			(q4)
				(q4)		edge[loop right]		node		{$\top$}			(q4)
				(q5)		edge[sloped]	node		{$\top$}			(q3);
				
		\path	(a1.south)		edge[bend left=50]	node		{}	(q2)
				(a2.south east)	edge[bend left=50]	node		{}	(q5);
	\end{tikzpicture}}
	\caption{Alternating co-Büchi automaton $\mathcal{A}$ over alphabet $\{a,b\}$ that does not ensure bad prefixes for \iddominance. Universal choices are depicted with a gray arc.}\label{fig:ACA_no_bad_prefixes}
\end{figure}
As already pointed out above, there are much more properties for which there exists an alternating co-Büchi automaton that ensures bad prefixes for \iddominance than properties that have a \enquote{classical} bad prefix: no liveness property has a classical bad prefix, yet, for many of them there exist ACAs that ensure bad prefixes for \iddominance.
The ACA $\mathcal{A}$ depicted in \Cref{fig:ACA_no_bad_prefixes} does not ensure bad prefixes, though: let $b$ be an input variable and let $a$ be an output variable.
let $s$ be a strategy that outputs $a$ in the very first time step and never outputs~$a$ afterwards. That is, irrespective of the input sequence $\gamma \in (2^{\{b\}})^\omega$, the computation of~$s$ is given by $\computation{s}{\gamma} = \{a\}\emptyset^\omega$.
Let $t$ be an alternative strategy that never outputs~$a$, \ie, $\computation{t}{\gamma} = \emptyset^\omega$ holds for all $\gamma \in (2^{\{b\}})^\omega$. 
For an arbitrary input sequence $\gamma$, consider the \iddominance game $\mathcal{G} = (\mathcal{A},\computation{t}{\gamma},\computation{s}{\gamma})$.
\dom does not have a winning strategy: in the \iddominance game, \dom controls the universal choice to  to either stay in state $q_5$ or tho move to $q_3$ in the alternative states. \alt, in contrast, controls the universal choice to either stay in $q_2$ or to move to $q_3$. Note that for the considered sequences $\computation{t_i}{\gamma}$ and $\computation{s_i}{\gamma}$ no existential choices occur.
As soon as \dom chooses to let the alternative states move from $q_5$ to $q_3$ while the dominant states are still in $q_1$ or $q_2$, \alt can choose to let the dominant states move from $q_2$ to $q_3$ sometime afterwards; resulting in a visit to rejecting dominant state, namely $q_3$, that is never matched by a rejecting alternative state.
If \dom chooses to let the alternative states stay in $q_5$, however, \alt has the possibility to let the dominant states stay in $q_2$ as well. Then, there is a visit to a rejecting dominant state, namely $q_1$, that is never answered if \dom never lets the alternative states move to $q_3$. As argued above, choosing to move to $q_3$ in the alternative states while the dominant states are still in $q_2$ results in a rejecting dominant state that is never matched as well, though.
Hence, neither letting the alternative states stay in $q_5$ forever nor letting them move to $q_3$ eventually results in a winning strategy for \dom in $\mathcal{G}$.
Thus, $s$ is not \iddominant.

Yet, there does not exist a bad prefix of \iddominance for $s_i$: let $k \geq 2$ be some point in time and let $\eta$ be the prefix of $\computation{s_i}{\gamma}$ up to point in time $k$, \ie, let $\eta \coloneq \pref{\computation{s_i}{\gamma}}{k+1}$. Let $\sigma$ be an infinite extension of $\eta$ with $a \in \sigma_{k'}$ for some point in time $k' \geq k$ and consider the \iddominance game $\mathcal{G}' = (\mathcal{A},\computation{t}{\gamma},\sigma)$.
	We construct a winning strategy $\tau$ for \dom in $\mathcal{G}'$ as follows: for the existential choice in $q_2$ at point in time $k'$, \ie, at the point in time at which $a$ occurs, $\tau$ chooses to let the dominant states move to $q_4$. For the universal choice in $q_5$, it chooses to let the alternative states stay in $q_5$ up to point in time $k'-1$ and to let them move to $q_3$ afterwards, \ie, at point in time $k'$.
	Then, $\tau$ ensures the visit to a rejecting alternative state \emph{after} point in time $k'$, namely at point in time $k'+1$. The last rejecting dominant state, however, occurs \emph{before} point in time $k'$:
	for an initial play $\rho$ that is consistent with $\tau$ and in which the dominant states are in $q_2$ at point in time $k'$, \ie, with $\projq{\rho_{k'}} = q_2$, \dom's strategy $\tau$ ensures that no rejecting dominant state is visited after point in time~$k'$. In fact, no rejecting dominant state is visited after point in time $1$, at which the dominant states visited $q_1$, since, by construction of $\mathcal{A}$, state $q_2$ is non-rejecting, it is reached at point in time $2$ and staying in $q_2$ is the only possibility to be in $q_2$ at point in time $k'$. Since $k \geq 2$ and $k' \geq k$ holds by construction, we clearly have $1 < k'$ and thus the last rejecting dominant state occurs before point in time $k'$. 
	For an initial play $\rho$ that is consistent with $\tau$ and in which the dominant states move from $q_2$ to $q_3$ at some point in time $k'' < k'$, \ie, with $\projq{\rho_{k''}} = q_2$ and $\projq{\rho_{k''+1}} = q_3$, it follows from the construction of $\mathcal{A}$ that the last rejecting dominant state is visited at point in time $k''+1$. Hence, since $k'' < k'$ holds by construction, the last rejecting dominant state occurs before point in time $k'$.
	Thus, in every initial play $\rho$ that is consistent with $\tau$, every visit to a rejecting dominant state is matched by a visit to a rejecting dominant state and thus $\rho \in \win$ holds. Hence, $\tau$ is a winning strategy for \dom in $\mathcal{G}'$.
	Since we chose $k \geq 2$ arbitrarily, there thus does not exist a bad prefix for \iddominance in $\mathcal{A}$.


\subsection{Compositionality of \texorpdfstring{\iDDominance}{Delay-Dominance} (Proof of \texorpdfstring{\Cref{thm:compositonality_ddom}}{Theorem 7})}

The existence of a bad prefix for \iddominance allows us to prove \Cref{thm:compositonality_ddom}, \ie, that the parallel composition of two \iddominant strategies is \iddominant as well if the ACA ensure bad prefixes for \iddominance.
Similar to the proof of the compositionality of remorsefree dominance for safety specifications, the proof is by contradiction.
That is, we suppose that $s_1 \pc s_2$ is not \iddominant for~$\mathcal{A}$ and $p_1 \pc p_2$. Then, there is an alternative strategy~$t$ for $p_1 \pc p_2$ and an input sequence $\gamma \in (2^\inputs{1,2})^\omega$ such that there is no winning strategy for \dom in the \iddominance game $(\mathcal{A},\computation{s_1 \pc s_2}{\gamma},\computation{t}{\gamma})$.
The existence of a bad prefix for \iddominance then allows us to blame one of the processes for preventing \iddominance. Thus, we make a case distinction on whether $p_1$ is solely responsible for \dom losing the \iddominance game or whether (at least also) $p_2$ is responsible. With the properties of process strategies as well as the fact that they are represented by Moore machines, we can conclude in both cases that the strategy of the respective process cannot be \iddominant.

\begin{proof}
	For the sake of readability, let $\inputs{1,2} \coloneq (\inputs{1} \cup \inputs{2})\setminus(\outputs{1} \cup \outputs{2})$ be the set of inputs of $p_1 \pc p_2$, let $\outputs{1,2} \coloneq \outputs{1} \cup \outputs{2}$ be the set of outputs of $p_1 \pc p_2$, and let $\variables{1,2} \coloneq \inputs{1,2} \cup \outputs{1,2}$ be the set of $p_1 \pc p_2$'s variables.
	Let $\mathcal{A} = (Q,q_0,\delta,F)$.	
	Suppose that~$s_1 \pc s_2$ is not \iddominant for~$\mathcal{A}$ and $p_1 \pc p_2$. Then, there is an alternative strategy~$t$ for $p_1 \pc p_2$ and an input sequence $\gamma \in (2^\inputs{1,2})^\omega$ such that there is no winning strategy for \dom in the \iddominance game $\mathcal{G} = (\mathcal{A},\computation{s_1 \pc s_2}{\gamma},\computation{t}{\gamma})$. {}
	
	\enlargethispage{\baselineskip}
	By assumption, $\mathcal{A}$ ensures bad prefixes for \iddominance and thus, in particular, there exists a finite prefix $\nu \in (2^{\variables{1,2}})^*$ of $\computation{s_1\pc s_2}{\gamma}$ such that for all infinite extensions~$\sigma$ of $\nu$, there exists some infinite sequence $\sigma' \in (2^{\variables{1,2}})^\omega$ such that \dom loses the \iddominance game $(\mathcal{A},\sigma',\sigma)$.
	Thus, in particular, there exists a \emph{smallest} such prefix. Let $\eta \cdot \delta \in (2^{\variables{1,2}})^*$ be this smallest, \ie, shortest, such prefix, where $\eta \in (2^{\variables{1,2}})^*$ and $\delta \in 2^{\variables{1,2}}$. Since $\eta \cdot \delta$ is a bad prefix for \iddominance, it holds that for all infinite extensions $\sigma$ of $\eta \cdot \delta$, there exists some infinite sequence $\sigma' \in (2^{\variables{1,2}})^\omega$ with $\sigma' \cap \inputs{1,2} = \sigma \cap \inputs{1,2}$ such that \dom loses the \iddominance game $(\mathcal{A},\sigma',\sigma)$.
	Furthermore, since $\eta \cdot \delta$ is the smallest such prefix, there exists an infinite extension $\hat{\sigma} \in (2^{\variables{1,2}})^\omega$ of $\eta$ such that \dom wins the \iddominance game $(\mathcal{A},\sigma',\hat{\sigma})$ for all $\sigma' \in (2^{\variables{1,2}})^\omega$ with $\sigma' \cap \inputs{1,2} = \hat{\sigma} \cap \inputs{1,2}$.
	Note that $\eta \cdot \delta$ cannot be the empty sequence as otherwise, for all infinite sequences $\sigma \in (2^{\variables{1,2}})^\omega$ that agree on $p_1 \pc p_2$'s input with $\gamma$, \dom does not have a winning strategy in the \iddominance game $(\mathcal{A},\computation{t}{\gamma},\sigma)$. But then \dom particularly does not have a winning strategy in the \iddominance game $(\mathcal{A},\computation{t}{\gamma},\computation{t}{\gamma})$; contradicting that every strategy \iddominates itself (see \Cref{lem:ddom_self_dominance}).
	Let $m := |\eta \cdot \delta|$ be the length of $\eta \cdot \delta$.
	The last position $\delta$ of the prefix $\eta \cdot \delta$ contains decisions of both processes $p_1$ and $p_2$ defined by their strategies $s_1$ and $s_2$. We distinguish the following two cases:
	\begin{enumerate}
		\item There is an infinite extension $\sigma \in (2^\variables{1,2})^\omega$ of $\eta$ with $\sigma_{m-1} \cap (\variables{1,2} \setminus \outputs{1}) = \delta \cap (\variables{1,2}\setminus\outputs{1})$ and $\sigma \cap \inputs{1,2} = \gamma$ such that \dom has a winning strategy in the \iddominance game $(\mathcal{A},\computation{t}{\gamma},\sigma)$. Hence, intuitively, it is the fault of process $p_1$ and thus, in particular, of its strategy $s_1$, that \dom loses the game $\mathcal{G}$. Let~$t'$ be a strategy for $p_1 \pc p_2$ that produces~$\sigma$ on input sequence~$\gamma$, \ie, a strategy with $\computation{t'}{\gamma} = \sigma$. Furthermore, let $\gamma^{s_2} = \computation{s_1 \pc s_2}{\gamma} \cap \outputs{2}$ and let $\gamma^{t'_2} = \computation{t'}{\gamma} \cap \outputs{2}$. Since we have $\delta \cap ({\variables{1,2}} \setminus \outputs{1}) = \sigma_{m-1} \cap ({\variables{1,2}} \setminus \outputs{1})$ by assumption, we obtain that $\eta \cdot \delta$ and $\pref{\sigma}{m}$ agree on the variables in $\variables{1,2} \setminus \outputs{1}$ and thus, in particular, $\pref{\computation{s_1\pc s_2}{\gamma}}{m}$ and $\pref{\computation{t'}{\gamma}}{m}$ agree on the variables in $\variables{1,2} \setminus \outputs{1}$. Hence, it follows with the construction of $\gamma^{s_2}$ and $\gamma^{t'_2}$ that $\pref{(\gamma \cup \gamma^{s_2})}{m} = \pref{(\gamma \cup \gamma^{t'_2})}{m}$ holds. Since strategies cannot look into the future,~$s_1$ thus cannot behave differently on input sequences $\gamma \cup \gamma^{s_2}$ and $\gamma \cup \gamma^{t'_2}$ up to point in time $m-1$. Hence, $\pref{\computation{s_1}{\gamma \cup \gamma^{s_2}}}{m-1} = \pref{\computation{s_1}{\gamma \cup \gamma^{t'_2}}}{m-1}$ follows. Since $\eta\cdot\delta$ is a finite prefix of $\computation{s_1 \pc s_2}{\gamma}$ and since we have $\computation{s_1}{\gamma\cup\gamma^{s_2}} = \computation{s_1 \pc s_2}{\gamma}$ by construction of~$\gamma^{s_2}$ and by definition of computations, $\computation{s_1}{\gamma\cup\gamma^{t'_2}}$  is an infinite extension of $\eta\cdot\delta$. Furthermore, since the variables in $\inputs{1,2}$ are solely defined by~$\gamma$, it follows immediately from the definition of computations that it agrees with~$\gamma$ on these variables. By construction of $\eta \cdot \delta$, \dom thus loses the \iddominance game $\mathcal{G}' = (\mathcal{A},\computation{t}{\gamma},\computation{s_1}{\rho^{t'}\cap\inputs{1}})$. Yet, by construction of~$t'$, \dom has a winning strategy $\tau$ in the game $(\mathcal{A},\computation{t}{\gamma},\computation{t'}{\gamma})$. Let~$t'_1$ be a strategy for $p_1$ such that $\computation{t'}{\gamma} = \computation{t'_1}{\gamma\cup\gamma^{t_2}}$ holds. Then, since~$s_1$ is \iddominant for $p_1$ and $\mathcal{A}$ by assumption, it particularly \iddominates~$t'_1$ on input $\gamma \cup \gamma^{t'_2}$ and therefore \dom has a winning strategy~$\tau'$ in the \iddominance game $(\mathcal{A},\computation{t_1}{\gamma\cup\gamma^{t'_2}},\computation{s_1}{\gamma\cup\gamma^{t'_2}})$. Since $\computation{t'_1}{\gamma\cup\gamma^{t'_2}} = \computation{t}{\gamma}$ holds by construction of $t'_1$, we can thus combine~$\tau$ and~$\tau'$ to a strategy~$\tau''$ for \dom in the \iddominance game~$\mathcal{G}'$. Furthermore, since $\tau$ and $\tau'$ are winning in the respective games, it follows that for all initial plays $\rho$ that are consistent with~$\tau''$ it holds that whenever $\projq{\rho_j}\in F$ holds for a point in time $j \in \mathbb{N}$, then there is a point in time $j' \geq j$ such that $\projp{\rho_{j'}} \in F$ holds. Thus, $\tau''$ is a winning strategy for \dom in the game $\mathcal{G}'$; contradicting that \dom loses $\mathcal{G}'$.
		\item There is no infinite extension $\sigma \in (2^\variables{1,2})^\omega$ of $\eta$ with $\sigma_{m-1} \cap (\variables{1,2} \setminus \outputs{1}) = \delta \cap (\variables{1,2}\setminus\outputs{1})$ and $\sigma \cap \inputs{1,2} = \gamma$ such that \dom has a winning strategy in the \iddominance game $(\mathcal{A},\computation{t}{\gamma},\sigma)$. Hence, intuitively, it is (at least also) the fault of process $p_2$ and thus, in particular, of its strategy $s_2$, that \dom loses the game $\mathcal{G}$. By construction of $\eta \cdot \delta$, there exists an infinite extension $\sigma'$ of $\eta$ such that \dom has a winning strategy in the \iddominance game $(\mathcal{A},\computation{t}{\gamma},\sigma')$. Let~$t'$ be a strategy for $p_1 \pc p_2$ that produces~$\sigma'$ on input~$\gamma$, \ie, a strategy with $\computation{t'}{\gamma} = \sigma'$. Furthermore, let $\gamma^{s_1} = \computation{s_1 \pc s_2}{\gamma} \cap \outputs{1}$ and let $\gamma^{t'_1} = \computation{t'}{\gamma} \cap \outputs{1}$. Since $\computation{t'}{\gamma}$ is an infinite extension of~$\eta$ by definition of $t'$ and since $\eta \cdot \delta$ is a prefix of $\computation{s_1 \pc s_2}{\gamma}$ by construction of $\eta \cdot \delta$, we have $\pref{\computation{t'}{\gamma}}{m-1} = \pref{\computation{s_1 \pc s_2}{\gamma}}{m-1}$. Hence, it follows with the construction of $\gamma^{s_1}$ and $\gamma^{t'_1}$ that $\pref{(\gamma \cup \gamma^{s_1})}{m-1} = \pref{(\gamma \cup \gamma^{t'_1})}{m-1}$ holds. Since strategies cannot look into the future,~$s_2$ thus cannot behave differently on $\gamma \cup \gamma^{s_1}$ and $\gamma \cup \gamma^{t'_1}$ up to point in time $m-2$ and thus we have $\pref{\computation{s_2}{\gamma\cup\gamma^{s_1}}}{m-2} = \pref{\computation{s_2}{\gamma\cup\gamma^{t'_1}}}{m-2}$. Hence, $\computation{s_2}{\gamma\cup\gamma^{t'_1}}$ is an infinite extension of $\eta$. Since we consider process strategies that are represented by Moore machines, $s_2$ cannot react directly to an input. Thus, $\pref{\computation{s_2}{\gamma\cup\gamma^{s_1}}}{m-1} \cap \outputs{2} = \pref{\computation{s_2}{\gamma\cup\gamma^{t'_1}}}{m-1} \cap \outputs{2}$ holds. Furthermore, we have $\computation{s_2}{\gamma\cup\gamma^{s_1}} \cap \inputs{1,2} = \computation{s_2}{\gamma\cap\gamma^{t'_1}} \cap\inputs{1,2}$ by the definition of computations. By construction of $\inputs{1,2}$, $\outputs{1,2}$, and $\variables{1,2}$ as well as by definition of architectures, $\inputs{1,2} \cup \outputs{2} = \variables{1,2} \setminus \outputs{1}$ holds. Thus, $\pref{\computation{s_2}{\gamma\cup\gamma^{s_1}}}{m-1} \cap ({\variables{1,2}} \setminus \outputs{1}) = \pref{\computation{s_2}{\gamma\cup\gamma^{t'_1}}}{m-1} \cap ({\variables{1,2}} \setminus \outputs{1})$ follows. Let $\delta' = \computation{s_1}{\gamma\cup\gamma^{t'_2}}_{m-1}$. Then, $\delta \cap ({\variables{1,2}}\setminus\outputs{1}) = \delta' \cap ({\variables{1,2}}\setminus\outputs{1})$ holds since we have $\delta = \computation{s_1}{\gamma\cup\gamma^{s_2}}_{m-1}$ by construction of the prefix $\eta \cdot \delta$. Moreover, $\computation{s_1}{\gamma\cup\gamma^{t'_2}}$ is an infinite extension of $\eta \cdot \delta'$. By construction of $t'$, \dom has a winning strategy $\tau$ in the \iddominance game $(\mathcal{A},\computation{t}{\gamma},\computation{t'}{\gamma})$. Let $t'_2$ be a strategy for $p_2$ such that $\computation{t'}{\gamma} = \computation{t'_2}{\gamma \cup \gamma^{t'_1}}$ holds. Then, since $s_2$ is \iddominant for $p_2$ and $\mathcal{A}$ by assumption, it particularly \iddominates $t'_2$ on input $\gamma \cup \gamma^{t'_1}$ and therefore \dom has a winning strategy $\tau'$ in the \iddominance game $(\aca{\varphi},\computation{t'_2}{\gamma\cup\gamma^{t'_1}},\computation{s_2}{\gamma\cup\gamma^{t'_1}})$. Similar to the previous case, we can combine $\tau$ and $\tau'$ to a winning strategy $\tau''$ for \dom in the \iddominance game $(\aca{\varphi},\computation{t}{\gamma},\computation{s_2}{\gamma\cup\gamma^{t'_1}})$. Thus, $\computation{s_2}{\gamma\cup\gamma^{t'_1}}$ is an infinite extension $\sigma$ of $\eta$ with $\sigma_{m-1} \cap ({\variables{1,2}}\setminus\outputs{1}) = \delta \cap ({\variables{1,2}}\setminus\outputs{1})$ and $\sigma \cap \inputs{1,2} = \gamma$ such that \dom wins the \iddominance game $(\mathcal{A},\computation{t}{\gamma},\sigma)$; contradicting the assumption that no such infinite extension exists.
	\end{enumerate}
	Thus, no matter whether $p_1$ or $p_2$ is responsible for \dom losing the \iddominance game $(\mathcal{A},\computation{s_1 \pc s_2}{\gamma},\computation{t}{\gamma})$, the respective strategy $s_1$ or $s_2$ cannot be \iddominant as we obtain a contradiction. Hence, the claim that $s_1 \pc s_2$ is \iddominant for~$\mathcal{A}$ and $p_1 \pc p_2$ if $s_1$ and $s_2$ are \iddominant for~$\mathcal{A}$ and $p_1$ and $p_2$, respectively, follows.
\end{proof}


\section{Automaton Construction for \texorpdfstring{\iDDominant}{Delay-Dominant} Strategies}\label{app:automaton_construction}

In this section, we provide more details of the three-step automaton construction for synthesizing \iddominant strategies. Moreover, we give the proofs of the lemmas and theorems in \Cref{sec:automaton_construction} that we omitted in the paper due to space restrictions.
The construction of the alternating co-Büchi automaton $\ddACA{\varphi}$ relies heavily on the following observation:

\begin{lemma}\label{lem:disjunctive_ddominance}
	Let $\varphi$ be an LTL formula. Let $\mathcal{A}_{\varphi}$ be an ACA with $\Lang{\varphi} = \Lang{\mathcal{A}_\varphi}$. Let $s$ and~$t$ be strategies for process $p_i$. Let $\gamma \in (2^\inputs{i})^\omega\!$. Then, $\ddominatesSequence{s}{t}{\gamma}{\mathcal{A}_{\varphi}}$ for $\mathcal{A}_\varphi$ if, and only if, either~(i)~$\computation{t}{\gamma}\not\models\varphi$ holds or (ii) $\ddominatesSequence{s}{t}{\gamma}{\mathcal{A}_{\varphi}}$ and for the winning strategy $\tau$ of \dom in the \iddominance game we have for every initial play $\rho$ that is consistent with $\tau$ that there is a point in time $k$ such that $\projq{\rho_{k'}} \not\in F$ for all $k' \geq k$.
\end{lemma}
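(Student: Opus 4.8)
The plan is to prove the two implications of the equivalence separately, using the correspondences between \dom/\alt strategies in the \iddominance game and run trees of $\mathcal{A}_\varphi$ from \Cref{lem:strategy_induces_runtree_dom} and \Cref{lem:runtree_induces_strategy_alt} (together with their symmetric variants), the characterization of acceptance of an alternating co-Büchi automaton on a fixed word as a two-player membership game, and determinacy of that game. Throughout, recall that \dom winning the \iddominance game on $(\mathcal{A}_\varphi,\computation{t}{\gamma},\computation{s}{\gamma})$ means that in every consistent play $\rho$ the last rejecting dominant state, if any, is followed by a rejecting alternative state; in particular, if the alternative side visits rejecting states infinitely often, \dom wins regardless of the dominant side.

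For ``$\Leftarrow$'', if~(ii) holds then $\ddominatesSequence{s}{t}{\gamma}{\mathcal{A}_\varphi}$ is explicitly assumed, so assume~(i), i.e.\ $\computation{t}{\gamma}\not\models\varphi$. Since $\Lang{\mathcal{A}_\varphi}=\Lang{\varphi}$, $\mathcal{A}_\varphi$ rejects $\computation{t}{\gamma}$, so \emph{every} run tree of $\mathcal{A}_\varphi$ induced by $\computation{t}{\gamma}$ has a branch visiting infinitely many rejecting states. In the \iddominance game, steps~1 and~4 play out exactly the membership game of $\mathcal{A}_\varphi$ on $\computation{t}{\gamma}$---with \alt making the existential choices and \dom the universal ones---while steps~2 and~3 leave the alternative branch untouched. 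Because $\mathcal{A}_\varphi$ does not accept $\computation{t}{\gamma}$, \alt has no strategy that keeps every alternative branch accepting, hence by determinacy \dom has a strategy forcing the alternative branch to visit infinitely many rejecting states, independently of all dominant-side moves. Under such a strategy the winning condition is trivially met in every consistent play, so $\ddominatesSequence{s}{t}{\gamma}{\mathcal{A}_\varphi}$ holds.

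For ``$\Rightarrow$'', assume $\ddominatesSequence{s}{t}{\gamma}{\mathcal{A}_\varphi}$ with winning \dom strategy $\tau$. If $\computation{t}{\gamma}\not\models\varphi$ we get~(i); so assume $\computation{t}{\gamma}\models\varphi$ and fix an accepting run tree $r^t$ of $\mathcal{A}_\varphi$ induced by $\computation{t}{\gamma}$. By \Cref{lem:runtree_induces_strategy_alt}, $r^t$ yields an \alt strategy $\mu$ such that, along every play in which \alt follows $\mu$ on the alternative side, the alternative branch is a branch of $r^t$ and hence visits only finitely many rejecting states. Pitting $\tau$ against the \alt strategies that follow $\mu$ induces (by the argument behind \Cref{lem:strategy_induces_runtree_dom}) a run tree $\hat{r}^s$ of $\mathcal{A}_\varphi$ induced by $\computation{s}{\gamma}$ whose branches are the dominant branches of those plays; since each such play has a finitely rejecting alternative branch and $\tau$ is winning, each branch of $\hat{r}^s$ is finitely rejecting as well. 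Thus $\hat{r}^s$ is accepting---in particular $\computation{s}{\gamma}\models\varphi$, as would also follow from the argument proving \Cref{lem:ddom_implies_dom}. It remains to obtain from $\hat{r}^s$ a winning \dom strategy $\tau'$ realizing it (so that every consistent play of $\tau'$ visits only finitely many rejecting dominant states) that still matches every rejecting dominant state; such a $\tau'$ witnesses~(ii).

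The construction of $\tau'$ is the main obstacle. One cannot simply let \dom pin the dominant run to $\hat{r}^s$: \alt may switch to another accepting alternative run tree whose rejecting states occur earlier, so $\tau'$ must adapt its dominant choices to the alternative run tree \alt actually reveals while keeping the dominant run finitely rejecting. I would exploit that a winning $\tau$ can never, after a finite prefix, be committed to a dominant run tree that must be non-accepting---otherwise \alt could answer with an accepting alternative run tree and win---so an accepting completion of the dominant run is always still available; combined with $\mu$ (which tells \dom how the matching is supposed to proceed as long as \alt stays inside an accepting tree) and with the determinacy argument of the ``$\Leftarrow$'' part (which lets \dom force infinitely many rejecting alternative states as soon as \alt's tree turns non-accepting), this yields a $\tau'$ that follows $\tau$ on the dominant side but diverts onto an accepting completion, and on the alternative side either mimics $\tau$'s matching moves or drives into an infinitely rejecting branch. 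A final case analysis over \alt's behaviour then confirms that every consistent play of $\tau'$ has finitely many rejecting dominant states and lies in $\win$, establishing~(ii).
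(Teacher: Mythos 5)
Your right-to-left direction is sound and takes a genuinely shorter route than the paper: where you invoke determinacy of the membership game of $\mathcal{A}_\varphi$ on $\computation{t}{\gamma}$ (in which \dom plays the pathfinder via steps~1 and~4 of each round) to obtain a \dom strategy forcing infinitely many rejecting alternative states, the paper constructs that strategy by hand from the family of all run trees of $\mathcal{A}_\varphi$ induced by $\computation{t}{\gamma}$, arguing explicitly that the universal choices for $\computation{t}{\gamma}$ can be made history-dependently. Citing determinacy of this $\omega$-regular game is legitimate and buys a considerably cleaner argument.

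The left-to-right direction, however, has a genuine gap, and you have set yourself a harder problem than the lemma requires. What you actually establish is that the plays consistent with $\tau$ \emph{and with the particular \alt strategy $\mu$ extracted from one accepting run tree $r^t$} have finitely many rejecting dominant states; condition~(ii) needs a winning strategy \emph{all} of whose consistent plays, against arbitrary \alt behaviour, have this property. You correctly flag the construction of $\tau'$ as the main obstacle, but the sketch does not close it: the proposed case split between ``\alt stays inside an accepting tree'' and ``\alt's tree turns non-accepting'' is not implementable as a strategy, because non-acceptance of the alternative run tree is a property of infinite behaviour that \dom cannot detect after any finite prefix; and ``diverting onto an accepting completion'' of the dominant run may introduce rejecting dominant states at positions where $\tau$'s matching guarantee no longer applies, so it is not clear the diverted plays remain in $\win$. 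The paper's proof shows that none of this machinery is needed: it argues by contraposition that if even one initial play consistent with the \emph{given} winning strategy $\tau$ visits infinitely many rejecting dominant states, then the winning condition forces infinitely many rejecting alternative states in that play and in every play obtained by varying only \alt's existential choices for $\computation{t}{\gamma}$; since those plays trace out branches of \emph{every} run tree of $\mathcal{A}_\varphi$ induced by $\computation{t}{\gamma}$ (via the \alt analogue of \Cref{lem:strategy_induces_runtree_dom}), every such run tree has a branch with infinitely many rejecting states, i.e., $\computation{t}{\gamma}\not\models\varphi$ and~(i) holds. So the dichotomy is witnessed by $\tau$ itself, and no new strategy $\tau'$ has to be built. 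To repair your proof, replace the construction of $\tau'$ by this contrapositive argument (or supply the missing construction in full, which your current sketch does not do).
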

\begin{proof}
	First, let $\ddominatesSequence{s}{t}{\gamma}{\mathcal{A}_{\varphi}}$ hold for $\mathcal{A}_\varphi$. Then, there exists a winning strategy $\tau$ for \dom in the \iddominance game $(\mathcal{A}_\varphi,\computation{t}{\gamma},\computation{s}{\gamma})$.
	If, for every initial play $\rho$ that is consistent with $\tau$, $\rho$ has only finitely many visits to rejecting dominant states for every consistent play, then (ii) holds and thus the claim follows. 
	Otherwise, we have infinitely many visits to rejecting dominant states for some initial play $\rho$ that is consistent with $\tau$. Let $\mu$ be the strategy for \alt such that~$\rho$ is consistent with both~$\tau$ and~$\mu$.
	Note that, by construction of the \iddominance game, only the part of $\mu$ that defines the universal choices in $\mathcal{A}_\varphi$ for $\computation{s}{\gamma}$ affects whether or not $\rho$ contains infinitely many visits to rejecting dominant states. Let $\mu'$ be a strategy for only these choices that coincides with the ones defined by $\mu$.
	Then, for all full strategies $\mu''$ for \alt that coincide with $\mu'$ on the universal choices in $\mathcal{A}_\varphi$ for $\computation{s}{\gamma}$, the initial play $\rho'$ that is consistent with both~$\tau$ and~$\mu''$ contains infinitely many visits to rejecting dominant states.
	Since $\tau$ is a winning strategy for \dom by assumption and by construction of the \iddominance game, it follows that  every such initial play $\rho'$ contains infinitely many rejecting alternative states. Thus, intuitively, independent of the existential choices in $\mathcal{A}_\varphi$ for $\computation{t}{\gamma}$, $\tau$ can enforce infinitely many rejecting alternative states.
	
	By \Cref{lem:strategy_induces_runtree_dom}, or, more precisely, by the analogous lemma for strategies for \alt, for all such strategies $\mu''$, there exists a run tree $r_{\mu''}$ of $\mathcal{A}_\varphi$ induced by $\computation{t}{\gamma}$ that reflects the existential choices of $\mathcal{A}_\varphi$ for $\computation{t}{\gamma}$ defined by $\mu''$. Moreover, we have $\mathcal{R}(r) = \{ \hat{\rho}^a \mid \rho \in \mathcal{P}_{\mu''} \}$, where $\mathcal{R}(r)$ is the set of infinite branches of $r$ and where $\mathcal{P}_{\mu''}$ is the set of initial plays that are consistent with~$\mu$.
	Thus, by definition of the projected alternative play, we obtain that for all strategies $\mu''$ for \alt extending $\mu'$, the initial play $\rho'$ that is consistent with both $\tau$ and $\mu''$ is a branch of $r_{\mu''}$. Since $\rho'$ contains infinitely many rejecting alternative states, it follows that all such run trees $r_{\mu''}$ contain a branch with infinitely many visits to rejecting states.
	Moreover, since $\mu'$ does not fix any decision regarding the choices for $\computation{t}{\gamma}$, indeed \emph{every} run tree of $\mathcal{A}_\varphi$ induced by $\computation{t}{\gamma}$ contains a branch with infinitely many visits to rejecting states.
	Therefore, by definition of alternating co-Büchi automata, $\mathcal{A}_\varphi$ rejects $\computation{t}{\gamma}$. Since $\Lang{\varphi} = \Lang{\mathcal{A}_\varphi}$ holds by assumption, $\computation{t}{\gamma}\not\models\varphi$ follows. Hence, (i) holds and therefore the claim follows.
	
	Second, let (i) or (ii) hold. If (ii) holds, then $\ddominatesSequence{s}{t}{\gamma}{\mathcal{A}_{\varphi}}$ for $\mathcal{A}_\varphi$ follows immediately. Thus, let~(i) hold, \ie, we have $\computation{t}{\gamma}\not\models\varphi$. Then, since $\Lang{\varphi} = \Lang{\mathcal{A}_\varphi}$ holds by assumption,~$\mathcal{A}_\varphi$ rejects $\computation{t}{\gamma}$ and hence for all run trees of $\mathcal{A}_\varphi$ induced by $\computation{t}{\gamma}$, there is a branch that visits infinitely many rejecting states. 
	Let $r$ be some run tree of $\mathcal{A}_\varphi$ induced by $\computation{t}{\gamma}$. By \Cref{lem:runtree_induces_strategy_alt}, there exists a strategy $\mu$ for \alt in the \iddominance game that reflects the existential choices in $\mathcal{A}_\varphi$ for $\computation{t}{\gamma}$ defined by $\mu$. Moreover, we have $\mathcal{R}(r) = \{ \hat{\rho}^a \mid \rho \in \mathcal{P}_{\mu} \}$, where $\mathcal{R}(r)$ is the set of infinite branches of $r$ and where $\mathcal{P}_{\mu}$ is the set of initial plays that are consistent with~$\mu$.
	Note that only the part of $\mu$ controlling the existential choices of $\mathcal{A}_\varphi$ for $\computation{t}{\gamma}$ is relevant for this property. Thus, in fact, there are strategies $\mu^r$ for all run trees $r$ of $\mathcal{A}_\varphi$ induced by $\computation{t}{\gamma}$ that coincide for the other part of a strategy for \alt, \ie, the universal choices of $\mathcal{A}_\varphi$ for $\computation{s}{\gamma}$.
	Let $\mathcal{M}$ be the set of such strategies $\mu^r$ of all run trees $r$.
	As shown above, the sequences of alternative states in consistent plays of such strategies $\mu^r \in \mathcal{M}$ coincide with branches of $r$.
	Thus, since every $r$ contains a branch $b$ that visits infinitely many rejecting states, there also exists an initial play $\rho^r$ of the \iddominance game that is consistent with $\mu^r$ and which contains infinitely many rejecting alternative states. Note that since the number of rejecting alternative states is only affected by the alternative states of the play and since all $\mu^r$ coincide on the universal choices of $\mathcal{A}_\varphi$ for $\computation{s}{\gamma}$, there are, in particular, such plays $\rho^r$ that all coincide on the dominant states. 
	Moreover, there is a set $\mathcal{P}$ of such plays such that for every two plays $\rho, \rho' \in \mathcal{P}$ that coincide in the alternative states up to index $j$ as well as in the previous decisions for the alternative states in the current round $j+1$ of the game,~$\rho$ and~$\rho'$ coincide on the universal decision for the alternative state in round $j+1$ as well: suppose that this is not the case. Then, there is a finite prefix $w \in V^\omega$ of a play that coincides with~$\rho$ and~$\rho'$ up to point in time $|w|$ and that requires a universal choice between options~$u$ and~$u'$ in the next step. Moreover, suppose that $u$ is the correct extension of~$w$ for a play $\rho$, while~$u'$ is the correct one for a play $\rho'$, \ie, the respective other choice does not yield a play with infinitely many rejecting alternative states. But then, there is also the run tree~$r$ that, depending on the universal choice $u$ vs.\ $u'$ makes the existential choices that causes a play with only finitely many rejecting alternative states, \ie, $\rho$ for $u'$ and $\rho'$ for $u$. Since this is the case for all such situations and since there are run trees for all possible combinations of existential choices, there thus exists a run tree whose branches all visit only finitely many rejecting states; contradicting the assumption.
	Hence, there indeed exists such a set $\mathcal{P}$ of plays of the \iddominance game that (i) all contain infinitely many rejecting alternative states,~(ii)~coincide on the dominant states, and (iii) where for every two plays $\rho, \rho' \in \mathcal{P}$ that coincide in the alternative states up to index $j$ as well as in the previous decisions for the alternative states in the current round $j+1$ of the game, $\rho$ and $\rho'$ coincide on the universal decision for the alternative state in round $j+1$ as well.
		Thus, in particular, for every finite prefix of a play in $\mathcal{P}$, the next universal decision of $\mathcal{A}_\varphi$ for $\computation{t}{\gamma}$ can always be made solely based on the information about the history.
	Hence, we construct a \emph{strategy}~$\tau'$ for the universal choices of $\mathcal{A}_\varphi$ for $\computation{t}{\gamma}$ from $\mathcal{P}$ by defining the respective choice defined by the plays in $\mathcal{P}$ for every finite prefix.
	But then, since all plays in $\mathcal{P}$ contain infinitely many rejecting alternative states, every initial play that is consistent with $\tau'$ does so as well.
	Since the existential choices of $\mathcal{A}_\varphi$ for $\computation{s}{\gamma}$ do not influence the alternative states of a play, it follows that for all strategies~$\tau$ for \dom that coincides with~$\tau'$ on the universal choices for $t$, all consistent initial plays contain infinitely many rejecting alternative states. Thus, all such strategies $\tau$ are winning strategies for \dom and thus $\ddominatesSequence{s}{t}{\gamma}{\mathcal{A}_{\varphi}}$ holds for~$\mathcal{A}_\varphi$.
\end{proof}

\subsection{Soundness and Completeness of \texorpdfstring{$\ddACA{\varphi}$}{B{A}} (Proof of \texorpdfstring{\Cref{lem:soundness_completeness_sequence}}{Lemma 10})}

Let $\varphi$ be an LTL formula. Let $\mathcal{A}_{\varphi}$ and $\mathcal{A}_{\neg\varphi}$ be two alternating co-Büchi automata with $\Lang{\varphi} = \Lang{\mathcal{A}_\varphi}$ and $\Lang{\neg\varphi} = \Lang{\mathcal{A}_{\neg\varphi}}$. Let $\ddACA{\varphi}$ be the ACA constructed from~$\mathcal{A}_{\varphi}$ and $\mathcal{A}_{\neg\varphi}$ as described in \Cref{def:aca_dd}.
We prove that $\ddACA{\varphi}$ is sound and complete in the sense that it recognizes whether or not a strategy $s$ for a process $p_i$ \iddominates a strategy $t$ for $p_i$ on an input sequence $\gamma \in (2^\inputs{i})^\omega$.
First, we prove soundness of $\ddACA{\varphi}$, \ie, that if $\ddACA{\varphi}$ accepts a sequence $\computation{s}{\gamma} \cup \primeSequence{\computation{t}{\gamma} \cap \outputs{i}}$, then $\ddominatesSequence{s}{t}{\gamma}{\mathcal{A}_{\varphi}}$ holds:

\begin{lemma}\label{lem:soundness_sequence}
	Let $\varphi$ be an LTL formula.
	Let $\mathcal{A}_{\varphi}$ and $\mathcal{A}_{\neg\varphi}$ be ACAs with $\Lang{\varphi} = \Lang{\mathcal{A}_\varphi}$ and $\Lang{\neg\varphi} = \Lang{\mathcal{A}_{\neg\varphi}}$. Let $\ddACA{\varphi}$ be the ACA constructed from~$\mathcal{A}_{\varphi}$ and $\mathcal{A}_{\neg\varphi}$ according to~\Cref{def:aca_dd}. Let $s$ and $t$ be strategies for process $p_i$ and let $\gamma \in (2^\inputs{i})^\omega\!$. Let $\sigma \in (2^{\variables{i} \cup \primedOutputs{i}})^\omega\!$ with $\sigma := \computation{s}{\gamma} \cup \primeSequence{\computation{t}{\gamma}\cap\outputs{i}}$. If $\ddACA{\varphi}$ accepts $\sigma$, then $\ddominatesSequence{s}{t}{\gamma}{\mathcal{A}_{\varphi}}$ holds.
\end{lemma}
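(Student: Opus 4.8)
The plan is to take an accepting run tree $r=(\mathcal{T},\ell)$ of $\ddACA{\varphi}$ induced by $\sigma$ (which exists by assumption) and extract from it a winning strategy for \dom in the \iddominance game $(\mathcal{A}_\varphi,\computation{t}{\gamma},\computation{s}{\gamma})$; by the definition of \iddominance this is precisely what $\ddominatesSequence{s}{t}{\gamma}{\mathcal{A}_\varphi}$ asserts. Write $\mathcal{A}_\varphi=(Q,q_0,\delta,F)$ and $\mathcal{A}_{\neg\varphi}=(Q^c,q^c_0,\delta^c,F^c)$. First I would record the alphabet bookkeeping: unfolding the definitions of $\primeSequenceName$ and $\unprimeName$ for $\sigma=\computation{s}{\gamma}\cup\primeSequence{\computation{t}{\gamma}\cap\outputs{i}}$ gives $\sigma\cap\variables{i}=\computation{s}{\gamma}$ and $\unprime{\sigma\cap\primedVariables{i}}=\computation{t}{\gamma}$, so in every application of $\ddACAComponent{\delta}$ along $r$ the letter $\iota$ is $\computation{s}{\gamma}$ at that position (the dominant, $s$-side) and $\iota'$ is $\computation{t}{\gamma}$ (the alternative, $t$-side); this matches the role split of the two sequences in the game.

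The set of children of the root $\ell(\varepsilon)=(q_0,q_0,\top)$ must model exactly one of the two disjuncts of $\ddACAComponent{\delta}((q_0,q_0,\top),\sigma_0)$, which splits the argument. \emph{Case~1: it models the $\mathcal{A}_{\neg\varphi}$-disjunct.} Since $\ddACAComponent{\delta}$ restricted to $Q^c$-states is just $\delta^c$ and $\ddACAComponent{F}\cap Q^c=F^c$, the subtree of $r$ below $\varepsilon$ is an accepting run tree of $\mathcal{A}_{\neg\varphi}$ on $\computation{t}{\gamma}$, so $\computation{t}{\gamma}\in\Lang{\mathcal{A}_{\neg\varphi}}=\Lang{\neg\varphi}$, i.e.\ $\computation{t}{\gamma}\not\models\varphi$; then $\ddominatesSequence{s}{t}{\gamma}{\mathcal{A}_\varphi}$ follows directly from \Cref{lem:disjunctive_ddominance}, condition~(i).

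\emph{Case~2: the root's children model the product disjunct.} Then every node of $r$ lies in the $(Q\times Q\times\{\top,\bot\})$-part, and I would read off \dom's strategy $\tau$ round by round, maintaining the invariant that a node $x$ at depth $j$ with $\ell(x)=(p,q,m)$ sits ``at'' game positions of the form $((p,q),j)$. Since the children of $x$ model $\bigwedge_{c\in\delta(p,\iota')}\bigvee_{c'\in\delta(q,\iota)}\bigwedge_{q'\in c'}\bigvee_{p'\in c}\successor{p'}{q'}{m}$, they encode, for every $c$ that \alt may choose, a reply $c'_x(c)\in\delta(q,\iota)$, and for every $q'\in c'_x(c)$ that \alt may choose, a reply $p'_x(c,q')\in c$ with $\successor{p'_x(c,q')}{q'}{m}$ labelling a child of $x$; I let $\tau$ choose $c'_x(c)$ at $((p,q,c),j)$ and $((p'_x(c,q'),q'),j{+}1)$ at $((p,q,c,q'),j)$, acting arbitrarily elsewhere. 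What makes this well defined is precisely that the move order of the \iddominance game (\alt picks $c$, \dom picks $c'$, \alt picks $q'\in c'$, \dom picks $p'\in c$) mirrors the quantifier nesting $\bigwedge_c\bigvee_{c'}\bigwedge_{q'}\bigvee_{p'}$, so the choices stored in the run tree are exactly the ones $\tau$ must make; a straightforward induction then shows that every initial play $\rho$ consistent with $\tau$ follows a unique branch $b$ of $r$ with $\projp{\rho_{4j}}$ and $\projq{\rho_{4j}}$ equal to the first and second components of $\ell(b_j)$.

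The remaining and, I expect, main step is to show $\tau$ is winning, which is an argument about the mark $m$. Suppose towards a contradiction that some initial $\rho$ consistent with $\tau$, following branch $b$ with $\ell(b_j)=(p_j,q_j,m_j)$, has $q_j\in F$ for some $j$ while $p_{j'}\notin F$ for all $j'\ge j$. For $j=0$ this is impossible, since the round-$0$ position is $((q_0,q_0),0)$, whence $p_0=q_0\in F$ and the requirement is already met at $j'=0$. For $j\ge1$, $\ell(b_j)=\successor{p_j}{q_j}{m_{j-1}}$ with $p_j\notin F$ and $q_j\in F$, so the first or the second case of $\vartheta$ yields $m_j=\bot$; and for $j'>j$, $\ell(b_{j'})=\successor{p_{j'}}{q_{j'}}{m_{j'-1}}$ with $p_{j'}\notin F$ and, inductively, $m_{j'-1}=\bot$, so the second case of $\vartheta$ yields $m_{j'}=\bot$. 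Hence $b$ meets $\ddACAComponent{F}=(Q\times Q\times\{\bot\})\cup F^c$ infinitely often, contradicting that $r$ is accepting. Therefore along every such $\rho$ each rejecting dominant state is eventually matched by a rejecting alternative state, so $\rho\in\win$, $\tau$ is winning, and $\ddominatesSequence{s}{t}{\gamma}{\mathcal{A}_\varphi}$ holds. The only genuinely delicate points are this mark-persistence observation and the faithfulness of the run-tree-to-strategy translation; everything else is unwinding definitions.
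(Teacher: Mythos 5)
Your proposal is correct and takes essentially the same route as the paper's proof: a case split on which disjunct the root's children model, reduction to acceptance by $\mathcal{A}_{\neg\varphi}$ plus \Cref{lem:disjunctive_ddominance}(i) in the first case, and in the second case a translation of the run tree into a \dom strategy exploiting that the quantifier nesting of $\ddACAComponent{\delta}$ mirrors the game's move order, followed by the mark-persistence argument for $\vartheta$. The only cosmetic difference is that you argue winningness by contradiction (an unmatched rejecting dominant state forces $m=\bot$ forever), whereas the paper argues the contrapositive directly from finiteness of $\bot$-marked visits.
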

\begin{proof}
	Suppose that $\ddACA{\varphi}$ accepts $\sigma$. Then, there exists a run tree $r$ of $\ddACA{\varphi}$ induced by~$\sigma$ whose branches all visit only finitely many rejecting states.
	By definition, $r$ defines the existential choices in~$\ddACA{\varphi}$ for~$\sigma$. 
	Thus, in particular, $r$ defines the choice in the initial state $(q_0,q_0,\top)$ for, intuitively, either \enquote{entering} the ACA $\mathcal{A}_{\neg\varphi}$ for the negated formula or for \enquote{entering} the product automaton part of $\ddACA{\varphi}$.
	
	First, suppose that $r$ defines to \enquote{enter} the ACA $\mathcal{A}_{\neg\varphi}$. Then, by construction of $\ddACA{\varphi}$, there is a run tree $\tilde{r}$ of $\mathcal{A}_{\neg\varphi}$ that only differs from $r$ in the labeling of the root: in $\tilde{r}$, the root is labeled with $q_0$, while it is labeled with $(q_0,q_0,\top)$ in $r$. Thus, by definition of the rejecting states $\ddACAComponent{F}$ of $\ddACA{\varphi}$, $\tilde{r}$ visits only finitely many rejecting states as well.
	Moreover, since $\mathcal{A}_{\neg\varphi}$ is an ACA with alphabet $2^\variables{i}$, the successors in $r$ only depend on the valuations of the variables in $\sigma \cap \primedVariables{i}$. 
	By assumption, all branches of $r$ visit only finitely many rejecting states. Thus, all branches of the corresponding run tree $\tilde{r}$ of $\mathcal{A}_{\neg\varphi}$ visit only finitely many rejecting states as well. 
	Hence, $\mathcal{A}_{\neg\varphi}$ accepts $\unprime{\sigma \cap \primedVariables{i}}$ and thus $\unprime{\sigma \cap \primedVariables{i}} \in \Lang{\mathcal{A}_{\neg\varphi}}$. Since $\Lang{\neg\varphi} = \Lang{\mathcal{A}_{\neg\varphi}}$ by assumption, $\unprime{\sigma\cap\primedVariables{i}}\not\models\varphi$ follows.
	By definition, we have $\unprime{\sigma\cap\primedVariables{i}} = \computation{t}{\gamma}$. Therefore, $\computation{t}{\gamma}\not\models\varphi$ holds.
	With \Cref{lem:disjunctive_ddominance} we thus obtain immediately that $\ddominatesSequence{s}{t}{\gamma}{\mathcal{A}_{\varphi}}$ holds for~$\mathcal{A}_{\varphi}$; proving the claim.
	
	Second, suppose that $r$ defines to \enquote{enter} the product automaton part of $\ddACA{\varphi}$. Then, we construct a strategy $\tau$ for \dom in the \iddominance game from $r$ as follows.
	Let $\eta \cdot \delta$ be a finite sequence of positions with $\eta \in V^*$ and $\delta \in V$. We only define~$\tau$ explicitly on sequences $\eta \cdot \delta$ that can occur in the \iddominance game $(\aca{\varphi},\computation{t}{\gamma},\computation{s}{\gamma})$ and where $\delta$ is controlled by \dom; on all other sequences we define $\tau(\eta\cdot\delta) = v$ for some arbitrary $v \in V$ that is a valid extension of $\eta \cdot \delta$. Thus, in the following we assume that~$\eta\cdot\delta$ is a prefix that can occur in the game and that $\delta$ is of the form $((p,q,c),j)$ or $((p,q,c,q'),j)$. We map~$\eta \cdot \delta$ to a prefix of a branch of $r$ if there is a compatible one: a compatible branch $b$ of $r$ agrees with the finite projected play $\hat{\eta}$ up to point in time~$|\eta|$. Note her that, slightly misusing notation, we apply the definition of a projected play also to the finite prefix $\eta$ of a play.
	Moreover, no matter whether $\delta$ is of the form $((p,q,c),j)$ or $((p,q,c,q'),j)$, we have  $b_{|\eta|+1} = (p,q,m)$ for some $m \in \{\top,\bot\}$.
	If there is no compatible branch in $r$, we again define $\tau(\eta\cdot\delta) = v$ for some arbitrary $v \in V$ that is a valid extension of $\eta \cdot \delta$.
	Otherwise, the successors of $(p,q,m)$ in $b$ define the choice of $\tau$: by definition, the set $\mathcal{S}$ of successors of $(p,q,m)$ satisfies $\ddACAComponent{\delta}((p,q,m),\sigma_{|\eta|+1})$. Thus, for all $c \in \delta(p,\unprime{\sigma_{|\eta|+1} \cap \primedVariables{i}})$, there is some $c' \in \delta(q,\sigma_{|\eta|+1} \cap \variables{i})$ such that for all $q' \in c'$, there is some $p' \in c$ such that $\successor{p'}{q'}{m} \in \mathcal{S}$ holds. Note here that we do not distinguish between $(q_0,q_0,\top)$ and other states $(p,q,m)$ since, by assumption, $r$ defines the choice of entering the product automaton part of $\ddACA{\varphi}$ and thus the choice of the second disjunct for $(q_0,q_0,\top)$ which coincides with $\ddACAComponent{\delta}$ for other states $(p,q,m)$.
	If $\delta = ((p,q,c),j)$, we thus define $\tau(\eta\cdot\delta) = ((p,q,c_c'),j)$, where the choice of~$c'$ is based on $c$.
	If $\delta = ((p,q,c,q'),j)$, then we define $\tau(\eta\cdot\delta) = ((p',q'),j+1)$, where the choice of $p'$ is based on $c$, $c'$, and $q'$.
		
	It remains to show that $\tau$ is winning from the initial position $v_0$. Let $\rho$ be some initial play that is consistent with $\tau$. 
	Then, by construction of $\tau$, there is a branch $b$ of $r$ that coincides with the projected play $\hat{\rho}$ on $p$ and $q$, \ie, we have  $\hat{\rho} = \hat{b}$, where $\hat{b}$ is the sequence obtained from $b$ when removing the marking $m$ from all nodes $(p,q,m)$.
	By assumption, all branches of $r$ visit only finitely many rejecting states. Thus, in particular the branch~$b$ with $\hat{b} = \hat{\rho}$ visits only finitely many rejecting states.
	Hence, by construction of $\ddACA{\varphi}$ and since, by assumption, we only consider the product automaton part of $\ddACA{\varphi}$, $b$ thus visits only finitely many states of the form $(p,q,\bot)$.
	Moreover, by definition of $\vartheta$, we only have $m = \bot$ for a state $(p,q,m)$ at position $j \in \mathbb{N}$ of $b$ if either (i) $p \not \in F$ and $q \in F$ holds, or if~(ii)~$p'\not\in F$ and $q' \in F$ holds for $(p',q',m')$ at some position $j' < j$ of $b$ and $p'' \not \in F$ holds for $(p'',q'',m'')$ at all positions $j''$ with $j' \leq j'' \leq j$. Therefore, since $b$ visits only finitely many states of the form $(p,q,\bot)$, there are only finitely many points in time, where $b$ visits a rejecting dominant state while it does not visit a rejecting alternative state, and for all these points in time there are only finitely many following steps until a rejecting alternative state is visited.
	Thus, in particular, $\projection{\hat{b}_j}{2} \in F \rightarrow \exists j' \geq j.~ \projection{\hat{b}_{j'}}{1}\in F$ holds for all points in time $j \in \mathbb{N}$.
	Since $\hat{b} = \hat{\rho}$, it thus follows that $\rho \in \win$ holds; proving the claim.
\end{proof}

Next, we prove completeness of $\ddACA{\varphi}$, \ie, that if $s$ \iddominates $t$ on input $\gamma$, then~$\ddACA{\varphi}$ accepts a sequence $\computation{s}{\gamma} \cup \primeSequence{\computation{t}{\gamma} \cap \outputs{i}}$.

\begin{lemma}\label{lem:completeness_sequence}
	Let $\varphi$ be an LTL formula.
	Let $\mathcal{A}_{\varphi}$ and $\mathcal{A}_{\neg\varphi}$ be ACAs with $\Lang{\varphi} = \Lang{\mathcal{A}_\varphi}$ and $\Lang{\neg\varphi} = \Lang{\mathcal{A}_{\neg\varphi}}$. Let $\ddACA{\varphi}$ be the ACA constructed from~$\mathcal{A}_{\varphi}$ and $\mathcal{A}_{\neg\varphi}$ according to~\Cref{def:aca_dd}. Let $s$ and $t$ be strategies for process $p_i$ and let $\gamma \in (2^\inputs{i})^\omega\!$. Let $\sigma \in (2^{\variables{i} \cup \primedOutputs{i}})^\omega\!$ with $\sigma := \computation{s}{\gamma} \cup \primeSequence{\computation{t}{\gamma}\cap\outputs{i}}$. If $\ddominatesSequence{s}{t}{\gamma}{\mathcal{A}_{\varphi}}$ holds, then $\ddACA{\varphi}$ accepts $\sigma$.
\end{lemma}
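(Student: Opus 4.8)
The plan is to prove the completeness direction as the mirror image of \Cref{lem:soundness_sequence}, again using the disjunctive characterization of \Cref{lem:disjunctive_ddominance}. Assume $\ddominatesSequence{s}{t}{\gamma}{\mathcal{A}_{\varphi}}$. By \Cref{lem:disjunctive_ddominance}, either (i)~$\computation{t}{\gamma}\not\models\varphi$, or (ii)~there is a winning strategy $\tau$ for \dom in the \iddominance game $(\mathcal{A}_\varphi,\computation{t}{\gamma},\computation{s}{\gamma})$ such that every initial play consistent with $\tau$ visits only finitely many rejecting dominant states. In both cases I would exhibit an accepting run tree of $\ddACA{\varphi}$ on $\sigma$. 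The key preliminary observation is that, by the definition of $\sigma$ and of $\iota,\iota'$ in \Cref{def:aca_dd}, at every position $j$ we have $\sigma_j\cap\variables{i}=(\computation{s}{\gamma})_j$ and $\unprime{\sigma_j\cap\primedVariables{i}}=(\computation{t}{\gamma})_j$, so when $\ddACAComponent{\delta}$ reads $\sigma$ it sees exactly the inputs $\computation{s}{\gamma}$ and $\computation{t}{\gamma}$ used by the \iddominance game.

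For case~(i): since $\Lang{\neg\varphi}=\Lang{\mathcal{A}_{\neg\varphi}}$ and $\computation{t}{\gamma}=\unprime{\sigma\cap\primedVariables{i}}$, the automaton $\mathcal{A}_{\neg\varphi}$ accepts $\computation{t}{\gamma}$, so it has a run tree $\tilde r$ on this word all of whose branches visit only finitely many rejecting states. I would turn $\tilde r$ into a run tree $r$ of $\ddACA{\varphi}$ on $\sigma$ by relabeling its root with $\ddACAComponent{Q}_0=(q_0,q_0,\top)$ and leaving everything else unchanged; this is legal because $\ddACAComponent{\delta}((q_0,q_0,\top),\cdot)$ has a disjunct reproducing the transition of $\mathcal{A}_{\neg\varphi}$ from $q^c_0$, and for states in $Q^c$ the transition $\ddACAComponent{\delta}$ coincides with $\delta^c$ and depends only on $\unprime{\sigma\cap\primedVariables{i}}$. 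Since $\ddACAComponent{F}\cap Q^c=F^c$ and only the non-rejecting root is relabeled, every branch of $r$ still visits only finitely many rejecting states, so $\ddACA{\varphi}$ accepts $\sigma$.

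For case~(ii): I would build a run tree $r$ of $\ddACA{\varphi}$ on $\sigma$ directly from $\tau$, choosing at the root the second disjunct of $\ddACAComponent{\delta}((q_0,q_0,\top),\cdot)$ and identifying each node of $r$ at depth $j$ with an initial play prefix of the game ending at the $S_\exists$-position of round $j$. For a node $x$ of depth $j$ labeled $(p,q,m)$ and associated with a consistent play prefix $\pi$, the children of $x$ are generated by letting \alt play each $c\in\delta(p,(\computation{t}{\gamma})_j)$ in turn, letting $\tau$ answer with some $c'\in\delta(q,(\computation{s}{\gamma})_j)$, letting \alt play each $q'\in c'$, and letting $\tau$ answer with some $p'\in c$; the resulting child is labeled $\successor{p'}{q'}{m}$. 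The set of children labels then exactly witnesses a disjunct of $\bigwedge_{c}\bigvee_{c'}\bigwedge_{q'}\bigvee_{p'}\successor{p'}{q'}{m}$, so $r$ is a legal run tree on $\sigma$, and every infinite branch $b$ of $r$ is induced by an initial play $\rho$ consistent with $\tau$ whose sequences of dominant and alternative states are the $q$- and $p$-components of $b$. Finally, by unwinding $\vartheta$, the mark along $b$ equals $\bot$ at a position precisely when there is an as-yet-unmatched visit to a rejecting dominant state at or before that position; since $\rho$ is consistent with the winning strategy $\tau$, every rejecting dominant state along $\rho$ is eventually matched by a rejecting alternative state, and by the extra property of $\tau$ from clause~(ii) there are only finitely many rejecting dominant states along $\rho$ at all, so $b$ carries the mark $\bot$, and hence visits $\ddACAComponent{F}\cap(Q\times Q\times\{\bot\})$, only finitely often. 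Thus $r$ is accepting and $\ddACA{\varphi}$ accepts $\sigma$.

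I expect the main obstacle to be the bookkeeping in case~(ii): setting up the identification of run-tree nodes with game-play prefixes carefully enough that the strategy $\tau$ -- which a priori depends on the whole play history, including the intermediate positions in $D_\exists$, $S_\forall$, and $D_\forall$ -- is always evaluated on a legal prefix while the children are being produced, and then verifying precisely that the $\vartheta$-induced marking tracks \enquote{pending unmatched rejecting dominant state}. This is essentially the dual of the construction in \Cref{lem:soundness_sequence}; the only genuinely new ingredient over the soundness argument is invoking clause~(ii) of \Cref{lem:disjunctive_ddominance} to rule out infinitely many $\bot$-marks along a branch.
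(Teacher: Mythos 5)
Your proposal is correct and follows essentially the same route as the paper's proof: the same case split via \Cref{lem:disjunctive_ddominance}, the same root-relabeling of an accepting run tree of $\mathcal{A}_{\neg\varphi}$ in case~(i), and in case~(ii) the same construction of a run tree of $\ddACA{\varphi}$ from the winning strategy $\tau$ (the paper phrases it as collecting $f(\hat{\rho}_{j+1})$ over all consistent plays compatible with a node, which amounts to your move-enumeration) together with the same $\vartheta$-marking argument combining the winning condition with the finitely-many-rejecting-dominant-states clause. No gaps.
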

\begin{proof}
	Suppose that $\ddominatesSequence{s}{t}{\gamma}{\mathcal{A}_{\varphi}}$ holds for~$\mathcal{A}_{\varphi}$. Then, by \Cref{lem:disjunctive_ddominance}, either (i) $\computation{t}{\gamma}\not\models\varphi$ holds or (ii) $\ddominatesSequence{s}{t}{\gamma}{\mathcal{A}_{\varphi}}$ and for the winning strategy $\tau$ of \dom in the \iddominance game we have for every initial play $\rho$ that is consistent with $\tau$ that there is a point in time $k$ such that $\projq{\rho_{k'}} \not\in F$ for all $k' \geq k$. We distinguish two cases.
	
	First, suppose that (i) holds. Then, we have $\computation{t}{\gamma}\in\Lang{\neg\varphi}$ and thus, since $\Lang{\neg\varphi} = \Lang{\mathcal{A}_{\neg\varphi}}$ by assumption, $\computation{t}{\gamma}\in\Lang{\mathcal{A}_{\neg\varphi}}$ holds. Thus, there exists a run tree $r$ of $\mathcal{A}_{\neg\varphi}$ induced by $\computation{t}{\gamma}$ whose branches all visit only finitely many rejecting states.
	By construction of $\ddACA{\varphi}$, there exists a corresponding run tree $\tilde{r}$ of $\ddACA{\varphi}$ that only differs from $r$ in the labeling of the root: in $r$, the root is labeled with $q_0$, while it is labeled with $(q_0,q_0,\top)$ in~$\tilde{r}$. Hence, by definition of the rejecting states $\ddACAComponent{F}$ of $\ddACA{\varphi}$, $\tilde{r}$ visits only finitely many rejecting states as well. Moreover, since $\mathcal{A}_{\neg\varphi}$ is an ACA with alphabet $2^\variables{i}$ and by construction of the transition function $\ddACAComponent{\delta}$ of $\ddACA{\varphi}$, the successors in $\tilde{r}$ only depend on the valuations of the variables in $\primedVariables{i}$. Thus, all sequences $\sigma' \in (2^{\variables{i} \cup \primedOutputs{i}})^\omega$ with $\sigma' \cap \primedVariables{i} = \primeSequence{\computation{t}{\gamma}}$ induce the run tree $\tilde{r}$. Therefore, in particular $\sigma$ does. Thus, $\sigma$ induces a run tree on $\ddACA{\varphi}$, namely $\tilde{r}$, that visits only finitely many rejecting states and therefore $\ddACA{\varphi}$ accepts $\sigma$.
	
	Second, suppose that (ii) holds. Then, there exists a winning strategy $\tau$ for \dom in the \iddominance game $(\mathcal{A}_\varphi,\computation{t}{\gamma},\computation{s}{\gamma})$. Let $\mathcal{P}_\tau$ be the set of initial plays that are consistent with $\tau$.
	Let $f: (Q \times Q)^\omega \rightarrow (Q \times Q \times \{\top,\bot\})^\omega$ be a function that, given an infinite sequence~$\chi$ of tuples $(p,q)$, returns an extended sequence $\chi'$ that is incrementally defined as follows: for the initial point in time, let $\chi'_0 := (p,q,\top)$ if $\chi_0 = (p,q)$. For a point in time $j > 0$, let $\chi'_j := \successor{p'}{q'}{m}$ if $\chi'_{j-1} = (p,q,m)$ and $\chi_j = (p',q')$. Here,~$\vartheta$ denotes the corresponding function used in \Cref{def:aca_dd}.
	We construct a $Q$-labeled tree~$(\mathcal{T},\ell)$ from $\tau$ as follows by defining the labeling of the root as well as of the successors of all nodes.
	The labeling of the root $\varepsilon$ of $\mathcal{T}$ is defined by $\ell(\varepsilon)=(q_0,q_0,\top)$. For a node $x \in \mathcal{T}$ with depth $j = |x|$, we define the labeling of the successor nodes of $x$ such that $\{ \ell(x') \mid x' \in \children{x} \} = \{ f(\hat{\rho}_{j+1}) \mid \rho \in \mathcal{P}_\tau \land \compatible{\rho}{x} \}$, where $\compatible{\rho}{x}$ denotes that~$\rho$ and~$x$ are compatible in the sense that for all $0 \leq j' \leq j$, we have $f(\hat{\rho}_{j'}) = \ell(a_{j'})$, where~$a$ is the unique finite sequence of nodes that, starting from $\varepsilon$, reaches $x$. 
	Next, we show that $(\mathcal{T},\ell)$ is a run tree of $\ddACA{\varphi}$ induced by $\sigma$.
	For the sake of readability, let $\sigma^s := \computation{s}{\gamma}$ and $\sigma^t := \computation{t}{\gamma}$.
	Let $\mathcal{S}_a := \{ \hat{\rho}^a_{j+1} \mid \rho \in \mathcal{P}_\tau \land \compatible{\rho}{x} \}$ for some node $x\in\mathcal{T}$. Then, by construction of the \iddominance game, we know that~$\mathcal{S}_a$ satisfies $\bigwedge_{c \in \delta(p,\primeSequence{\sigma^t_j)}} \bigvee_{p' \in c} p'$, where $p := \projection{\ell(x)}{1}$. Thus, intuitively, the alternative states of an initial play that is consistent with~$\tau$ evolve according to a run of $\mathcal{A}'$ induced by $\computation{t}{\gamma}$, where $\mathcal{A}'$ is the ACA obtained from~$\aca{\varphi}$ by dualizing the transition function, \ie, by swapping conjunctions and disjunctions.
	Let $\mathcal{S}_d := \{ \hat{\rho}^d_{j+1} \mid \rho \in \mathcal{P}_\tau \land \compatible{\rho}{x} \}$ for some node $x\in\mathcal{T}$. 
	Then, by construction of the \iddominance game, we know that~$\mathcal{S}_d$ satisfies $\bigvee_{c' \in \delta(q,\sigma^s_j)} \bigwedge_{q' \in c'} q'$, where $q := \projection{\ell(x)}{2}$. Hence, intuitively, the dominant states of of an initial play that is consistent with~$\tau$ evolve according to a run of~$\aca{\varphi}$ induced by $\computation{s}{\gamma}$.
	From these observations, it follows that $\{ \projection{\rho_{5j}}{1} \mid \rho \in \mathcal{P}_\tau \land \compatible{\rho}{x} \}$ satisfies $\bigwedge_{c \in \delta(p,\primeSequence{\sigma^t_j)}} \bigvee_{c' \in \delta(q,\sigma^s_j)} \bigwedge_{q' \in c'} \bigvee_{p' \in c} (p',q')$, where $p := \projection{\ell(x)}{1}$ and $q := \projection{\ell(x)}{2}$.
	Therefore, by construction of $\ddACA{\varphi}$, we know that, for every node $x \in \mathcal{T}$, $\{ f(\hat{\rho}_{j+1}) \mid \rho \in \mathcal{P}_\tau \land \compatible{\rho}{x} \}$ satisfies $\ddACAComponent{\delta}((p,q,m),\sigma_j)$, where $(p,q,m) := \ell(x)$. Hence, $(\mathcal{T},\ell)$ is indeed a run tree of $\ddACA{\varphi}$ induced by $\sigma$.
	Since $\tau$ is a winning strategy for \dom, we have $\rho \in \win$ for all initial plays $\rho$ that are consistent with~$\tau$. Hence, for all such plays $\rho \in \mathcal{P}_\tau$ and all points in time $j \in \mathbb{N}$, it holds that if $\projq{\rho_j} \in F$ holds, then we have $\projp{\rho_{j'}} \in F$ for some point in time $j' \geq j$ as well.
	Moreover, by assumption, for every initial play $\rho$ that is consistent with $\tau$, we have that there is a point in time $k$ such that $\projq{\rho_{k'}} \not\in F$ for all $k' \geq k$.
	Thus, there are only finitely many points in time at which~$\rho$ visits a rejecting dominant state and for all these points in time it holds that a rejecting alternative state occurs in $\rho$ at the very same point in time or later.
	Therefore, by construction of $(\mathcal{T},\ell)$ and $f$, we obtain that there are only finitely many nodes $x \in \mathcal{T}$ with $\ell(x) = (p,q,\bot)$ for some $p,q\in Q$.
	Therefore, since only states of the form $(p,q,m)$ are reached and since for these states the ones with mark $\bot$ are the only rejecting ones of $\ddACA{\varphi}$, all branches of $(\mathcal{T},\ell)$ visit only finitely many rejecting states. Hence, since $(\mathcal{T},\ell)$ is a run tree of $\ddACA{\varphi}$ induced by~$\sigma$, $\ddACA{\varphi}$ accepts~$\sigma$.
\end{proof}

From \Cref{lem:soundness_sequence,lem:completeness_sequence}, the claim of \Cref{lem:soundness_completeness_sequence} then follows immediately.

\subsection{Miyano-Hayashi for co-Büchi Automata (Proof of \texorpdfstring{\Cref{thm:miyano-hayashi_universal}}{Lemma 11})}

The Miyano-Hayashi algorithm~\cite{MiyanoH84} is a well-known technique for translating alternating Büchi automata into equivalent nondeterministic Büchi automata. It introduces an exponential blowup: the resulting NBA is of exponential size in the number of states of the initial ABA.
In this paper, we consider co-Büchi automata instead of Büchi automata. Hence, we need a translation from alternating co-Büchi automata to universal co-Büchi automata.
Recall from \Cref{app:preliminaries} that the Büchi and co-Büchi acceptance conditions as well as nondeterministic and universal branching are dual. Thus, we can reuse the Miyano-Hayashi algorithm for co-Büchi automata by making use of the duality; proving \Cref{thm:miyano-hayashi_universal}:

\begin{proof}
	Let $\mathcal{A} = (Q, Q_0, \delta, F)$.
	Let $\mathcal{A}_d = (Q^d, Q^d_0,\delta^d,F^d)$ be the dual automaton of $\mathcal{A}$, \ie, the ABA with $Q^d = Q$, $Q^d_0 = Q_0$, $F^d = F$, and $\delta^d(u,\boldsymbol{i}) = \bigwedge_{c \in \delta(u,\boldsymbol{i})} \bigvee_{u' \in c} u'$. Then $\mathcal{L}(\mathcal{A}_d) = \overline{\mathcal{L}(\mathcal{A})}$ holds due to the duality of nondeterministic and universal branching as well as of the Büchi and co-Büchi acceptance condition. 
	As shown by Miyano and Hayashi~\cite{MiyanoH84}, there exists a nondeterministic Büchi automaton $\mathcal{B}'$ with $\mathcal{O}(2^{|Q^d|})$ states and with $\mathcal{L}(\mathcal{B}') = \mathcal{L}(\mathcal{A}^d)$. Let~$\mathcal{B}$ be the dual automaton of $\mathcal{B}'$, \ie, the universal co-Büchi automaton that is a copy of~$\mathcal{B}'$, but where the nondeterministic transitions are interpreted as universal ones and where the accepting states are interpreted as rejecting states. Then, $\mathcal{B}$ has $\mathcal{O}(2^{|Q^d|})$ states and we have $\mathcal{L}(\mathcal{B}) = \overline{\mathcal{L}(\mathcal{B}')}$. Since $\mathcal{L}(\mathcal{B}') = \mathcal{L}(\mathcal{A}^d) = \overline{\mathcal{L}(\mathcal{A})}$ holds, we obtain $\mathcal{L}(\mathcal{B}) = \mathcal{L}(\mathcal{A})$. Thus,~$\mathcal{B}$ is the desired universal co-Büchi automaton.
\end{proof}

\subsection{Correctness of Universal Projection (Proof of \texorpdfstring{\Cref{lem:universal_projection}}{Lemma 13})}

Universal projection allows for abstracting away variables from a universal automaton. The resulting automaton then accepts a sequence $\sigma$ if, and only if, the initial automaton accepts all sequences $\sigma'$ that extend $\sigma$, \ie, with $\sigma' \cap X = \sigma$, where $X$ is the set we projected to. Note that this only holds for universal automata since it relies on the universal branching of the automaton. We prove that the above result (and thus \Cref{lem:completeness_sequence}) holds:

\begin{proof}
	By construction of $\pi_X$, $\sigma$ induces a path $\pi$ in $\pi_X(\mathcal{A})$ if, and only if, there is a sequence $\sigma' \in (2^\Sigma)^\omega$ with $\sigma' \cap X = \sigma$ that induces the same path $\pi$ in $\mathcal{A}$.
	
	First, suppose that $\pi_X(\mathcal{A})$ accepts $\sigma$. Then, by definition of universal co-Büchi automata, all paths $\pi$ of $\pi_X(\mathcal{A})$ induced by $\sigma$ visit rejecting states only finitely often. Suppose that there is a $\sigma' \in (2^\Sigma)^\omega$ with $\sigma' \cap X = \sigma$ that is rejected by $\mathcal{A}$. Hence, there is a path $\pi'$ of $\mathcal{A}$ induced by $\sigma'$ that visits infinitely many rejecting states. But then, as shown above, $\pi'$ is a path of $\pi_X(\mathcal{A})$ induced by $\sigma$ as well; contradicting the assumption that all paths of $\pi_X(\mathcal{A})$ induced by $\sigma$ visit rejecting states only finitely often.
	
	Second, suppose that $\mathcal{A}$ accepts all $\sigma' \in (2^\Sigma)^\omega$ with $\sigma' \cap X = \sigma$. Then, by definition of universal co-Büchi automata, all paths~$\pi$ of $\mathcal{A}$ induced by some $\sigma' \in (2^\Sigma)^\omega$ with $\sigma' \cap X = \sigma$ visit rejecting states only finitely often. Suppose that $\pi_X(\mathcal{A})$ rejects~$\sigma$. Then, there is a path~$\pi$ in $\pi_X(\mathcal{A})$ induced by $\sigma$ that visits rejecting states infinitely often. But then, as shown above, there is some $\sigma' \in (2^\Sigma)^\omega$ with $\sigma' \cap X = \sigma$ such that $\pi$ is also a path of $\mathcal{A}$ induced by $\sigma'$; contradicting the assumption that all paths $\pi$ of $\mathcal{A}$ induced by some $\sigma' \in (2^\Sigma)^\omega$ with $\sigma' \cap X = \sigma$ visit rejecting states only finitely often.
\end{proof}

\subsection{Soundness and Completeness of \texorpdfstring{$\ddUCA{\varphi}$}{A{dd}} (Proof of \texorpdfstring{\Cref{thm:soundness_completeness_ddUCA}}{Theorem 15})}

From the construction of the universal co-Büchi automaton $\ddUCA{\varphi}$ as well as of \Cref{thm:soundness_completeness_ddUCA}, \ie, the result that $\ddACA{\varphi}$ determines whether or not a strategy \iddominates another strategy on an input sequence, it now follows that $\ddUCA{\varphi}$ is sound and complete in the sense that it recognizes \iddominant strategies:

\begin{proof}
	Let $\ddACA{\varphi}$ and $\ddUCAnonProj{\varphi}$ be the intermediate automata from which $\ddUCA{\varphi}$ is constructed. Note that $\ddUCA{\varphi}$ is a UCA over alphabet $2^\variables{i}$, while $\ddUCAnonProj{\varphi}$ and $\ddACA{\varphi}$ are alternating and universal co-Büchi automata, respectively, over alphabet $2^{\variables{i} \cup \primedOutputs{i}}$.
	Since $\ddUCA{\varphi}$ is the universal projection of $\ddUCAnonProj{\varphi}$ to $\variables{i}$, we obtain with \Cref{lem:universal_projection} that $\ddUCA{\varphi}$ accepts a sequence $\sigma \in (2^\variables{i})^\omega$ if, and only if, $\ddUCAnonProj{\varphi}$ accepts all sequences $\sigma' \in (2^{\variables{i} \cup \primedOutputs{i}})^\omega$ with $\sigma' \cap \variables{i} = \sigma$. By \Cref{thm:miyano-hayashi_universal}, we have $\mathcal{L}(\ddUCAnonProj{\varphi}) = \mathcal{L}(\ddACA{\varphi})$. Thus, $\ddUCA{\varphi}$ accepts a sequence $\sigma \in (2^\variables{i})^\omega$ if, and only if, $\ddACA{\varphi}$ accepts all sequences $\sigma' \in (2^{\variables{i} \cup \primedOutputs{i}})^\omega$ with $\sigma' \cap \variables{i} = \sigma$.
	Let $\mathcal{S} := \{ \computation{s}{\gamma} \mid \gamma \in (2^\inputs{i})^\omega \}$.
	Then $\ddUCA{\varphi}$ accepts all sequences $\sigma \in \mathcal{S}$ if, and only if $\ddACA{\varphi}$ accepts all sequences $\sigma' \cap \variables{i} \in \mathcal{S}$.
	
	Let $\mathcal{E}$ be the set of all such extended sequences $\sigma'$, \ie, $\mathcal{E} := \{ \sigma' \mid \sigma' \cap \variables{i} \in \mathcal{S} \}$. Intuitively, we have $\sigma' \in \mathcal{E}$ if, and only if, $\sigma'$ extends some computation of $s$ with primed output variables. 
	Let $\mathcal{M} = \{ \computation{s}{\gamma} \cup \primeSequence{\computation{t}{\gamma}\cap\outputs{i}} \mid t \text{ is a strategy for } p_i \text{ and } \gamma \in (2^\inputs{i})^\omega \}$. By construction, all sequences $\sigma'' \in \mathcal{M}$ are extensions of computations of $s$ with primed output variables, \ie, $\mathcal{M} \subseteq \mathcal{E}$. Moreover, all sequences of valuations of output variables of $p_i$ can be produced by some strategy for $p_i$. Hence, by priming them, we obtain all sequences of valuations of primed output variables of $p_i$. Thus, $\mathcal{E} \subseteq \mathcal{M}$ holds as well.
	Therefore, it follows that $\ddUCA{\varphi}$ accepts all sequences $\sigma \in \mathcal{S}$ if, and only if $\ddACA{\varphi}$ accepts all sequences $\sigma' \in \mathcal{M}$.
	By construction of $\mathcal{S}$ and $\mathcal{M}$, it thus follows immediately with \Cref{lem:soundness_completeness_sequence} that $\ddUCA{\varphi}$ accepts $\computation{s}{\gamma}$ for all $\gamma \in (2^\inputs{i})^\omega$, if, and only if $s$ is \iddominant for~$\mathcal{A}_\varphi$.
\end{proof}

\subsection{Size of \texorpdfstring{$\ddUCA{\varphi}$}{A{dd}} (Proof of \texorpdfstring{\Cref{thm:automaton_size}}{Lemma 16})}

Putting together the previous results, we can now prove \Cref{thm:automaton_size}, showing that the universal co-Büchi automaton $\ddUCA{\varphi}$ is of size exponential in the squared length of the LTL formula $\varphi$:

\begin{proof}
	Given an LTL formula $\varphi$, there are, by \cite{MullerSS88,Vardi94}, ACAs $\mathcal{A}_\varphi = (Q,Q_0,\delta,F)$ and $\mathcal{A}_{\neg\varphi} = (Q^c,Q^c_0,\delta^c,F^c)$, both of size $\mathcal{O}(|\varphi|)$, with $\Lang{\mathcal{A}_\varphi} = \Lang{\varphi}$ and $\Lang{\mathcal{A}_{\neg\varphi}} = \Lang{\neg\varphi}$.
	By \Cref{thm:soundness_completeness_ddUCA}, the automaton $\ddUCA{\varphi}$ constructed according to \Cref{def:UCA_construction_delayed_dominance} satisfies the property that~$\ddUCA{\varphi}$ accepts $\computation{s}{\gamma}$ for some strategy $s$ and for all $\gamma \in (2^\inputs{i})^\omega$ if, and only if, $s$ is \iddominant for $\mathcal{A}_\varphi$.
	Let $\ddACA{\varphi}$ and $\ddUCAnonProj{\varphi}$ be the intermediate automata from which $\ddUCA{\varphi}$ is constructed. By construction, $\ddACA{\varphi}$ is of size $\mathcal{O}(|Q|^2+|Q^c|)$. By construction of $\ddUCAnonProj{\varphi}$ and by \Cref{thm:miyano-hayashi_universal}, $\ddUCAnonProj{\varphi}$ is of size $\mathcal{O}(2^m)$, where $m$ is the number of states of $\ddACA{\varphi}$. Hence,~$\ddUCAnonProj{\varphi}$ has $\mathcal{O}(2^{|Q|^2+|Q^c|})$ states. Since the universal projection does not affect the size of an automaton as it only alters the transition relation,~$\ddUCA{\varphi}$ has $\mathcal{O}(2^{|Q|^2+|Q^c|})$ states as well. Since both $\mathcal{A}_\varphi$ and $\mathcal{A}_{\neg\varphi}$ have $\mathcal{O}(|\varphi|)$ states the claim follows.
\end{proof}

\end{document}